\newtheorem{Theorem}{Theorem}
\newtheorem{Lemma}{Lemma}
\newtheorem{Definition}{Definition}
\newenvironment{Proof}[1]{\medskip\par\noindent
{\bf Proof:\,}\,#1}{{\mbox{\,$\blacksquare$}\par}}
\newenvironment{proof}[1]{\medskip\par\noindent
{\bf Proof:\,}\,#1}{{\mbox{\,$\blacksquare$}\par}}
\newcommand{\Rb}{\mathbb{R}}
\newcommand{\Eb}{\mathbb{E}}
\newcommand{\Pb}{\mathbb{P}}
\newcommand{\lv}{\mathbf{1}}
\newcommand{\tcb}{\textcolor{black}}
\begin{document}
\title{Age of Information Minimization for an Energy Harvesting Source with Updating Erasures:\\ Without and With Feedback}
\author{Songtao~Feng and Jing~Yang%
\thanks{Songtao~Feng and Jing~Yang are with the School of Electrical Engineering and Computer Science, The Pennsylvania State University, University Park, PA, 16802, USA. Email: \{sxf302, yangjing\}@psu.edu. This work is presented in part in the 2018 IEEE International Conference on Computer and Communications (INFOCOM) - Workshop on Age of Information~\cite{Feng:2018:INFOCOM} and the 2018 IEEE International Symposium on Information Theory~\cite{Feng:2018:ISIT}.}}%

%


\maketitle

\begin{abstract}
Consider an energy harvesting (EH) sensor that continuously monitors a system and sends time-stamped status update to a destination.
The sensor harvests energy from nature and uses it to power its updating operations.
The destination keeps track of the system status through the successfully received updates.
With the recently introduced information freshness metric ``Age of Information" (AoI), our objective is to design optimal online status updating policy to minimize the long-term average AoI at the destination, subject to the energy causality constraint at the sensor.
Due to the noisy channel between the sensor and the destination, each transmitted update may be erased with a fixed probability, and the AoI at the destination will be reset to zero only when an update is successfully received.
We first consider status updating without feedback available to the sensor and show that the Best-effort Uniform updating (BU) policy is optimal \tcb{in a broadly defined class of online policies}. We then investigate status updating with perfect feedback to the sensor and prove \tcb{similar} optimality of the Best-effort Uniform updating with Retransmission (BUR) policy. In order to prove the optimality of the proposed policies, for each case, we first identify a lower bound on the long-term average AoI among a broad class of online policies, and then construct a sequence of virtual policies to approach the lower bound asymptotically. Since those virtual policies are sub-optimal to the original policy, the original policy is thus optimal.
\end{abstract}

\begin{IEEEkeywords}
Age of information, energy harvesting, online status updating, noisy channel, feedback
\end{IEEEkeywords}


\section{Introduction}
Recently, a metric called ``Age of Information" (AoI) has been introduced to measure the freshness of the information in a status monitoring system from the destination's perspective ~\cite{infocom/KaulYG12}.
Specifically, at time $t$, the AoI in the system is defined as $t-U(t)$, where $U(t)$ is the time stamp of the latest received update at the destination.
AoI has shown to be fundamentally different from standard network performance metrics, such as throughput or delay.
It has attracted growing attention from different research communities, due to its simple form and potential in unifying sampling and transmission for timely information delivery.

Generally speaking, there are two main approaches in the study of AoI. The first approach is to characterize the AoI under given status updating policies~\cite{infocom/KaulYG12,ciss/KaulYG12,isit/YatesK12,YatesK16,Pappas:2015:ICC,isit/NajmN16,isit/KamKNWE16,isit/ChenH16,isit/KamKE13, isit/KamKE14,tit/KamKNE16,isit/CostaCE14,tit/CostaCE16,isit/HuangM15,Yates:2018:AoISHS}. The second approach is to design certain status updating policies to actively optimize AoI~\cite{isit/BedewySS16,infocom/SunUYKS16,Sun:ISIT:2017}.
Modeling the status monitoring system as a queueing system, where updates are generated at the source according to a random process, the time average AoI has been analyzed in different queueing management settings. For systems with a single server, the corresponding AoI has been studied in single-source single-server queues~\cite{infocom/KaulYG12}, the $M/M/1$ Last-Come First-Served (LCFS) queue with preemption in service~\cite{ciss/KaulYG12}, the $M/M/1$ First-Come First-Served (FCFS) queue with multiple sources~\cite{isit/YatesK12,YatesK16}, the $M/M/1$ queue with multiple souces which only keeps the latest status packet of each source in the queue~\cite{Pappas:2015:ICC}, the LCFS queue with gamma-distributed service time and Poisson update packet arrivals~\cite{isit/NajmN16}. Moreover, in $M/M/1$ queue systems, packet deadlines are found to improve AoI performance in~\cite{isit/KamKNWE16}, and AoI in the presence of packet delivery errors is evaluated in~\cite{isit/ChenH16}. The AoI in systems with multiple servers has been evaluated in~\cite{isit/KamKE13, isit/KamKE14,tit/KamKNE16}. A related metric, Peak Age of Information (PAoI), is introduced and studied in~\cite{isit/CostaCE14,tit/CostaCE16,isit/HuangM15,infocom/SunUYKS16}. 
For more complicated multi-hop networks, reference~\cite{Yates:2018:AoISHS} introduces a novel stochastic hybrid system (SHS) approach to derive explicit age distributions.
The optimality properties of a preemptive Last Generated First Served (LGFS) service discipline in a multi-hop network are identified in~\cite{isit/BedewySS16}. AoI optimization with the knowledge of the server state has been studied in~\cite{infocom/SunUYKS16}. The relationship between AoI and the MMSE in remote estimation of a Wiener process is investigated in~\cite{Sun:ISIT:2017}.

\tcb{Age of information has also demonstrated its fundamental role in the state estimation and real-time control of stochastic systems. In \cite{Baras:AoI}, the fundamental trade-off between the control performance and information staleness measured in AoI has been characterized, while
in \cite{ZhangJia:AoI:ISIT2018}, it studies how the random AoI would alter the rate-cost tradeoff for a Gaussian linear control system, where the cost is measured in terms of the system-state mean-square stability. In \cite{Jaya:AoI}, AoI has been adopted to solve the state estimation and control problem in a single-loop stochastic linear time-invariant (LTI) networked system. It shows that that minimizing the estimation error is equivalent to minimizing a non-negative and non-decreasing function of AoI. In \cite{Aritra:AoI:LTI}, AoI has been utilized for the distributed estimation of the state of a discrete-time LTI process over a time-varying directed communication graph.}


Due to the magnified tension between keeping information fresh and the stringent energy constraint, AoI in energy harvesting (EH) wireless networks has attracted increasing interests recently~\cite{isit/Yates15,ita/BacinogluCU15,Yang:2017:AoI,Arafa:AoI:finitebatt,BacinogluU17,Uysal:2018:EH,Ahmed:2017:Asilomar,Ahmed:2017:Globecom,Baknina:2018:CISS,Baknina:2018:ISIT,Brown:2018:INFOCOM}. An EH sensor harvests energy from the environment and uses it to power its sensing and communication operations. Due to the stochastic energy arrival process, all of the operations are subject to the so-called energy causality constraint. Under such constraints, various policies have been proposed to optimize different communication and sensing performance metrics~\cite{Yang_tcom,ozel_finite_tcom,Huang0C13,HoZ12,kaya_tcom, Ozel0U13,Yang_TWC_broadcast,uysal_paper,Lai:quickest,Yang:jsac:2016,Yang:jsac:2015}. Such sample path-wise constraint also makes the design and analysis of the status updating policy in EH systems extremely challenging.
Under the assumption that the battery size is sufficiently large, \cite{isit/Yates15} shows that updates should be scheduled only when the server is free to avoid queueing delay, and a {\it lazy} update policy that introduces inter-update delays outperforms the greedy policy.
Reference~\cite{ita/BacinogluCU15} investigates AoI-optimal offline and online status updating policies, where the online problem is modeled as a Markov decision process and solved through dynamic programming.
In~\cite{Yang:2017:AoI,Arafa:AoI:finitebatt,BacinogluU17,Uysal:2018:EH}, optimal online status updating policies under different assumptions on the battery size have been identified. Specifically, for the infinite battery case, \cite{Yang:2017:AoI} shows that the 
{\it best-effort uniform} updating policy, which updates at a constant rate when the source has sufficient energy, is optimal when the channel between source and destination is perfect.
When the battery size is finite, the optimal policies are shown to have certain threshold structures~\cite{Arafa:AoI:finitebatt,BacinogluU17,Uysal:2018:EH}. Offline policies to minimize AoI in EH channels have been studied in~\cite{Ahmed:2017:Asilomar,Ahmed:2017:Globecom}.
Reference~\cite{Baknina:2018:CISS} analyzes the AoI performance of two channel coding schemes when channel erasures are present.
Using the SHS tools proposed in~\cite{Yates:2018:AoISHS}, reference~\cite{Brown:2018:INFOCOM} and reference \cite{Brown:2018:ISIT} study the average AoI for a finite battery EH system, with and without preemption of packets in service allowed, respectively.
An interesting setting is considered in~\cite{Baknina:2018:ISIT}, where extra information is carried by the timing of the update packets. A tradeoff between the average AoI and the average message rate is studied for several achievable schemes.

In this paper, we take the imperfect updating channel into consideration and investigate the optimal updating policies of an EH system where updating erasures can happen.
Assuming each update can be erased with a constant probability, the AoI at the destination will be reset only when an update is successfully received. Our objective is to design \emph{online} status updating policies to minimize the average AoI at the destination.
Depending on whether there exists updating feedback to the source, we consider two possible scenarios:

1) \emph{No updating feedback}.
In this case, the source has no knowledge of whether an update is successful. It can only use the update-to-date energy arrival profile and updating decisions as well as the statistical information, such as the energy arrival rate and the erasure probability of the channel, to decide the upcoming updating time points.
We show that the {\it Best-effort Uniform updating} (BU) policy, which was shown to be optimal under the perfect channel setting in~\cite{Yang:2017:AoI}, is still optimal \tcb{among a broad class of online policies}.

2) \emph{Perfect updating feedback}.
In this case, the source receives an instantaneous feedback when an update is transmitted.
Therefore, it can decide when to update next based on the feedback information, along with the information it uses for the no feedback case.
For this case, we propose a {\it Best-effort Uniform updating with Retransmission} (BUR) policy and prove its optimality \tcb{among a broad class of online policies}. 

Although the proposed policies are quite intuitive, their optimality is quite challenging to establish, compared with~\cite{Yang:2017:AoI}.
This is because both battery outage and updating erasure will affect the AoI under the proposed policies.
While the impact of either of those two events can be analyzed relatively easily when isolated, it becomes extremely challenging when both of them are involved.
Besides, when there exists perfect updating feedback to the source, updating erasures under the BUR will lead to subsequent retransmissions and energy consumption, thus affecting the battery outage probability in the future.
Such complicated interplay between those two events makes the problem even more complicated.
In order to overcome such difficulties, we propose a novel {\it virtual policy} based approach.
Specifically, for both BU and BUR updating policies, we construct a sequence of virtual policies, which are strictly suboptimal to their original counterparts, and eventually converge to them.
\tcb{Leveraging the virtual policies,} 
we are able to decouple the effects of battery outage and updating errors in the performance analysis.
We show that the long-term average AoI under virtual policies converges to the corresponding lower bound, which implies the optimality of the original policy.

The remainder of the paper is structured as follows:
In Sec.~\ref{sec:model}, we describe the system model and problem formulation.
In Sec.~\ref{sec:no-feedback} and Sec.~\ref{sec:feedback}, we consider the no updating feedback case and the perfect updating feedback case, respectively. 
In Sec.~\ref{sec:simulation}, we evaluate the proposed policies through extensive simulation results.
We conclude in Sec.~\ref{sec:conclusion}. For the sake of readability, we defer some proofs to the appendix.

\section{System Model and Problem Formulation} \label{sec:model}
Consider a scenario where an energy harvesting sensor continuously monitors a system and sends time-stamped status updates to a destination.
The destination keeps track of system status through received updates.
We use the metric Age of Information (AoI) to measure the ``freshness" of the status information available at the destination.

We assume that the energy unit is normalized so that each status update requires one unit of energy.
This energy unit represents the energy cost of both measuring and transmitting a status update.
Assume energy arrives at the sensor according to a Poisson process with parameter $\lambda$. Hence, energy arrivals occur at discrete time instants $t_1,t_2,\ldots$.
We assume $\lambda=1$ for ease of exposition, since we can always scale the time axis proportionally to make $\lambda=1$ per unit time.
The sensor is equipped with a battery to store harvested energy.
In this paper, we focus on the case where the battery size is infinite.

We assume \tcb{that} the time used to collect and transmit a status update is {\it negligible} compared with the time scale of the long-term average AoI in the system. Therefore, a status update can be generated and transmitted at any time as long as the energy level is greater than or equal to one. We assume \tcb{that} the channel between the source and the destination is \tcb{time-invariant and noisy, thus with probability $1-p$, $0<p\leq 1$, each update will be erased during transmission}, independent \tcb{of} any other factors in the system. As shown in Fig.~\ref{fig:AoI}, the AoI at the destination will be reset to zero only when an update is successfully received. We consider two possible cases. For the no updating feedback case, the source has no information of the updating result. For the perfect updating feedback case, we assume there is a {\it perfect feedback} channel between the destination and the source, so that the source is notified about an updating failure once it happens. 

A status update policy is denoted as $\pi:=\{l_n\}_{n=1}^\infty$, where $l_n$ is the $n$th update time at the {\it source}. However, due to \tcb{random update erasures}, only a subset of the update packets will be successfully delivered. Thus, the actual status update times at the {\it destination} are different from $\{l_n\}_{n=1}^\infty$ in general. Therefore, we use $S_n$ to denote the $n$th actual update time at the {\it destination}.   
We assume $S_0=l_0=0$, i.e., an update is successfully delivered right before time zero, and the system starts with an initial energy of $E_0$, $E_0\geq 1$. 

\if{0}
\begin{figure}[t]
	\centering
	\includegraphics[scale=0.6]{AoI2.eps}
	\caption{AoI as a function of $T$. Circles $\circ$ represent successful status updates, and crosses $\times$ represent failed status updates.}\label{fig:AoI}
\end{figure}
\fi

\begin{figure}[t]
	\centering
	\includegraphics[scale=0.28]{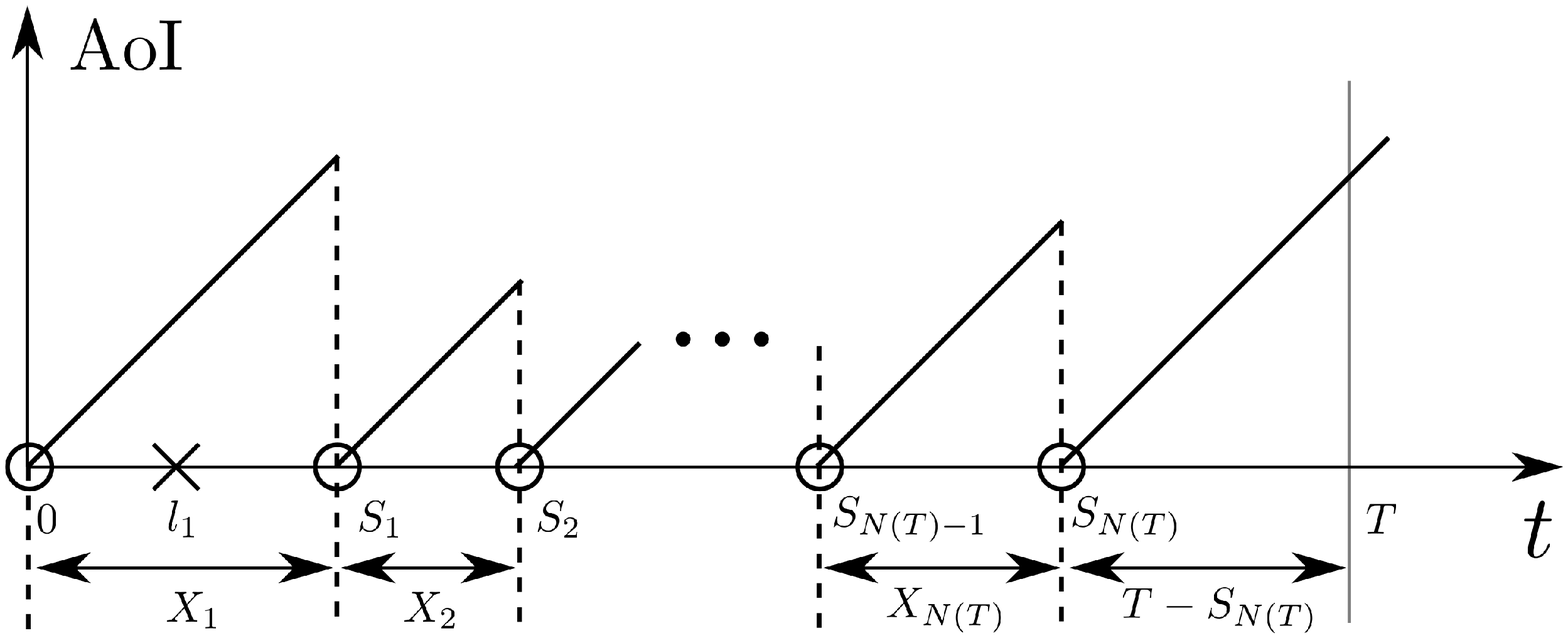}
	\caption{{AoI as a function of $t$. Circles $\circ$ represent successful status updates, and crosses $\times$ represent failed status updates.}}\label{fig:AoI}
	\vspace{-0.1in}
\end{figure}

Define $A_n$ as the total amount of energy harvested in $[l_{n-1},l_n)$, and $E(l^-_n)$ as the energy level of the sensor right before the update time $l_n$. Then, under any feasible status update policy, the energy queue evolves as follows
\begin{align}
E(l_1^-)&=E_0+A_1, \label{eqn:energy_initial}\\
E(l^-_{n})&=E(l^-_{n-1})-1+A_n, \quad n=2,3,\ldots.\label{eqn:energy_queue}
\end{align}
Based on the Poisson arrival process assumption, $A_n$ is an independent Poisson random variable with parameter $l_{n}-l_{n-1}$. 

In order to ensure every update time is feasible, we must have the energy causality constraint satisfied all the time, i.e.,
\begin{align}
E(l_n^-)&\geq 1, \quad n=1,2,\ldots, \label{eqn:energy_constraint}
\end{align}
which indicates that the source will generate and transmit an update only when it has sufficient energy. 

We use $M(T)$ and $N(T)$ to denote the number of status updates sent by the source and the number of status updates successfully received at the destination over $(0,T]$, respectively. Define $R(T)$ as the cumulative AoI at the destination over $[0,T]$. Denote the delay between two successful updates as $X_n:= S_n-S_{n-1}$, for $n=1,2,\ldots$. 
Then,
\begin{align}\label{defn:R(T)}
R(T)&=\frac{\sum_{i=1}^{N(T)}X_i^2+ (T-S_{N(T)})^2}{2},
\end{align}
\tcb{which corresponds to the area below the AoI curve over $[0,T]$, as shown in Fig.~\ref{fig:AoI}.} The time-average AoI over the duration $[0,T]$ can then be expressed as $R(T)/T$.

Our objective is to determine the sequence of update times $l_1,l_2,\ldots$ at the {\it source}, so that the time average AoI at the {\it destination} is minimized, subject to the energy causality constraint. We focus on a set of {\it online} policies. Specifically, for the no updating feedback case, the information available for determining the updating point $l_n$ includes the updating history $\{l_i\}_{i=0}^{n-1}$, the energy arrival profile over $[0,l_n)$, as well as the energy harvesting statistics (i.e., $\lambda$ in this scenario) and the probability of updating success $p$. Denote the set of such online policies as $\Pi_1$. For the perfect updating feedback case, the source also utilizes up-to-date updating feedback to make its decisions. We denote the set of such online policies as $\Pi_2$.
Then, the optimization problem can be formulated as
\begin{eqnarray}\label{eqn:opt}
\underset{\pi\in\Pi}{\min} & &\limsup_{T\rightarrow +\infty} \Eb\left[\frac{R(T)}{T}\right]\\
\mbox{s.t. } & & (\ref{eqn:energy_initial})-(\ref{eqn:energy_constraint}),\nonumber
\end{eqnarray}
where $\Pi$ equals $\Pi_1$ or $\Pi_2$, depending on the setting, and the expectation in the objective function is taken over all possible energy harvesting sample paths and \tcb{update erasure patterns}.

\section{Status Updating Without Feedback} \label{sec:no-feedback}
In this section, we will study the optimal status updating policy for the case where there is no update feedback available to the sensor. We show that the expected long-term average AoI has a lower bound for a broad class of online policies, which can be achieved by the BU updating policy.

\subsection{A Lower Bound}\label{subsec:no-feedback:lower}
Note that when battery size is infinite, no energy flow will happen, and the long-term average status updating rate is subject to the energy harvesting rate constraint. Specifically, we have the following lemma.

\begin{Lemma}[Lemma 1 in \cite{Yang:jsac:2016}]\label{1lemma:rate}
Under any policy $\pi\in\Pi_1$, it must have $\limsup_{T\rightarrow \infty} M(T)/T \leq 1$ almost surely.
\end{Lemma}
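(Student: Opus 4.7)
The plan is to prove the bound by a direct energy-accounting argument, comparing the cumulative number of updates with the cumulative harvested energy. The key observation is that each successful update action at the source consumes exactly one unit of energy, so the energy causality constraint~(\ref{eqn:energy_constraint}) implies that the total energy consumed over $(0,T]$, namely $M(T)$, cannot exceed the total energy available, which is $E_0$ plus the harvested energy by time $T$.

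First I would introduce $H(T)$ as the number of Poisson energy arrivals in $(0,T]$, so that $H(T)$ is a Poisson process of rate $\lambda=1$. Iterating the energy-queue recursion~(\ref{eqn:energy_queue}) from~(\ref{eqn:energy_initial}), one sees that the battery level just before the $n$-th update is $E_0 + \sum_{i=1}^{n} A_i - (n-1)$, and the causality requirement $E(l_n^-)\geq 1$ forces $n \leq E_0 + \sum_{i=1}^{n} A_i$. Taking $n=M(T)$ and noting that $\sum_{i=1}^{M(T)} A_i \leq H(T)$ (all energy arrivals used for updating in $(0,T]$ are a subset of all energy arrivals in $(0,T]$), this yields the sample-path inequality
\begin{align}
M(T) \;\leq\; E_0 + H(T).
\end{align}

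Next I would divide by $T$ and take $\limsup$ on both sides. Since $E_0$ is finite and deterministic, $E_0/T \to 0$. By the strong law of large numbers for the Poisson process, $H(T)/T \to \lambda = 1$ almost surely as $T \to \infty$. Combining these gives
\begin{align}
\limsup_{T\to\infty} \frac{M(T)}{T} \;\leq\; \limsup_{T\to\infty} \frac{E_0 + H(T)}{T} \;=\; 1 \quad \text{a.s.},
\end{align}
which is the desired conclusion. The argument is uniform over all $\pi \in \Pi_1$ because it relies only on the energy causality constraint~(\ref{eqn:energy_constraint}), and does not use any further structure of the policy.

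There is no real obstacle here; the only subtlety worth stating carefully is that the bound is a pathwise statement that holds on the full-measure set where $H(T)/T \to 1$, and that the causality constraint is what converts the stochastic energy-arrival bound into a bound on the number of updates. Since the statement is essentially a restatement of Lemma~1 in \cite{Yang:jsac:2016}, I would keep the exposition brief and cite that reference for the detailed verification.
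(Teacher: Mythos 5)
Your proof is correct and follows essentially the same energy-accounting argument as the cited source (the paper itself does not reproduce a proof, deferring to Lemma~1 of \cite{Yang:jsac:2016}): iterate the battery recursion (\ref{eqn:energy_initial})--(\ref{eqn:energy_queue}) with the causality constraint (\ref{eqn:energy_constraint}) to get $M(T)\leq E_0+H(T)$ pathwise, then apply the strong law for the Poisson process. No gaps; the endpoint bookkeeping ($\sum_{i=1}^{M(T)}A_i\leq H(T)$ since $l_{M(T)}\leq T$) is handled adequately.
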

We point out that Lemma~\ref{1lemma:rate} is also valid for all $\pi\in\Pi_2$, which will be discussed in Sec.~\ref{sec:feedback}.

Besides, we also have the following intuitive yet important observation.

\begin{Lemma}\label{1lemma:finiteAoI}
For any $\pi\in\Pi_1$ that achieves a {\it finite} expected long-term average AoI, it must have $\lim_{T\rightarrow \infty} M(T)=\infty$ almost surely.
\end{Lemma}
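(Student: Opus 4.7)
The plan is to argue by contradiction. Suppose the conclusion fails. Since $M(T)$ is nondecreasing in $T$, its pathwise limit $M(\infty):=\lim_{T\to\infty}M(T)$ exists in $\{0,1,2,\ldots\}\cup\{\infty\}$, and the failure of the conclusion means the event $\mathcal{E}:=\{M(\infty)<\infty\}$ has strictly positive probability. On $\mathcal{E}$, the source transmits only finitely many updates, so at most $M(\infty)$ of them are ever received at the destination; in particular $N(\infty):=\lim_{T\to\infty}N(T)\le M(\infty)<\infty$ on $\mathcal{E}$, and the last successful reception time $S_{N(\infty)}$ is finite (using the convention $S_{0}=0$ when no update ever succeeds).

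The next step is a pathwise lower bound on $R(T)/T$. For every $T>S_{N(\infty)}$ on $\mathcal{E}$ we have $N(T)=N(\infty)$, so the definition (\ref{defn:R(T)}) immediately gives
$$\frac{R(T)}{T}\ \ge\ \frac{(T-S_{N(\infty)})^2}{2T}\ \longrightarrow\ \infty,$$
and hence $\liminf_{T\to\infty} R(T)/T=+\infty$ on all of $\mathcal{E}$. Because $R(T)/T\ge 0$, Fatou's lemma then yields
$$\liminf_{T\to\infty}\Eb\!\left[\frac{R(T)}{T}\right]\ \ge\ \Eb\!\left[\liminf_{T\to\infty}\frac{R(T)}{T}\right]\ =\ +\infty,$$
since the integrand is $+\infty$ on a set of positive probability. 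In particular $\limsup_{T\to\infty}\Eb[R(T)/T]=+\infty$, contradicting the assumed finiteness of the expected long-term average AoI.

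The only real subtlety is bridging the pathwise argument on $\mathcal{E}$ to a statement about the objective in (\ref{eqn:opt}), which is a $\limsup$ of expectations rather than an expectation of a $\limsup$ or a pathwise $\limsup$. The non-negativity of $R(T)/T$ makes Fatou's lemma directly applicable and handles this cleanly, so I do not expect any further difficulty; no structural input from the Poisson energy arrivals, the battery dynamics, or the erasure process is required for this lemma.
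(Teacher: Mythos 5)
Your proof is correct, but it follows a different route from the paper's. The paper argues via the erasure randomness: assuming $\Pb[\lim_T M(T)<M_0]\geq\epsilon$, it lower-bounds the expected AoI by restricting to the event that \emph{all} of the (fewer than $M_0$) transmitted updates are erased, which under a no-feedback policy has probability at least $\bigl(1-\sum_{i=1}^{M_0}p_i\bigr)\epsilon=(1-p)^{M_0}\epsilon>0$ because the erasure pattern is independent of the updating decisions in $\Pi_1$; on that event the AoI is never reset, so $R(T)\geq T^2/2$ and $\Eb[R(T)/T]\to\infty$. You instead argue purely pathwise: on $\{M(\infty)<\infty\}$ the last successful reception time $S_{N(\infty)}$ is finite regardless of which updates succeed, hence $R(T)/T\geq (T-S_{N(\infty)})^2/(2T)\to\infty$ on that event, and Fatou's lemma (applied along an arbitrary sequence $T_n\to\infty$, which suffices since the pathwise limit exists) transfers this to the expectation. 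Your version buys something: it needs neither $p<1$ nor the independence of erasures from the updating decisions (so it applies verbatim to policies with feedback, i.e.\ to $\Pi_2$), and it avoids the probability bookkeeping with $p_n$; the paper's version is a direct probability estimate that avoids invoking Fatou and makes the role of erasures explicit. One could even bypass Fatou in your argument by conditioning on $\{S_{N(\infty)}\leq c\}\cap\mathcal{E}$ for a fixed $c$ with positive probability and bounding $\Eb[R(T)/T]$ from below directly, but as written your use of Fatou is sound.
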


\begin{proof}
We prove it by contradiction. Assume $$\Pb\left[\lim_{T\rightarrow \infty} M(T)=\infty\right]<1,$$
i.e., there exists $\epsilon>0$ and $M_0>0$, such that 
$$\Pb\left[\lim_{T\rightarrow \infty} M(T)<M_0\right]\geq \epsilon.$$
Define
\begin{align} \label{defn:p_n}
p_n:=(1-p)^{n-1}p,
\end{align}
i.e., the probability that $l_n$ is the first successful update time after $l_0$.
Then,
\begin{align}
&\limsup_{T\rightarrow \infty} \Eb\left[\frac{R(T)}{T}\right] \nonumber\\
&\geq \lim_{T\rightarrow \infty} \frac{T^2}{2T}\cdot\Pb[\mbox{all $M(T)$ updates fail}, M(T)<M_0]  \\
&\geq \lim_{T\rightarrow \infty} \frac{T}{2}\left(1-\sum_{i=1}^{M_0}p_i\right) \epsilon=\infty    ,
\end{align}
which implies that the expected long-term average AoI cannot be finite. 
\end{proof}

In order to obtain a valid lower bound, in the following, we only need to focus on the policies that achieve finite expected long-term average AoI. To facilitate the following analysis, we introduce a broad class of online policies defined as follows.

\begin{Definition}[Bounded Updating Policy]
If under a policy $\pi\in\Pi_1$, the $n$th updating point at the source (i.e., $l_n$) satisfies $\Eb[l_n]<\infty$ for any fixed $n\in\{1,2,\ldots\}$, $\pi$ is called a {\it bounded updating} policy.
\end{Definition}

Denote the set of bounded updating policies as $\Pi_3$. Then, $\Pi_3\subset\Pi_1$. Intuitively, any practical status updating policy should be in $\Pi_3$, as it is undesirable to have any $n$th updating point (and the inter-update delay between any consecutive updating points before $l_n$) to become unbounded {\it in expectation}. We have the following lower bound for bounded updating policies.

\begin{Theorem}[Lower Bound for Channel without Feedback]\label{1thm:lower}
For any policy $\pi\in\Pi_3$, the expected long-term average AoI is lower bounded by $\frac{2-p}{2p}$.
\end{Theorem}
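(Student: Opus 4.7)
The plan is to combine the two independent sources of randomness---the i.i.d.\ Bernoulli$(p)$ erasure pattern $\{Z_n\}$ and the Poisson energy-arrival pattern---to lower bound $\Eb[R(T)]/T$. From the expression for $R(T)$, dropping the non-negative tail $(T-S_{N(T)})^2/2$ gives $R(T)\ge\tfrac12\sum_{i=1}^{N(T)}X_i^2$, so it suffices to lower bound $\Eb[\sum_{i=1}^{N(T)}X_i^2]/T$.

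For $\pi\in\Pi_3\subset\Pi_1$, the source does not observe updating outcomes, so the attempt sequence $\{l_n\}$ is a function of the energy arrival process alone and is independent of $\{Z_n\}$. Consequently, conditional on $\{l_n\}$, a pair $(l_n,l_m)$ with $0\le n<m$ and $l_m\le T$ constitutes a pair of consecutive successful updates with probability $p^2(1-p)^{m-n-1}$ when $n\ge 1$ (and $p(1-p)^{m-1}$ when $n=0$, since $l_0=0$ acts as a reference success). Summing over such pairs (Tonelli applies since all terms are non-negative),
\begin{equation*}
\Eb\!\left[\sum_{i=1}^{N(T)}X_i^2\right]=\sum_{0\le n<m}p_{n,m}\,\Eb\!\left[(l_m-l_n)^2\,\mathbf{1}\{l_m\le T\}\right],
\end{equation*}
with $p_{n,m}$ as above.

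To lower bound each term, I first replace the indicator by~$1$ (incurring an error that vanishes as $T\to\infty$ since $\Eb[l_m]<\infty$ in $\Pi_3$) and then apply Jensen, $\Eb[(l_m-l_n)^2]\ge(\Eb[l_m-l_n])^2$. For $\Eb[l_m-l_n]$, the energy-causality constraint $E(l_k^-)\ge 1$ is equivalent to $A(l_k^-)\ge k-E_0$. Because $A(t)-t$ is a zero-mean martingale with bounded jumps and $\Eb[l_k]<\infty$ under $\Pi_3$, optional sampling gives $\Eb[l_k]=\Eb[A(l_k^-)]\ge k-E_0$. A parallel argument conditioned on $\mathcal F_{l_n}$ refines this to $\Eb[l_{n+d}-l_n]\ge d-\Eb[E_n]$, where $E_n:=E(l_n^+)$ is the residual energy just after attempt~$n$. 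Plugging this in and invoking the geometric identity $\sum_{d\ge 1}p^2(1-p)^{d-1}d^2=(2-p)/p$, a term-by-term summation yields $\Eb[\sum X_i^2]\ge (2-p)T/p - o(T)$; dividing by $2T$ and taking $\liminf_{T\to\infty}$ delivers the claimed bound $(2-p)/(2p)$.

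The delicate step is the one involving $\Eb[E_n]$: pathwise, $l_{n+d}-l_n$ is \emph{not} bounded below by $d$, since the policy may carry an arbitrarily large buffer at attempt~$n$. The argument must therefore be averaged, showing that the $\Eb[E_n]$ corrections, weighted by the geometric distribution over $d$, sum to $o(T)$---intuitively, any policy that accumulates buffer must have under-utilized the energy rate earlier, so the slack cancels out. The boundedness hypothesis $\pi\in\Pi_3$ is essential here: it justifies the martingale optional-sampling identity, the Fubini interchange over $(n,m)$, the vanishing of the $\mathbf{1}\{l_m\le T\}$ error, and the absorption of lower-order terms into $o(T)$. Without it, pathological policies with unbounded $\Eb[l_n]$ can in principle violate the asserted bound, as already hinted at by the $M(T)<M_0$ scenario in the proof of Lemma~\ref{1lemma:finiteAoI}.
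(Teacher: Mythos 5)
Your setup (decomposing $\Eb[\sum_{i\le N(T)}X_i^2]$ over pairs of attempt epochs weighted by the geometric erasure probabilities, which is legitimate precisely because in the no-feedback setting $\{l_n\}$ is independent of the erasure pattern) matches the paper's starting point, but the second half has a genuine gap. First, replacing $\mathbf{1}\{l_m\le T\}$ by $1$ moves each term in the wrong direction for your purposes only if you then control the discarded mass, and you cannot: without the indicator the double sum over all pairs $0\le n<m$ diverges (under BU-type behavior each of infinitely many $n$ contributes $\Theta(1)$), so the ``error vanishes since $\Eb[l_m]<\infty$'' claim does not hold. The truncation is exactly what limits the effective outer index to $n\lesssim M(T)$, and handling it is where the work lies (the paper keeps the truncated variables $l_n^T$ throughout and disposes of them via bounded convergence together with $M(T)\to\infty$).

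Second, and more fundamentally, the aggregation does not deliver the stated conclusion. Your per-gap bound $\Eb[l_{n+d}-l_n]\ge d-\Eb[E_n]$ (from optional sampling of $A(t)-t$ and energy causality), after weighting by $p^2(1-p)^{d-1}$ and summing over the admissible starting indices, produces roughly $\Eb[M(T)]\cdot\frac{2-p}{p}$, because the number of effective outer terms is $M(T)$, not $T$. Energy causality gives $M(T)\lesssim T$, not $M(T)\gtrsim T$, so the asserted $\Eb[\sum X_i^2]\ge \frac{(2-p)T}{p}-o(T)$ does not follow; e.g., a best-effort policy that attempts only every two time units has $M(T)\approx T/2$, and your chain certifies only $\frac{2-p}{4p}$ for it, even though its true average AoI is about $\frac{2-p}{p}$. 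The slack for such slow policies comes from the gaps exceeding $d$ time units, which the bound $d-\Eb[E_n]$ cannot capture; the deferred claim that the $\Eb[E_n]$ corrections ``cancel out'' is also left unproven. The paper closes exactly this hole differently: it applies Jensen across the attempt index $n$ within each fixed lag $j$, uses the telescoping identity $\sum_{n=1}^{M(T)}(l_{n+j}^T-l_n)=jT-\sum_{n=1}^{j}l_n^T$, and then the normalization by $M(T)$ together with Lemma~\ref{1lemma:rate} ($\limsup M(T)/T\le 1$) supplies the factor $\Eb[T/M(T)]\ge 1$ that automatically compensates slow policies, yielding $\frac{p}{2}\sum_j p_j j^2=\frac{2-p}{2p}$ uniformly over $\Pi_3$. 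To salvage your route you would need an argument that couples the attempt rate and the gap lengths in the same way, rather than bounding each gap by energy causality and counting at most $M(T)$ of them.
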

The proof of Theorem \ref{1thm:lower} is provided in Appendix \ref{app:1thm:lower}.

\subsection{Optimal Online Status Updating}
In this section, we propose online status updating policies to achieve the lower bound derived in Section~\ref{subsec:no-feedback:lower}. We will start with the BU updating policy introduced in \cite{Yang:2017:AoI}. Although we assume a noisy channel in this work, when there is no feedback available to the source, intuitively, it is still desirable for the source to update in a uniform fashion, so that the successfully received updates at the destination would be most uniformly distributed in time. 

\begin{Definition}[BU Updating]
The sensor is scheduled to update the status at $s_n=n$, $n=1,2,\ldots$. The sensor performs the task at $s_n$ if $E(s^-_n)\geq 1$; Otherwise, the sensor keeps silent until the next scheduled status updating time point.
\end{Definition}
Here we use $s_n$ to denote the $n$th scheduled updating time point. It is in general different from the $n$th actual updating time $l_n$, since some scheduled updates may be infeasible due to battery outage.

BU updating ensures that the energy causality constraint is always satisfied. We expect that BU updating achieves the lower bound in Theorem~\ref{1thm:lower}. However, analyzing its AoI performance is very challenging. Although we are able to identify a renewal structure in the system status evolution under the BU updating policy (i.e., a renewal interval can begin right after the sensor successfully delivers an update and the battery state becomes $E_0-1$), the analysis of the expected average AoI over one renewal interval is still very complicated, mainly due to two reasons: 

First, different from the perfect channel case~\cite{Yang:2017:AoI}, the actual update time at the destination $S_n$ may deviate from the scheduled update time $s_n$ due to two possible events: battery outage and update erasure. Although the average AoI can be characterized in systems where only one of such events can happen, it is hard to analyze the AoI when the effects of both events are involved. 

Second, the expected length of such a renewal interval is unbounded. This is because the battery evolution under BU updating can be modeled as a Martingale process, and as we will show in the proof of Lemma~\ref{1lemma:N(T1)}, the expected time when it becomes empty for the first time (i.e., hitting time of zero) is infinity. Since with a non-zero probability the renewal interval contains such an interval, the expected length of each renewal interval is thus unbounded, and the corresponding expected average AoI becomes intractable.

To overcome such challenges, we will construct a sequence of {\it virtual} policies, and show that the expected time average AoI under those virtual policies approaches the lower bound in Theorem~\ref{1thm:lower}. Since such virtual policies are sub-optimal to the BU updating policy, the optimality of BU updating can thus be proved. In order to simplify the definition and analysis of the virtual policy, we assume $E_0=1$. The proof can be slightly modified to show that the optimality of the proposed policy is valid for any $E_0\geq 0$.

\begin{Definition}[BU-ER$_{T_0}$] \label{def:bu-er}
	The sensor performs BU updating until the battery level after sending an updating becomes zero for the first time, or until time $T_0^+$, in which case the sensor depletes its battery; After that, when the battery level becomes higher than or equal to one after a successful update for the first time, the sensor reduces the battery level to one, and then repeats the process.
\end{Definition}

\begin{Lemma} \label{1lemma:suboptimal}
	For any $T_0>0$, BU-ER$_{T_0}$ updating policy is sub-optimal to the BU updating policy. 
\end{Lemma}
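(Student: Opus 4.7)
The plan is to prove the lemma by a sample-path coupling argument between the BU policy and the BU-ER$_{T_0}$ policy. Specifically, I would fix an arbitrary realization of the energy arrival process $\{t_k\}$ and of the erasure pattern $\{\epsilon_n\}$ (where $\epsilon_n=1$ indicates the $n$th attempted transmission is successful), and run both policies on the same realization. Since both policies share the same deterministic schedule $s_n=n$ and differ only in whether energy is artificially discarded, this coupling is well-defined.

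The first key step is to show, by induction on $n$, a battery-dominance property: if $E^{BU}(n^-)$ and $E^{BU\text{-}ER}(n^-)$ denote the energy levels right before the scheduled update time $n$ under the two policies, then $E^{BU}(n^-)\geq E^{BU\text{-}ER}(n^-)$ for every $n$. The base case is immediate since both start with $E_0=1$. For the inductive step, note that the only additional operations in BU-ER$_{T_0}$ relative to BU are: (i) depleting the battery at time $T_0^+$ if it has not hit zero before, and (ii) truncating the battery to one after the first successful update following a depletion. Both operations only remove energy, so the dominance is preserved. A direct consequence is that whenever BU-ER$_{T_0}$ has sufficient energy to transmit at time $n$ (i.e.\ $E^{BU\text{-}ER}(n^-)\geq 1$), BU also has sufficient energy; hence the set of actual source transmission times under BU-ER$_{T_0}$ is a subset of that under BU on the coupled sample path.

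Combining the transmission-set inclusion with the shared erasure realization, the set of successful delivery times $\{S_n^{BU\text{-}ER}\}$ is a subset of $\{S_n^{BU}\}$. Since the instantaneous AoI at any time $t$ equals $t$ minus the largest successful delivery time not exceeding $t$, adding delivery points can only weakly decrease the instantaneous AoI. Therefore, pointwise in $t$ and sample-path-wise, the instantaneous AoI under BU is no larger than that under BU-ER$_{T_0}$, which yields
\begin{equation*}
R^{BU}(T)\leq R^{BU\text{-}ER_{T_0}}(T)\quad\text{for every }T>0.
\end{equation*}
Taking expectations over the joint law of the energy arrivals and erasures, dividing by $T$, and passing to $\limsup$ gives the desired inequality on the expected long-term average AoI.

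The only delicate point I anticipate is making sure that the battery-dominance induction handles the two auxiliary events in the definition of BU-ER$_{T_0}$ cleanly, particularly the "reduce-to-one" step after the first post-depletion successful update, which must be shown not to violate dominance in the subsequent cycle. Once that bookkeeping is in place, the rest of the argument is a straightforward monotonicity-of-AoI observation and a routine application of expectation.
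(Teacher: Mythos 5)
Your proposal is correct and takes essentially the same route as the paper's proof: couple the two policies on a common energy-arrival sample path and per-slot erasure pattern, establish that the battery level under BU dominates that under BU-ER$_{T_0}$ (so BU-ER$_{T_0}$ only loses feasible updates), and conclude sample-path-wise AoI dominance before taking expectations. Your version merely spells out the induction and the subset-of-successful-deliveries step that the paper states in compressed form.
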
 
\begin{proof}
	We note that BU-ER$_{T_0}$ updating is identical to BU updating except the energy removal at time $T_0$ and when $E(s_n^+)$ becomes higher than one. Given the same energy harvesting sample path, the battery level under BU is always higher than that under BU-ER$_{T_0}$. Thus, BU-ER$_{T_0}$ incurs more infeasible status updates. With the same \tcb{update erasure pattern}, the instantaneous AoI under BU-ER$_{T_0}$ updating is always greater than or equal to that under BU updating sample path-wisely. Thus, the expected time-average AoI under BU-ER$_{T_0}$ is greater than or equal to that under BU, which proves the lemma. 
\end{proof}

We note that the BU-ER$_{T_0}$ updating policy is a {\it renewal} type policy, i.e., the states of the system evolve according to a renewal process. \tcb{To see this, we note that the updating process under BU-ER$_{T_0}$ works in cycles, where each cycle begins with the initial battery level to be one and the AoI to be zero, followed by i.i.d. battery and AoI evolution processes.} Therefore, to analyze the expected long-term average AoI, it suffices to analyze the expected average AoI over one renewal interval. In the following, we will focus on the first renewal interval, and show that the corresponding expected average AoI converges to the lower bound in Theorem~\ref{1thm:lower} as $T_0$ increases. As illustrated in Fig.~\ref{fig:bu-er}, the renewal interval consists of two stages. The first stage starts at time zero and ends until the battery becomes empty for the first time, or until time $T_0^+$. We denote $T_1$ as the end of the first stage. We note that all scheduled status updating epochs over $(0, T_1]$ are feasible. The second stage starts at $T_1$ and ends when the battery level becomes higher than or equal to one after a successful update for the first time after $T_1$. We denote \tcb{the duration of the second stage as $T_2$. The second stage thus ends at $T_1+T_2$}.

\if{0}
\begin{figure}[t]\centering
\includegraphics[width=3.3in]{BUER-08012018_v2.eps}
\vspace{-0.05in}
\caption{\tcb{An illustration of the BU-ER$_{T_0}$ updating policy and the battery level right after each updating epoch. AoI will be reset to zero at the successul updating epochs.}}\label{fig:bu-er}
\vspace{-0.1in}
\end{figure}
\fi

\begin{figure}[t]\centering
\includegraphics[width=3.3in]{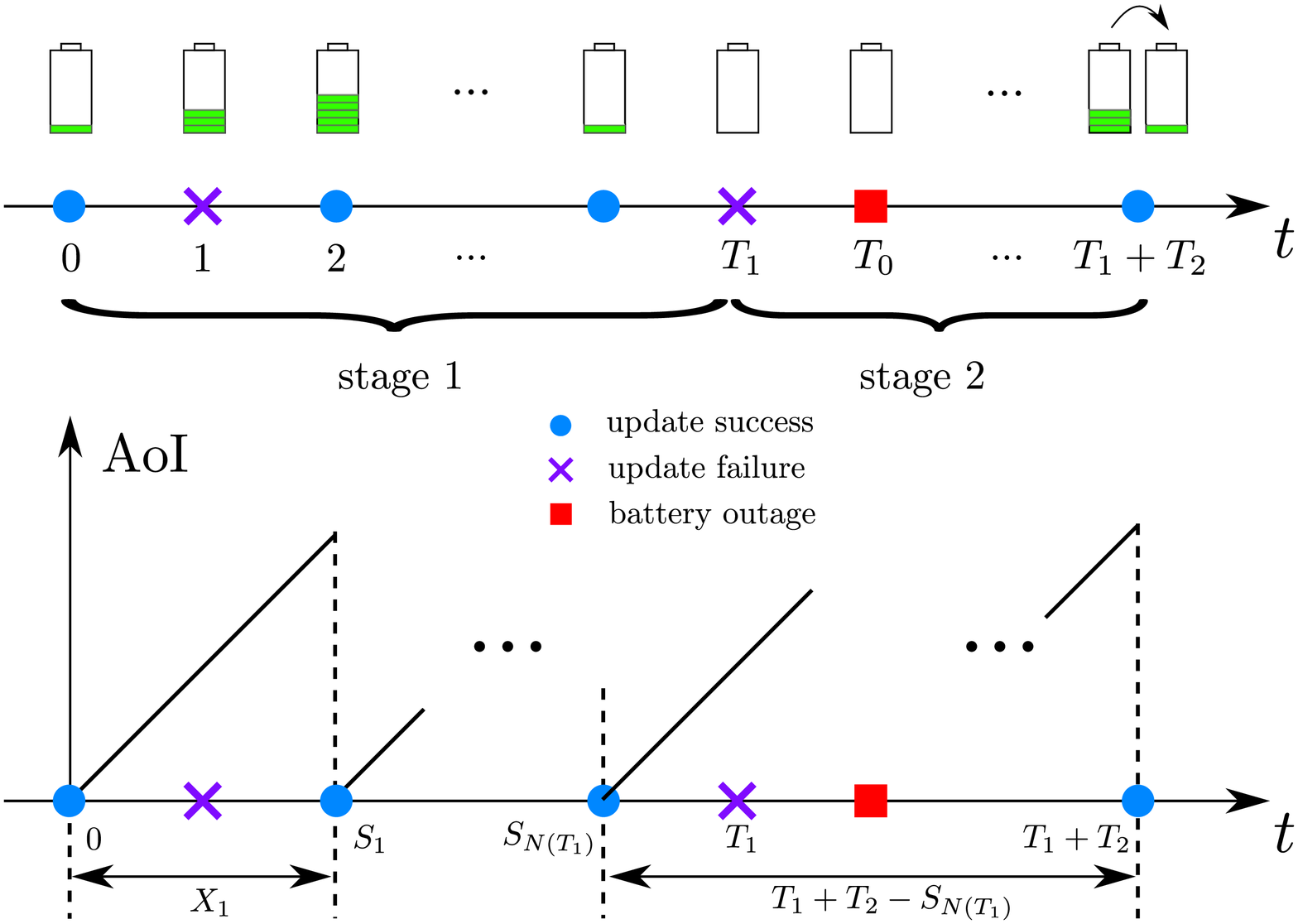}
\caption{\tcb{An illustration of the BU-ER$_{T_0}$ updating policy and the battery level right after each updating epoch. AoI will be reset to zero at the successul updating epochs.}}\label{fig:bu-er}
\vspace{-0.1in}
\end{figure}

\begin{Lemma} \label{1lemma:N(T1)}
	Under BU-ER$_{T_0}$ updating, $\lim_{T_0\rightarrow \infty} \Eb[T_1]=\infty.$
\end{Lemma}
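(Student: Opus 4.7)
The plan is to reduce the statement to a claim about the expected first passage time of a random walk, then use the optional stopping theorem (or equivalently Wald's identity) to show that this expected hitting time is infinite, and finally invoke monotone convergence to conclude.

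First, I would rewrite $T_1$ in a more convenient form. By the construction of BU-ER$_{T_0}$, the first stage terminates either at the first scheduled updating epoch $s_n = n$ at which $E(s_n^+) = 0$, or at $T_0^+$ if no such epoch occurs before $T_0$. Denoting $\tau := \inf\{n \geq 1: E(s_n^+) = 0\}$, we therefore have $T_1 = \min(\tau, T_0)$. Since $\min(\tau, T_0)$ is nondecreasing in $T_0$ and converges pointwise to $\tau$ as $T_0 \to \infty$, the monotone convergence theorem gives $\lim_{T_0 \to \infty} \Eb[T_1] = \Eb[\tau]$. Thus it suffices to prove $\Eb[\tau] = \infty$.

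Next, I would explicitly describe the post-update battery process. Starting from $E_0 = 1$, and using that every scheduled update in the first stage is feasible (since the first stage is defined to end the moment the battery is emptied), we have
\begin{align}
B_n := E(s_n^+) = 1 + \sum_{i=1}^{n} (A_i - 1),
\end{align}
where $A_i$ are i.i.d.\ Poisson with parameter $s_i - s_{i-1} = 1$, hence with mean $1$. Therefore $\{B_n\}$ is an integer-valued, mean-zero random walk, i.e., a martingale with respect to the natural filtration, with $\Eb[|A_i - 1|] < \infty$. Because the downward increments satisfy $A_i - 1 \geq -1$, the walk cannot jump below $0$ in a single step, and $\tau$ is precisely the first time $B_n$ hits $0$.

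Finally, I would argue by contradiction using Wald's identity. Suppose $\Eb[\tau] < \infty$. Then Wald's identity (valid because the increments have finite first moment and $\tau$ is an $L^1$ stopping time) yields
\begin{align}
\Eb[B_\tau] - 1 = \Eb\!\left[\sum_{i=1}^{\tau}(A_i - 1)\right] = \Eb[\tau]\cdot \Eb[A_1 - 1] = 0,
\end{align}
so $\Eb[B_\tau] = 1$. But $B_\tau = 0$ by the definition of $\tau$, giving $\Eb[B_\tau] = 0$, a contradiction. Hence $\Eb[\tau] = \infty$, and combined with the monotone convergence step above, $\lim_{T_0 \to \infty} \Eb[T_1] = \infty$. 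The only subtle point in this plan is verifying that the optional stopping / Wald machinery is legitimately applicable here; this is handled by first-moment integrability of the Poisson increments together with the contradiction hypothesis $\Eb[\tau] < \infty$, so it should cause no real trouble. The rest is just bookkeeping.
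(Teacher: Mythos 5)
Your proof is correct, and it takes a genuinely different, more elementary route than the paper. The paper makes the same reduction to the first zero-hitting time $\kappa$ of the walk $\Omega_n = 1+\sum_{i=1}^n(A_i-1)$, but then works with the exponential martingale $\exp(-\alpha\Omega_n-n\gamma(\alpha))$ with $\gamma(\alpha)=e^{-\alpha}-(1-\alpha)$: it applies optional stopping to get $e^{-\alpha\Omega_0}=\Eb[\exp(-\alpha\Omega_\kappa-\kappa\gamma(\alpha))]$, differentiates this identity in $\alpha$, uses $\Omega_0=1$, $\Omega_\kappa=0$ to conclude $\Eb[\kappa]\geq e^{-\alpha}/\gamma'(\alpha)$, and lets $\alpha\to 0^+$ so the bound diverges. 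You instead apply optional stopping in its simplest form (Wald's identity for the linear martingale $B_n$ itself) and derive a contradiction between $\Eb[B_\tau]=1$ and $B_\tau=0$; you also justify the passage $\lim_{T_0\to\infty}\Eb[T_1]=\Eb[\tau]$ explicitly via monotone convergence, whereas the paper simply identifies $T_1$ with $\kappa$ ``when $T_0\to\infty$.'' Your version is shorter and arguably tighter: it avoids differentiating the optional-stopping identity inside an expectation, a step the paper does not justify. What the paper's exponential-martingale machinery buys is that it carries over almost verbatim to the analogous lemma for the BUR-ER$_{T_0}$ policy in the feedback case, where the increments $A_n-B_n$ are unbounded below and the reflected walk only gives a supermartingale, so the clean Wald/contradiction argument does not apply as directly there. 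One point you handled correctly and should keep explicit: because the downward steps have size exactly one and the stage ends at the first post-update zero, every scheduled update in stage one is feasible, so $E(s_n^+)$ coincides with the unconstrained walk up to $\tau$ and $B_\tau=0$ exactly, which is what makes the contradiction clean.
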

\begin{proof}
	Consider a ``random walk" $\{\Omega_n\}_{n=0}^{\infty}$, which starts with $1$ and increments with $A_n-1$, where $A_n$ is an i.i.d. Poisson random variable with parameter $1$. Denote the first $0$-hitting time for $\{\Omega_n\}_{n=0}^{\infty}$ as $\kappa$. Then $\Omega_0=1$ and $\Omega_{\kappa}=0$. Note that when $T_0\rightarrow\infty$, $\{\Omega_n\}_{n=0}^{\kappa}$ is identical to the battery level evolution process $\{E(s_n^+)\}_{n=0}^{\kappa}$ under the BU-ER$_{T_0}$ updating policy almost surely, and the corresponding $T_1=\kappa$. 
	
	Define a Martingale process associated with $\{\Omega_n\}_{n=0}^{\infty}$ as $\{\exp(-\alpha \Omega_n - n\gamma(\alpha))\}_{n=0}^{\infty}$ with $\alpha>0$ and $\gamma(\alpha)=e^{-\alpha}-(1-\alpha)>0$. According to the proof of Theorem $1$ in \cite{Yang:jsac:2016}, 
	\begin{align} \label{eqn:martingale-1}
	\exp(-\alpha \Omega_0) = \Eb[\exp(-\alpha \Omega_{\kappa} - \kappa\gamma(\alpha))].
	\end{align}
	Taking the derivative of both sides of (\ref{eqn:martingale-1}) with respect to $\alpha$, we have
	\begin{align} \label{eqn:martingale-2}
	\Omega_0 \exp(-\alpha \Omega_0) 
	= \Eb[(\Omega_{\kappa}+\kappa\gamma^{\prime}(\alpha))\exp(-\alpha \Omega_{\kappa} - \kappa\gamma(\alpha) )]. 
	\end{align}
	Since $\Omega_0=1$ and $\Omega_{\kappa}=0$,  (\ref{eqn:martingale-2}) can be reduced to
	\begin{align} \label{eqn:martingale-3}
	\exp(-\alpha)
	=\Eb[\kappa \gamma^{\prime}(\alpha) \exp(-\kappa \gamma(\alpha) )]\leq \Eb[\kappa\gamma^{\prime}(\alpha)],
	\end{align}
	where the inequality follows from the fact that $\kappa\gamma(\alpha) \geq 0$.
	
	Dividing both sides of (\ref{eqn:martingale-3}) by $\gamma^{\prime}(\alpha) $, we have
	\begin{align}
	\Eb[\kappa]\geq \exp(-\alpha)/\gamma^{\prime}(\alpha). \label{eqn:martingale-revision1}
	\end{align}
	Note that 
	\begin{align}
	\lim_{\alpha \rightarrow 0} \gamma^{\prime}(\alpha)&=\lim_{\alpha \rightarrow 0}(-e^{-\alpha}+1)=0^{+}.
	\end{align} 
	\tcb{Combining (\ref{eqn:martingale-revision1}) and the fact that $T_1=\kappa$ when $T_0\rightarrow \infty$, we have}
	\begin{align}
	\lim_{T_0\rightarrow \infty}\Eb[T_1] \geq \lim_{\alpha \rightarrow 0} \exp(-\alpha)/\gamma^{\prime}(\alpha) =\infty.
	\end{align}
\end{proof}

\begin{Lemma}\label{1lemma:bounded}
	Under BU-ER$_{T_0}$ updating, $\Eb[T_2]$, $\Eb[T^2_2]$, $\Eb[T_1-S_{N(T_1)}]$, $\Eb[(T_1-S_{N(T_1)})^2]$ are bounded.
\end{Lemma}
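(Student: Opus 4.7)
The plan is to handle the four quantities in two groups. The two moments involving $T_1-S_{N(T_1)}$ will be bounded by a conditional-distribution argument exploiting the independence of the energy arrival process and the channel erasures, while the two moments involving $T_2$ require a stochastic-domination argument, since a direct analysis of the stage-$2$ battery process (which is a null-recurrent reflected random walk) is prohibitive.

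For $\Eb[T_1-S_{N(T_1)}]$ and $\Eb[(T_1-S_{N(T_1)})^2]$, I would condition on the first-stage energy arrival trajectory. During stage~1 every scheduled update at an integer time is feasible (the pre-update battery is always at least one), so the number $N_1$ of attempts made in stage~1 is a function of the arrival process alone and satisfies $N_1\leq T_1\leq N_1+1$. Since the channel erasures are i.i.d.\ Bernoulli$(1-p)$ and independent of the arrivals, conditional on $(T_1,N_1)$ the index $K$ of the last successful attempt (with $K=0$ if all attempts fail) has $\Pb(K=0\mid N_1)=(1-p)^{N_1}$ and $\Pb(K=k\mid N_1)=p(1-p)^{N_1-k}$ for $k=1,\dots,N_1$. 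Using $T_1-S_{N(T_1)}=T_1-K$ together with a routine calculation, I would show
\[
\Eb\bigl[(T_1-S_{N(T_1)})^r\,\big|\,T_1,N_1\bigr] \leq (N_1+1)^r(1-p)^{N_1} + p\sum_{j=0}^{\infty}(j+1)^r(1-p)^j
\]
for $r\in\{1,2\}$; the first term is uniformly bounded in $N_1\geq 0$ (it vanishes as $N_1\to\infty$) and the second is a finite geometric moment, which yields constants that bound both expectations independently of $T_0$.

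For $\Eb[T_2]$ and $\Eb[T_2^2]$, I would dominate $T_2$ by a simple random sum. Parameterizing time in stage~2 relative to $T_1$, let $A_j'$ be the number of Poisson energy arrivals in the $j$th unit interval of stage~2, and let $\eta_i$ denote the index of the $i$th interval with $A_j'\geq 2$. Then the gaps $\eta_i-\eta_{i-1}$ are i.i.d.\ geometric with success parameter $q=1-2/e$. At each $\eta_i$ the pre-update battery is at least $A_{\eta_i}'\geq 2$, so an update is attempted and, if successful, leaves the post-update battery at $\geq 1$, terminating stage~2. Letting $I$ be the first index at which the attempt at $\eta_i$ succeeds, $I$ is geometric with parameter $p$ and---because erasures are independent of arrivals---independent of the gaps. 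Hence
\[
T_2 \leq \eta_I = \sum_{i=1}^{I}(\eta_i-\eta_{i-1}),
\]
and applying Wald's identity together with the standard second-moment formula for a random sum with an independent summation index shows that both $\Eb[\eta_I]$ and $\Eb[\eta_I^2]$ are finite, since $I$ and the geometric gap distribution each have finite second moments.

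The main obstacle is the $T_2$ bound: the stage-$2$ battery is a zero-drift reflected random walk, so its hitting times of high levels can have infinite expectation (cf.~Lemma~\ref{1lemma:N(T1)}), and a direct moment computation appears intractable. The decisive step is to discard most of the battery dynamics and focus only on the sparse sequence of unit intervals with at least two arrivals, since each such interval already forces the stage-terminating condition to hold whenever the accompanying update succeeds, and these events have a clean i.i.d.\ geometric structure that is independent of the erasure pattern.
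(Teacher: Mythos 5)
Your proof is correct, but it reaches the two groups of bounds by a route that differs from the paper's in both places. For $\Eb[T_2]$ and $\Eb[T_2^2]$, the paper introduces a second genie-aided virtual process in which the battery is depleted after every erased update, argues by sample-path battery domination that its stage-two duration $T_2'$ satisfies $T_2'\geq T_2$, and then observes that $T_2'$ is geometric with parameter $p(1-2e^{-1})$. You instead work directly on the original BU-ER$_{T_0}$ process: you single out the unit slots with at least two energy arrivals and the first such slot whose scheduled attempt also survives the channel, which dominates $T_2$ by a random variable that is (in effect) geometric with the same parameter $p(1-2e^{-1})$, since the slot event and the channel success are independent with probabilities $1-2e^{-1}$ and $p$. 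This buys you a more direct argument that avoids defining yet another virtual policy and avoids the sample-path comparison step, at the cost of needing the (true, and standard in this paper, cf.\ the $Y_i$ construction in the proof of Lemma~6) convention that a channel state is realized at every scheduled epoch whether or not a transmission occurs, so that $I$ is geometric$(p)$ and independent of the arrival-measurable gaps. For $\Eb[T_1-S_{N(T_1)}]$ and its second moment, the paper simply invokes Proposition~3.4.6 of Ross on the age of the renewal process of successful updates over $[0,T_1]$; your explicit conditioning on the arrival-measurable pair $(T_1,N_1)$ and the computation of the last-success index distribution gives the same uniform bound elementarily, and has the additional merit of making explicit the independence of $(T_1,N_1)$ from the erasure pattern, which the paper's citation implicitly relies on because $T_1$ is a random (not deterministic) time; your treatment of the all-failures term $(N_1+1)^r(1-p)^{N_1}$ also handles the $S_{N(T_1)}=0$ case that the renewal-age citation glosses over. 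The only cosmetic imprecision is your alignment of stage-two slots ``relative to $T_1$'': when $T_1=T_0$ is not an integer the scheduled attempts remain at integer times, so the slots should be anchored at $\lceil T_1\rceil$, which costs at most an additive constant $1$ in the bound $T_2\leq \eta_I+1$ and does not affect boundedness or its uniformity in $T_0$.
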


\begin{proof}
We consider another genie-aided virtual process starting at time $T_1$ as follows. The source performs BU-ER$_{T_0}$ after $T_1$, and keeps tracking the battery level and genie-informed update result. If a status update is erased and the battery level is above zero, the sensor depletes its battery and repeat the process. The process stops when the battery level after a successful update becomes one for the first time. Denote the duration of the second state as $T_2^\prime$.

For each sample path, we can see that the battery level under the new virtual process is always less than or equal to that under BU-ER$_{T_0}$, due to the extra energy depletion after $T_1$ and before $T_2^\prime$. Since the update erasure patterns are the same under both policies, we must have $T_2^\prime>T_2$. We note that at each updating time point between $T_1$ and $T_2^\prime$, the battery level is above zero with probability $1-2e^{-1}$; and if the previous event happens, the update is successfully delivered with probability $p$. Therefore, $T_2^\prime$ under the new virtual policy is a geometric random variable with parameter $p(1-2e^{-1})$. Thus, its first and second moments are bounded. Therefore, \tcb{$\Eb[T_2]$} and $\Eb[T_2^2]$ are bounded.

Next, we note that under the BU-ER$_{T_0}$ updating, the AoI over $[0,T_1]$ is a renewal reward process, which resets to zero at $\{S_i\}_{i=1}^{N(T_1)}$. According to Proposition 3.4.6 in~\cite{ross:1996}, $\lim_{t\rightarrow\infty}\Eb[S_{N(t)}-t]$ is bounded. Therefore $\Eb[S_{N(T_1)}-T_1]$ is uniformly bounded for any $T_1$. Similarly, we can show that $\Eb[(S_{N(T_1)}-T_1)^2]$ is uniformly bounded.
\end{proof}

\if{0}
\begin{figure}[t]\centering
	\includegraphics[width=2.7in]{lemma6_pic.eps}
	\vspace{-0.05in}
	\caption{\tcb{AoI as a function of T of the BU-ER$_{T_0}$.}}\label{fig:lemma_6}
	\vspace{-0.1in}
\end{figure}
\fi

\begin{Lemma}\label{1lemma:AoI}
	As $T_0\rightarrow\infty$, the expected long-term average AoI under BU-ER$_{T_0}$ is upper bounded by $\frac{2-p}{2p}$.
\end{Lemma}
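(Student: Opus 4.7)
The plan is to exploit the renewal structure of BU-ER$_{T_0}$ so that, by the renewal reward theorem, the expected long-term average AoI equals $\Eb[R(T_1+T_2)]/\Eb[T_1+T_2]$. I will decompose
\begin{align}
R(T_1+T_2)=R(T_1)+\bigl(R(T_1+T_2)-R(T_1)\bigr),
\end{align}
and show that the first term is at most $\tfrac{2-p}{2p}\Eb[T_1]+O(1)$ while the second is $O(1)$, uniformly in $T_0$. Dividing by $\Eb[T_1+T_2]\geq \Eb[T_1]$ and then sending $T_0\to\infty$, so that $\Eb[T_1]\to\infty$ by Lemma~\ref{1lemma:N(T1)}, will yield the desired upper bound $(2-p)/(2p)$.

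For the second-stage contribution, the instantaneous AoI on $[T_1,T_1+T_2]$ starts at $T_1-S_{N(T_1)}$ and, in the worst case where no successful updates occur in stage 2, grows linearly at unit rate; successful updates inside stage 2 can only reduce it. Thus
\begin{align}
R(T_1+T_2)-R(T_1)\leq T_2\bigl(T_1-S_{N(T_1)}\bigr)+\tfrac{T_2^2}{2}.
\end{align}
Applying Cauchy--Schwarz to the cross term and invoking Lemma~\ref{1lemma:bounded}, which supplies uniform bounds on $\Eb[T_2^2]$ and $\Eb[(T_1-S_{N(T_1)})^2]$, gives a constant upper bound on $\Eb[R(T_1+T_2)-R(T_1)]$.

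For the first-stage contribution, the crucial observation is that during $[0,T_1]$ the battery never runs dry, so the updating process coincides exactly with BU on an infinite battery. Hence the inter-arrival times $\{X_i\}$ between successful updates at the destination are i.i.d.\ Geometric$(p)$ with $\Eb[X]=1/p$ and $\Eb[X^2]=(2-p)/p^2$, and $T_1$ (being a function of the energy arrival process alone) is independent of $\{X_i\}$. Starting from $R(T_1)=\tfrac{1}{2}\sum_{i=1}^{N(T_1)}X_i^2+\tfrac{1}{2}(T_1-S_{N(T_1)})^2$, I will upper bound the sum by $\sum_{i=1}^{N(T_1)+1}X_i^2$ and apply Wald's second identity to the stopping time $N(T_1)+1$, which is a valid stopping time for $\{X_i\}$ even when $T_1$ is random thanks to the independence. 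Combined with $\Eb[N(T_1)]\leq p\,\Eb[T_1]$ (since $N(\cdot)$ is Bernoulli thinning at rate $p$ on the integer grid), this yields $\Eb[R(T_1)]\leq \tfrac{2-p}{2p}\Eb[T_1]+C'$, where the $O(1)$ constant $C'$ absorbs the Wald boundary term $(2-p)/(2p^2)$ together with the uniformly bounded $\tfrac{1}{2}\Eb[(T_1-S_{N(T_1)})^2]$ from Lemma~\ref{1lemma:bounded}.

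The main obstacle I anticipate is cleanly isolating the leading-order $\Eb[T_1]$ term from the various bounded boundary terms, particularly in stage 2: a priori, the ``battery $\geq 1$ after a successful update'' termination rule interacts subtly with any intermediate successful updates that occur during stage 2, but the monotone-growth envelope above sidesteps tracking that dependence and reduces everything to the moment bounds already provided by Lemma~\ref{1lemma:bounded}. A minor care is required in justifying the conditional application of Wald's identity when $T_1$ is random, which is handled by first conditioning on $T_1$ and then taking the outer expectation, using independence of $T_1$ and the erasure pattern.
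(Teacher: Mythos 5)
Your proof is correct and follows essentially the same route as the paper's: since there is no battery outage during stage 1, the successful-update interarrival times coincide with an i.i.d.\ Geometric$(p)$ sequence independent of $T_1$ (the paper's $\{Y_i\}$), Wald's identity with the stopping time $N(T_1)+1$ yields the leading term $\frac{2-p}{2p}\Eb[T_1]$, and Lemma~\ref{1lemma:N(T1)} together with Lemma~\ref{1lemma:bounded} disposes of the stage-2 and boundary contributions. Your only deviations are cosmetic: you bound $\Eb[N(T_1)+1]\leq p\Eb[T_1]+1$ by Bernoulli thinning where the paper instead invokes the residual-life bound on $\Eb\left[\sum_{i=1}^{N(T_1)+1}Y_i-T_1\right]$, and you use Cauchy--Schwarz rather than independence for the cross term $\Eb[T_2(T_1-S_{N(T_1)})]$; just note that the $(N(T_1)+1)$st squared term must be the virtual channel-pattern extension (the paper's $Y_{N(T_1)+1}$), not the actual post-$T_1$ inter-success delay, which is exactly what your independence setup intends.
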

\begin{proof}
	First, we note that the 
	\begin{align}
	&\lim_{T_0\rightarrow \infty}\frac{\Eb[ (T_1+T_2-S_{N(T_1)})^2 ]}{2\Eb[T_1+T_2]} \nonumber \\
	&= \lim_{T_0\rightarrow \infty} \frac{\Eb[(T_1-S_{N(T_1)})^2] +
		\Eb[T_2^2] +2 \Eb[T_1-S_{N(T_1)}]\Eb[T_2] }{2\Eb[T_1]}  \nonumber\\
	&=0, \label{eqn:usebounded-2} 
	\end{align}
	where the first equality follows from that the two events $T_1-S_{N(T_1)}$ and $T_2$ are independent, and the second equality follows from Lemma \ref{1lemma:N(T1)} and Lemma \ref{1lemma:bounded}.

	\tcb{As illustrated in~Fig.~\ref{fig:bu-er}},
	\begin{align*}
	\lim_{T\rightarrow\infty}\Eb\left[\frac{R(T)}{T}\right]\leq \frac{\sum_{i=1}^{N(T_1) }X_i^2 + (T_1+T_2-S_{N(T_1)})^2 }{2\Eb[T_1+T_2]}.
	\end{align*}
	
	Consider the channel state realization at the scheduled status updating epochs under BU (and BU-ER) updating. Let $Y_i$ be the duration between the $i$th and $i-1$st epochs when the channel states are good and the corresponding update would be successful if it were sent. Then, $\{Y_i\}_{i=1}^{N(T_1)}$ is identical to $\{X_i\}_{i=1}^{N(T_1)}$. This is because there is no battery outage over $[0, T_1]$, and whether an update is successful or not only depends on the channel state.
	Combining with (\ref{eqn:usebounded-2}), we have
	\begin{align}
	&\lim_{T_0\rightarrow\infty}\lim_{T\rightarrow\infty}\Eb\left[\frac{R(T)}{T}\right]\leq \lim_{T_0\rightarrow\infty} \frac{\Eb[\sum_{i=1}^{N(T_1)}X_i^2 ]}{2\Eb[T_1+T_2]}  \label{eqn:upbound-01} \\
	&\leq \lim_{T_0\rightarrow\infty} \frac{\Eb\left[\sum_{i=1}^{N(T_1)+1}Y_i^2 \right]}{2\Eb\left[\sum_{i=1}^{N(T_1)+1}Y_i-(\sum_{i=1}^{N(T_1)+1}Y_i-T_1)\right]}  \\
	&=\lim_{T_0\rightarrow\infty} \frac{\Eb[N(T_1)+1]\Eb[Y_1^2]}{2\Eb[N(T_1)+1]\Eb[Y_1]-2\Eb\left[\sum_{i=1}^{N(T_1)+1}Y_i-T_1\right]}, \label{eqn:upbound-1} 
	\end{align}

	where (\ref{eqn:upbound-1}) follows from Wald's equality and the fact that $N(T_1)+1$ is a stopping time for $\{Y_i\}$ for any given $T_1$.
	
	Since $\Eb[N(T_1)+1]\Eb[Y_1]\geq \Eb[T_1]$, according to Lemma~\ref{1lemma:N(T1)}, 
	\begin{align}
	\lim_{T_0\rightarrow\infty}\Eb[N(T_1)+1]\Eb[Y_1]\geq \lim_{T_0\rightarrow\infty}\Eb[T_1]=\infty.
	\end{align}
	Meanwhile, we have $\Eb\left[\sum_{i=1}^{N(T_1)+1}Y_i-T_1\right]$ uniformly bounded for any $T_1$
	based on Proposition 3.4.6 in \cite{ross:1996}. Therefore, (\ref{eqn:upbound-1}) is equal to $\frac{\Eb[Y_1^2]}{2\Eb[Y_1]}$, i.e., $\frac{2-p}{2p}$.
\end{proof}

Theorem~\ref{1thm:lower}, Lemma \ref{1lemma:suboptimal} and Lemma~\ref{1lemma:AoI} imply the optimality of the BU updating, as summarized in the following theorem.

\begin{Theorem}[Optimality of BU Updating]  \label{1thm:nofeedback}
Among all policies in $\Pi_3$, the BU updating policy is optimal when updating feedback is unavailable, i.e.,
\begin{align*}
\limsup_{T\rightarrow \infty}\Eb\left[\frac{R(T)}{T}\right] & =  \frac{2-p}{2p}.
\end{align*}
\end{Theorem}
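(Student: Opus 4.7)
The approach is a two-sided sandwich that combines the three building blocks already in place. On one side, Theorem~\ref{1thm:lower} certifies that every policy in $\Pi_3$ achieves expected long-term average AoI at least $\frac{2-p}{2p}$; applied to BU itself (which I argue below lies in $\Pi_3$), this supplies a matching lower bound. On the other side, Lemma~\ref{1lemma:suboptimal} shows that for each fixed $T_0 > 0$ the expected long-term average AoI of BU is no larger than that of BU-ER$_{T_0}$, while Lemma~\ref{1lemma:AoI} shows the latter is at most $\frac{2-p}{2p}$ in the limit $T_0 \to \infty$. Chaining these facts and then sending $T_0 \to \infty$ gives $\limsup_{T\to\infty}\Eb[R(T)/T] \leq \frac{2-p}{2p}$ under BU. The two bounds pin BU's AoI at exactly $\frac{2-p}{2p}$, and Theorem~\ref{1thm:lower} then certifies that no other policy in $\Pi_3$ can improve on this value, establishing the optimality claim.

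The only step that requires independent verification is $\mathrm{BU} \in \Pi_3$, i.e., $\Eb[l_n] < \infty$ for every fixed $n$. I would argue this directly from the BU dynamics. If $E(l_{k-1}^+) \geq 1$, then at the next scheduled epoch $E((l_{k-1}+1)^-) \geq 1$ deterministically, so $l_k = l_{k-1} + 1$. If instead $E(l_{k-1}^+) = 0$, then at each subsequent integer slot the battery equals the fresh Poisson$(1)$ arrival in the preceding unit interval, and an update is attempted as soon as that arrival is positive, so $l_k - l_{k-1}$ is stochastically dominated by a geometric random variable with parameter $1 - e^{-1}$. In either case $\Eb[l_k - l_{k-1}] \leq 1/(1 - e^{-1})$, and summing yields $\Eb[l_n] \leq n/(1 - e^{-1}) < \infty$. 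This is much weaker than asking the zero-hitting time of the battery process to have finite mean; Lemma~\ref{1lemma:N(T1)} shows that particular hitting time is infinite in expectation, but that does not prevent $\Eb[l_n]$ from being finite for each fixed $n$, so there is no conflict with the motivation for introducing BU-ER.

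I do not anticipate any genuine obstacle in this plan: all the hard technical content has been absorbed into Theorem~\ref{1thm:lower} (the lower bound across $\Pi_3$) and Lemma~\ref{1lemma:AoI} (the upper bound via the BU-ER$_{T_0}$ virtual-policy construction and its renewal analysis). The proof of Theorem~\ref{1thm:nofeedback} itself is then essentially a short chain of inequalities, together with the brief $\Pi_3$-membership check above.
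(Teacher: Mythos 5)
Your proposal is correct and follows essentially the same route as the paper, which establishes this theorem precisely by combining Theorem~\ref{1thm:lower} (lower bound over $\Pi_3$) with Lemma~\ref{1lemma:suboptimal} and Lemma~\ref{1lemma:AoI} (upper bound via the BU-ER$_{T_0}$ virtual policies). Your additional check that BU lies in $\Pi_3$ via the geometric-dominance bound $\Eb[l_k-l_{k-1}]\leq 1/(1-e^{-1})$ is a sound (and worthwhile) detail that the paper leaves implicit, and you correctly note it does not conflict with the infinite expected zero-hitting time in Lemma~\ref{1lemma:N(T1)}.
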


\section{Status Updating With Perfect Feedback} \label{sec:feedback}
In this section, we consider the case where there exists perfect updating feedback to the sensor. 
With perfect updating feedback, the sensor has the choice to retransmit the update immediately or wait and update later, thus leading to optimal solutions different from the no feedback case.
In order to facilitate the analysis, in the following, we focus on another class of online policies, termed as uniformly bounded policies.

\subsection{A Lower Bound}
Define $K_i$ as the number of {\it attempted} updates (including the last successful one) between two successful updates at time $S_{i-1}$ and $S_{i}$ under any online policy in $\Pi_2$. Then, $K_i$ could be any integer number greater than or equal to one. 

\begin{Definition}[Uniformly bounded policy]\label{2dfn:uniform}
Under a policy $\pi\in\Pi_2$, if: 1) there exists a function $g(k)$ such that when $K_i= k$, $X_i\leq g(k)$, $\forall i$, and $\tcb{\Eb[g^2(K_i)]}<\infty$, and 2) $\Eb[M(t)-M(t-\Delta)]\leq C\Delta$ for any $\Delta>0,t>0$, then, $\pi$ is called a uniformly bounded policy.
\end{Definition}
Roughly speaking, the first condition ensures that the source updates frequently so that the AoI at the destination does not grow unbounded in expectation; The second condition requires that the source does not update too frequently in any period of time. Such conditions are consistent with our intuition that the optimal policies should try to maintain a constant $X_i$ as much as possible. We note that uniformly bounded policies do not have to be renewal or Markovian in general. Denote the set of uniformly bounded policies as $\Pi_4$, then $\Pi_4\subset\Pi_2$. 
We have the following lemma.
\begin{Lemma}\label{2lemma:extra-2}
For any $\pi\in\Pi_4$, it must have $\lim_{T\rightarrow\infty}\frac{\Eb \left[X^2_{N(T)+1}\right]}{T}=0$ and $\lim_{T\rightarrow\infty}\frac{\Eb \left[X_{N(T)+1}\right]}{T}=0$.
\end{Lemma}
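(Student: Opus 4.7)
The plan is to address the two limits together: the latter, $\lim_{T\to\infty}\Eb[X_{N(T)+1}]/T = 0$, follows from the former via Cauchy--Schwarz, since $\Eb[X_{N(T)+1}] \leq \sqrt{\Eb[X^2_{N(T)+1}]}$ gives $\Eb[X_{N(T)+1}]/T \leq \sqrt{\Eb[X^2_{N(T)+1}]/T}\cdot T^{-1/2} \to 0$ whenever the first limit holds. I therefore focus on showing $\Eb[X^2_{N(T)+1}]/T \to 0$.

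The first step invokes condition 1) of Definition~\ref{2dfn:uniform}: since $X_{N(T)+1} \leq g(K_{N(T)+1})$, we have $\Eb[X^2_{N(T)+1}] \leq \Eb[g^2(K_{N(T)+1})]$, so the task reduces to bounding $\Eb[g^2(K_{N(T)+1})]/T$. I would decompose by which gap contains $T$,
$$\Eb[g^2(K_{N(T)+1})] = \sum_{j=1}^\infty \Eb\bigl[g^2(K_j)\mathbf{1}\{S_{j-1}\leq T < S_j\}\bigr],$$
and control this sum by combining the second-moment bound $\Eb[g^2(K_j)]<\infty$ from condition 1) with the attempt-intensity bound $\Eb[M(t)-M(t-\Delta)]\leq C\Delta$ from condition 2). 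The latter yields $\Eb[N(T)]\leq \Eb[M(T)]\leq CT$, so the effective number of non-negligible terms in the sum grows at most linearly, while condition 1) keeps each summand controlled.

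In the renewal (i.i.d.) case---which is what the authors' own policies produce---the argument reduces to a classical inspection-paradox computation,
$$\Eb[X^2_{N(T)+1}] = \Eb\bigl[X_1^2 \bigl(U(T)-U(T-X_1)\bigr)\bigr],$$
with $U$ the renewal function. Since $U(T)/T$ is eventually bounded by some constant and $X_1^2$ is integrable by condition 1), the integrand tends to $0$ pointwise and is dominated by a multiple of $X_1^2$, so dominated convergence together with Blackwell's renewal theorem yields $\Eb[X^2_{N(T)+1}]/T \to 0$.

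The main obstacle is that $\Pi_4$ is defined more broadly than renewal or even Markovian policies, so I cannot rely on the independence $X_j\perp S_{j-1}$ that underlies the inspection-paradox identity. Instead, I would condition on the natural filtration of past attempts and exploit condition 2) to bound the conditional rate at which gap endpoints fall near $T$, turning the renewal identity into a one-sided inequality. The technical subtlety is handling the inspection-paradox bias toward longer gaps without independence, and I expect the combination of conditions 1) and 2) in Definition~\ref{2dfn:uniform} to be tailored precisely so that the resulting inequality still delivers a uniform-in-$T$ bound; the bookkeeping, rather than any single inequality, is what I expect to be the most delicate part.
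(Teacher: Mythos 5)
Your reduction of the second limit to the first (via $\Eb^2[X_{N(T)+1}]\leq\Eb[X^2_{N(T)+1}]$) and your decomposition $\Eb[X^2_{N(T)+1}]\leq\sum_j\Eb\bigl[g^2(K_j)\lv\{S_{j-1}\leq T<S_j\}\bigr]$ match the start of the paper's argument. But the core step is missing, and the heuristic you offer in its place does not close it: bounding each summand by $\Eb[g^2(K_j)]<\infty$ and observing that the "effective number of terms" grows like $\Eb[N(T)]\leq CT$ only yields $\Eb[X^2_{N(T)+1}]/T=O(1)$, whereas the lemma needs $o(1)$. The mechanism the paper uses, and which your sketch never identifies, is a two-scale split of the gap's starting point at $T-\Delta$ combined with the \emph{truncated} second moment $G(\Delta):=\Eb_k\bigl[g^2(k)\lv_{g(k)>\Delta}\bigr]$. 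Conditioned on $S_{j-1}=t$, the event $\{S_j>T\}$ forces $X_j>T-t$ and hence $g(K_j)>T-t$, so $\Eb[X_j^2\lv_{S_j>T}\mid S_{j-1}=t]\leq G(T-t)$; integrating $G(T-t)$ against $d\Eb[N(t)]$ and splitting at $T-\Delta$, gaps starting before $T-\Delta$ contribute at most $G(\Delta)\,\Eb[N(T-\Delta)]/T\to pG(\Delta)$ (using $\Eb[N(t)]\le p\,\Eb[M(t)]\le p(t+E_0)$ from energy causality), while gaps starting in $[T-\Delta,T]$ contribute at most $G(0)\bigl(\Eb[N(T)]-\Eb[N(T-\Delta)]\bigr)/T\leq G(0)\,pC\Delta/T\to 0$ by condition 2). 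Letting $\Delta\to\infty$ and using $G(\Delta)\to 0$ (a consequence of $\Eb[g^2(K_i)]<\infty$) finishes. Your proposal gestures at "conditioning on the filtration and bounding the rate at which gap endpoints fall near $T$," but without the truncation-in-$\Delta$ idea that is precisely the inspection-paradox fix, so the argument as written stalls at a bounded, not vanishing, ratio.

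Two further points. First, your worry that no independence is available in $\Pi_4$ is overstated: while $X_j$ is indeed not independent of $S_{j-1}$, the attempt count $K_j$ is geometric and, being generated by the i.i.d. erasure process, is independent of the past updating history under any online policy; this is exactly what makes the conditional bound a function $G(T-t)$ of $T-t$ alone, and it is the independence the paper exploits, renewal structure or not. Second, your renewal-case domination is not quite right as stated: $X_1^2\bigl(U(T)-U(T-X_1)\bigr)$ is bounded by roughly $X_1^2(U(X_1)+1)\sim c\,X_1^3$, which is not integrable under only $\Eb[g^2(K_i)]<\infty$, so dominated convergence needs the same truncation device rather than "a multiple of $X_1^2$"; this is a secondary issue, but it is symptomatic of the same missing ingredient.
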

The proof of this lemma is adapted from the proof of Theorem~3 in \cite{Yang:2017:AoI}, and provided in Appendix~\ref{app:{2lemma:extra}}.

Besides, we also have the following observation.

\begin{Lemma}\label{2lemma:success}
	Under any policy $\pi\in\Pi_4$, it must have $\lim_{T\rightarrow \infty} \frac{\Eb[N(T)]}{T}\leq p .$
\end{Lemma}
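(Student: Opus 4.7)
The plan is to decouple the erasure randomness from the scheduling randomness via a filtration/independence argument, and then invoke the energy causality constraint to bound the expected number of attempted updates.

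First, I would set up a filtration that captures the information on which the source's decisions depend. Let $Z_n\in\{0,1\}$ be the indicator that the $n$th transmitted update succeeds, so $Z_n$ is Bernoulli$(p)$ and independent of everything else in the system. Let $\mathcal{F}_{n-1}$ be the $\sigma$-algebra generated by the energy arrival process and by $Z_1,\ldots,Z_{n-1}$. Since the policy is online and $l_n$ is chosen using only information available strictly before the $n$th transmission, $l_n$ is $\mathcal{F}_{n-1}$-measurable even under perfect feedback, and hence so is the event $\{l_n\leq T\}$. Consequently $Z_n$ is independent of $\mathbf{1}\{l_n\leq T\}$, and
\begin{align*}
\Eb[N(T)]
=\sum_{n=1}^{\infty}\Eb\bigl[Z_n\,\mathbf{1}\{l_n\leq T\}\bigr]
=p\sum_{n=1}^{\infty}\Pb[l_n\leq T]
=p\,\Eb[M(T)].
\end{align*}

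Next I would use the energy causality constraint (\ref{eqn:energy_constraint}) to control $\Eb[M(T)]$. Because each update consumes one unit of energy and the initial battery level is $E_0$, the total number of feasible transmissions in $(0,T]$ cannot exceed $E_0$ plus the total energy harvested in $[0,T]$. With $\lambda=1$, the latter has mean $T$, so
\begin{align*}
\Eb[M(T)]\leq E_0+T.
\end{align*}
Combining with the previous display yields $\Eb[N(T)]/T\leq p(E_0+T)/T$, and letting $T\to\infty$ gives the claim.

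The subtle step, and the one I would write out most carefully, is the independence argument in the first display: with feedback the update time $l_n$ genuinely depends on the past erasure pattern $Z_1,\ldots,Z_{n-1}$, so one cannot simply factor $\Eb[Z_n\mathbf{1}\{l_n\leq T\}]$ by treating $M(T)$ as deterministic. The point is that $\mathbf{1}\{l_n\leq T\}$ is $\mathcal{F}_{n-1}$-measurable (it depends only on things determined before the $n$th transmission is attempted), whereas $Z_n$ is independent of $\mathcal{F}_{n-1}$; this is essentially an optional-stopping/Wald-type observation and is what makes $\Eb[N(T)]=p\,\Eb[M(T)]$ hold exactly. Note that the uniformly bounded assumption $\pi\in\Pi_4$ is not actually needed for this bound; only online feasibility and energy causality are used.
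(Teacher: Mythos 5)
Your proof is correct, but it takes a genuinely different and in fact more elementary route than the paper's. The paper groups attempted updates by successful-update intervals: it writes the total number of attempts up to $S_{N(T)+1}$ as $\sum_{i=1}^{N(T)+1}K_i$ with $K_i$ i.i.d.\ geometric($p$), applies Wald's identity with the stopping time $N(T)+1$, and bounds that total by the energy harvested up to the \emph{random} time $S_{N(T)+1}$ via optional stopping for the martingale $A(t)-t$; controlling the resulting overshoot term $\Eb[X_{N(T)+1}]/T$ is precisely where Lemma~\ref{2lemma:extra-2}, and hence the uniformly bounded assumption $\pi\in\Pi_4$, enters. You instead work at the deterministic horizon $T$: writing $N(T)=\sum_n Z_n\lv_{\{l_n\le T\}}$ and noting that $Z_n$ is independent of the $\sigma$-algebra generated by the energy arrival process and $Z_1,\ldots,Z_{n-1}$ (with respect to which $l_n$, hence $\{l_n\le T\}$, is measurable, even with feedback), you obtain $\Eb[N(T)]=p\,\Eb[M(T)]$ exactly, and energy causality at time $T$ gives $M(T)\le E_0+A(T)$, so $\Eb[M(T)]\le E_0+T$. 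This avoids both Wald's identity at a random index and the optional-stopping/overshoot machinery, and---as you observe---it yields the bound for every online policy in $\Pi_2$, not only $\Pi_4$; in fact the paper itself implicitly uses your two ingredients ($\Eb[N(t)]=p\,\Eb[M(t)]$ and $M(t)\le A(t)+E_0$) inside the proof of Lemma~\ref{2lemma:extra-2}, so your argument is consistent with its own machinery while shortening this lemma. Two cosmetic points: state the conclusion as a $\limsup$ (the limit need not exist a priori, which is also all the paper can claim), and justify the interchange of expectation and infinite sum by Tonelli, which is immediate since the summands are nonnegative.
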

\begin{proof}
	First, we observe that
	\begin{align}
	&\lim_{T\rightarrow \infty} \frac{\Eb[\sum_{i=1}^{N(T)+1}K_i]}{T}\leq \lim_{T\rightarrow \infty} \frac{E_0+\Eb[\sum_{i=1}^{N(T)+1}A_i]}{T}\label{eqn:energy_cc}
	\end{align}
	due to the energy causality constraint. We note that $A(t)-t$ is a continuous-time martingale, where $A(t)$ is a Poisson process with parameter one. Therefore, according to the optimal stopping time theorem~\cite{ross:1996}, for any stopping time $\tau$, we have $\Eb[A(\tau)-\tau]=\Eb[A(0)-0]=0$,	i.e., $\Eb[A(\tau)]=\Eb[\tau]$. Since $S_{N(T)+1}$ is a stopping time associated with the past energy arrivals and \tcb{update erasure patterns} under any $\pi\in\Pi_4$, we have $\Eb[A(S_{N(T)+1})]=\Eb[S_{N(T)+1}]$. Plugging it into (\ref{eqn:energy_cc}), we have 
	\begin{align}
	&\lim_{T\rightarrow \infty} \frac{\Eb[\sum_{i=1}^{N(T)+1}K_i]}{T}\leq \lim_{T\rightarrow \infty} \frac{\Eb[S_{N(T)+1}]}{T}
	 \\
	&=1+\lim_{T\rightarrow \infty} \frac{\Eb[X_{N(T)+1}]}{T}=1,\label{eqn:N(T)}
	\end{align}
	where the last equality follows from Lemma~\ref{2lemma:extra-2}.
	
	Besides, we note that under any online policy $\pi\in\Pi_4$, $K_i$ is an i.i.d. geometric random variable with parameter $p$. Therefore, applying Wald's equality, we have 
	\begin{align}
	&\lim_{T\rightarrow \infty} \frac{\Eb[\sum_{i=1}^{N(T)+1}K_i]}{T}=\lim_{T\rightarrow \infty} \frac{\Eb[N(T)+1]\Eb[K_i]}{T}\\
	&=\lim_{T\rightarrow \infty} \frac{\Eb[N(T)+1]}{Tp}.
	\end{align}
	Combining with (\ref{eqn:N(T)}), we have $\lim_{T\rightarrow \infty} \frac{\Eb[N(T)+1]}{T}=\lim_{T\rightarrow \infty} \frac{\Eb[N(T)]}{T}\leq p$.
\end{proof}

In order to obtain a lower bound on the AoI for all $\pi\in\Pi_4$, we will first drop the energy causality constraint, and focus on those online policies that satisfy Lemma~\ref{2lemma:success} and are also uniformly bounded. Denote the set of such policies as $\Pi_5$. Then, we have $\Pi_4\subset\Pi_5$. Since not all policies in $\Pi_5$ would be feasible if the energy causality constraint is imposed, the minimum expected long-term AoI achieved by policies in $\Pi_5$ serves as a {\it lower bound} for policies in $\Pi_4$.

\begin{Theorem}\label{2thm:comparison}
Any policy $\pi\in\Pi_5$ is suboptimal to a renewal policy, i.e., a policy under which the successful updating points $\{S_i\}_{i=1}^{\infty}$ form a renewal process. Besides, under the renewal policy, $X_i$ only depends on $K_i$.
\end{Theorem}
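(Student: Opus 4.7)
The plan is to take an arbitrary $\pi\in\Pi_5$ and construct an explicit renewal policy $\pi^\star$ under which $X_i$ is a deterministic function of $K_i$ and whose expected long-term average AoI is no larger than that of $\pi$. Two preliminary facts drive the construction. First, because channel erasures are i.i.d. and updating feedback is perfect, $\{K_i\}$ is i.i.d.\ Geometric$(p)$ under every $\pi\in\Pi_2$. Second, by Lemma~\ref{2lemma:extra-2} the boundary term $(T-S_{N(T)})^2/(2T)$ vanishes in the limit, so the objective reduces to $\limsup_{T\to\infty}\Eb[\sum_{i=1}^{N(T)} X_i^2]/(2T)$.

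First I would extract a ``conditional average'' scheduling profile from $\pi$. For every $k\geq 1$, set
\begin{align*}
\mu_k^\star \;:=\; \lim_{n\to\infty}\,\frac{1}{n}\sum_{i=1}^n \Eb\bigl[X_i\mid K_i=k\bigr],
\end{align*}
passing to a subsequence if needed; the tightness guaranteeing a limit point exists follows from the uniform bound $X_i\leq g(K_i)$ with $\Eb[g(K_i)^2]<\infty$ in Definition~\ref{2dfn:uniform}. Within each inter-success interval the $k$-th scheduled attempt occurs no earlier than the $(k-1)$-st, and $K_i$ is independent of the schedule, so $\Eb[X_i\mid K_i=k]$ is non-decreasing in $k$ for every $i$; hence $\mu_k^\star$ is non-decreasing in $k$ as well. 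The renewal policy $\pi^\star$ is then defined to schedule, after each successful update, attempts at the fixed offsets $\mu_1^\star,\mu_2^\star,\mu_3^\star,\ldots$. Under $\pi^\star$ we have $X_i=\mu_{K_i}^\star$, and because $\{K_i\}$ is i.i.d., the successful-update epochs $\{S_i\}$ form a genuine renewal process with $X_i$ depending only on $K_i$.

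Comparison of objectives follows from a double application of Jensen. For each $k$ and $i$, $\Eb_\pi[X_i^2\mid K_i=k] \geq (\Eb_\pi[X_i\mid K_i=k])^2$, and a further Ces\`aro--Jensen step gives $\lim_n\tfrac{1}{n}\sum_i (\Eb_\pi[X_i\mid K_i=k])^2 \geq (\mu_k^\star)^2$. Averaging against the Geometric$(p)$ mass function, the long-run mean of $X_i^2$ under $\pi$ dominates $\Eb[(\mu_K^\star)^2]$, while the long-run mean of $X_i$ equals $\Eb[\mu_K^\star]$ by construction of $\mu_k^\star$. Applying the elementary renewal theorem together with Wald's identity (as in Lemma~\ref{2lemma:success}) then collapses the objective to $\Eb[(\mu_K^\star)^2]/(2\Eb[\mu_K^\star])$ under $\pi^\star$, yielding $\limsup_T\Eb[R(T)/T]\bigr|_{\pi^\star}\leq\limsup_T\Eb[R(T)/T]\bigr|_\pi$. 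Feasibility $\pi^\star\in\Pi_5$ is then inherited from $\pi$: the uniform bound is carried by $g^\star(k):=\mu_k^\star\leq g(k)$, and the rate condition $\lim\Eb[N(T)]/T\leq p$ holds because the geometric mean of $X_i^{\pi^\star}$ is at least $1/p$ by the same Cesàro argument applied to $\pi$.

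The main obstacle is the non-stationarity of $\pi$: the sequence $\{\Eb[X_i\mid K_i=k]\}_i$ need not converge, so defining $\mu_k^\star$ rigorously requires a compactness/subsequence argument, for which the $L^2$ bound in Definition~\ref{2dfn:uniform} is precisely what is needed. I also anticipate some care in passing from Cesàro averages of single-interval quantities to the time-averaged sums $\Eb[\sum_{i=1}^{N(T)}X_i^2]/T$, since $N(T)$ is itself random; this can be controlled using $\Eb[M(t)-M(t-\Delta)]\leq C\Delta$ from Definition~\ref{2dfn:uniform}, which prevents a vanishing fraction of abnormally heavy intervals from dominating the expectation. Once these measure-theoretic details are in place, the Jensen inequalities and the renewal-theoretic computations are routine.
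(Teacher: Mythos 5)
Your overall blueprint is the same as the paper's: average the inter-update delay conditioned on $K_i=k$ across intervals, use Jensen/Cauchy--Schwarz to show the averaged (renewal) policy does no worse, check that the averaged delays are nondecreasing in $k$ so the offsets are causally implementable, and verify the rate constraint so the constructed policy stays in $\Pi_5$. However, there is a genuine gap exactly at the step you flag as ``measure-theoretic details'': the reduction from the time-average objective $\Eb[\sum_{i=1}^{N(T)}X_i^2]/(2T)$ to plain Ces\`aro averages of the \emph{unconditional} quantities $\Eb[X_i\mid K_i=k]$. The random horizon does not just truncate the sum; the event $\{i\le N(T)\}=\{S_i\le T\}$ is correlated with $X_{i+1}$ under a general non-stationary policy (long intervals are systematically less likely to be counted), so $\Eb[X_{i+1}\lv_{i\le N(T)}]\neq\Eb[X_{i+1}]\,\Pb[i\le N(T)]$ and your claimed identity ``the long-run mean of $X_i$ equals $\Eb[\mu^\star_K]$'' together with the appeal to Wald/elementary renewal theory under $\pi$ is not justified: the $X_i$ under $\pi$ are neither i.i.d.\ nor independent of $N(T)$, and the increment bound $\Eb[M(t)-M(t-\Delta)]\le C\Delta$ controls attempt frequency, not this selection bias. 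This is precisely the point the paper's proof is built to handle: it defines $\hat{X}_{i+1}(k):=\Eb[X_{i+1}\mid i\le N(T),\,K_{i+1}=k]$ (conditioning on the counting event, and using only the independence of $\{i\le N(T)\}$ from $K_{i+1}$), bounds the objective below via $T\le\sum_{i}X_{i+1}\lv_{i\le N(T)}$, and then forms the renewal intervals as the specific weighted combination $\bar{X}(k)=\sum_i\hat{X}_{i+1}(k)\rho_{i+1}$ with weights $\rho_{i+1}\propto\Eb_k[\hat X_{i+1}(k)]\Eb[\lv_{i\le N(T)}]$, which is what makes the two Cauchy--Schwarz/Jensen steps and the rate bound $\Eb_k[\bar X(k)]\ge T/\Eb[N(T)+1]$ go through without ever invoking Wald on the $X_i$.

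Your other ingredients are essentially sound and mirror the paper: the monotonicity $\Eb[X_i\mid K_i=k+1]\ge\Eb[X_i\mid K_i=k]$ follows from the same coupling argument the paper uses (sample paths sharing the history up to the $k$th attempt), the bound $\mu^\star_k\le g(k)$ gives membership of the constructed policy in the uniformly bounded class, and the subsequence/diagonal extraction is a legitimate (if more delicate) substitute for working at finite $T$ and passing to the limit at the end. But as written, the proof does not close without replacing the unconditional Ces\`aro averages by horizon-conditioned averages (or an equivalent device), so the comparison between $\pi$ and your $\pi^\star$ is not established.
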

A sketch of the proof is as follows: For any given policy $\pi\in\Pi_5$, we construct a renewal policy based on all possible sample paths under $\pi$. Specifically, our approach is to first average $X_i$ over sample paths with the same $K_i$, so that all factors other than $K_i$ that may affect $X_i$ can be averaged out. Then, we form a \tcb{linear combination} of $X_i$, and use it as the inter-update delay under the new policy. Such a policy is a renewal policy, and each renewal interval only depends on $K_i$. Through rigorous stochastic analysis, we prove that the constructed renewal policy always outperforms the original policy. The detailed proof of Theorem~\ref{2thm:comparison} is provided in Appendix \ref{app:2thm:comparison}.

In the following, we will focus on renewal policies in $\Pi_2$, and identify the AoI-optimal renewal policy.
\begin{Theorem}\label{2thm:renewal}
Under the optimal renewal policy in $\Pi_5$, $X_i$ equals a constant $\frac{1}{p}$ irrespective of $K_i$, and the corresponding long-term average AoI equals $\frac{1}{2p}$.
\end{Theorem}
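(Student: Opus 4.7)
By Theorem~\ref{2thm:comparison}, it suffices to restrict attention to renewal policies in $\Pi_5$ for which the inter-update delay $X_i$ depends only on $K_i$. The plan is to write $X_i = h(K_i)$ for some measurable $h:\Nb\to[0,\infty)$ and reduce the problem to a one-variable optimization. Because the $K_i$ are i.i.d.\ geometric$(p)$, the $X_i$ are also i.i.d., and the uniformly bounded hypothesis $X_i\leq g(K_i)$ with $\Eb[g^2(K_i)]<\infty$ guarantees $\Eb[X_1^2]<\infty$. Applying the renewal reward theorem to the reward increments $X_i^2/2$, together with a uniform integrability argument, yields
\begin{align*}
\limsup_{T\to\infty}\Eb\!\left[\frac{R(T)}{T}\right]=\frac{\Eb[X_1^2]}{2\,\Eb[X_1]}.
\end{align*}

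The next step is to extract the constraint on $\Eb[X_1]$. The elementary renewal theorem gives $\lim_{T\to\infty}\Eb[N(T)]/T=1/\Eb[X_1]$, and combining this with the membership condition of $\Pi_5$ (that Lemma~\ref{2lemma:success} holds) produces $\Eb[X_1]\geq 1/p$. The one-variable minimization is now immediate: applying $\Eb[X_1^2]\geq(\Eb[X_1])^2$, which is the non-negativity of the variance, gives
\begin{align*}
\frac{\Eb[X_1^2]}{2\,\Eb[X_1]}\;\geq\;\frac{\Eb[X_1]}{2}\;\geq\;\frac{1}{2p}.
\end{align*}
Equality requires both $\mathrm{Var}(X_1)=0$, i.e.\ $X_1$ is a.s.\ constant, and $\Eb[X_1]=1/p$. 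Since $K_1$ is geometric$(p)$ and therefore places strictly positive mass on every $k\in\Nb$, the only way $X_1=h(K_1)$ can be a.s.\ constant is $h\equiv 1/p$ on all of $\Nb$, which forces $X_i=1/p$ irrespective of $K_i$.

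To close the loop I would verify achievability: the deterministic policy $X_i\equiv 1/p$ is itself a renewal policy in $\Pi_5$. Uniform boundedness holds trivially with $g\equiv 1/p$; the attempted-update rate is $\Eb[K_1]/\Eb[X_1]=(1/p)/(1/p)=1$, so $\Eb[M(t)-M(t-\Delta)]$ is dominated by a linear function of $\Delta$; and $\lim\Eb[N(T)]/T=p$ matches Lemma~\ref{2lemma:success} with equality. Plugging $\Eb[X_1]=1/p$ and $\Eb[X_1^2]=1/p^2$ into the renewal-reward formula gives long-term average AoI exactly $1/(2p)$.

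The main obstacles are two subtleties. First, upgrading the renewal reward statement from almost-sure convergence to convergence of expectations requires uniform integrability of $R(T)/T$; my plan is to bound $R(T)/T$ in terms of $\sum_{i\leq N(T)+1}X_i^2/T$ and invoke $\Eb[g^2(K_1)]<\infty$ together with the elementary renewal theorem to justify interchanging limit and expectation. Second, the Jensen equality condition yields $X_1$ a.s.\ constant, and one needs the full support of the geometric distribution to upgrade this to $h(k)=1/p$ for \emph{every} $k\in\Nb$ rather than almost every $k$; this step is short but must be stated explicitly so that the conclusion ``irrespective of $K_i$'' in the theorem is rigorously established.
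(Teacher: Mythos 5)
Your proposal is correct and follows essentially the same route as the paper: reduce to renewal policies with $X_i$ a function of $K_i$ via Theorem~\ref{2thm:comparison}, impose the rate constraint $1/\Eb[X_i]\leq p$ from Lemma~\ref{2lemma:success}, and minimize $\Eb[X_i^2]/(2\Eb[X_i])$ using $\Eb[X^2]\geq \Eb^2[X]$, with equality forcing $X_i\equiv 1/p$. The extra details you supply (uniform integrability for the renewal-reward step, full support of the geometric distribution, and the achievability check that $X_i\equiv 1/p$ lies in $\Pi_5$) are sound refinements of the same argument rather than a different approach.
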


\begin{Proof}
\tcb{Based on proof of Theorem~3, under the optimal renewal policy, $X_i$ can only take values from a countable set of constants $\{x_1,x_2,\ldots\}$, depending on the realization of $K_i$. Specifically, $X_i$ will equal $x_k$ if $K_i=k$. Note that $K_i$ is a geometric random variable with parameter $p$ irrespective of the values of $x_k$s. Then, to minimize the expected long-term average AoI, it suffices to solve the following optimization problem:
\begin{align} \label{2eqn:renewalsolve}
\min_{\{x_k\}} \frac{\Eb[ X_i^2]}{2\Eb[X_i]}\quad\mbox{s.t.}\quad \frac{1}{\Eb[X_i]}\leq p,
\end{align}
where the constraint follows from Lemma~\ref{2lemma:success} and the property of renewal processes.}

\tcb{Applying the inequality that $\Eb[X^2]\geq \Eb^2[X]$ to the objective function and utilizing the constraint $\frac{1}{\Eb[X_i]}\leq p$, we have
 \begin{align}
 \frac{\Eb[ X_i^2]}{2\Eb[X_i]}\geq \frac{\Eb[X_i]}{2}\geq \frac{1}{2p},
 \end{align}
 where the equalities can be met if $X_i=\Eb[X_i]=\frac{1}{p}$.}
 
 \if{0}

This is a non-linear fractional programming problem and can be solved using the parametric approach in~\cite{Dinkelbach:1967}. Below we provide an alternative yet simpler approach.

We note that
\begin{align}
\frac{\sum_{k=1}^\infty x^2_k p_k}{2\sum_{k=1}^\infty x_k p_k} 
\geq \frac{(\sum_{k=1}^\infty x_k p_k)^2}{2\sum_{k=1}^\infty x_k p_k} 
=\frac{1}{2}\sum_{k=1}^\infty x_k p_k  
\geq \frac{1}{2p}  \label{2eqn:renewalsolve-2},
\end{align}
where the first inequality in (\ref{2eqn:renewalsolve-2}) follows from Jensen's inequality and the second one follows from the constraint in (\ref{2eqn:renewalsolve}). The equalities hold if and only if $x_k=x_1$ for all $k=1,2,\ldots$ and $\sum_{k=1}^\infty x_k p_k = \frac{1}{p}$. Combining with $p_k=(1-p)^{k-1}p$, we have $x_k=\frac{1}{p}$ for all $k=1,2,\ldots$.
Thus, the solution of (\ref{2eqn:renewalsolve}) is $x_k=\frac{1}{p}$, for all $k=1,2,\ldots$ and  the corresponding minimum is $\frac{1}{2p}$.
\fi
\end{Proof}

Combining Theorem~\ref{2thm:comparison} and Theorem~\ref{2thm:renewal}, we obtain a lower bound for all $\pi\in\Pi_4$ as follows.

\begin{Theorem}\label{2thm:lower}
{\bf (Lower Bound for Channel with Perfect Feedback)} For any policy $\pi\in\Pi_4$, the expected long-term average AoI is lower bounded by $\frac{1}{2p}$.
\end{Theorem}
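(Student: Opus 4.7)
The plan is to prove Theorem~\ref{2thm:lower} by chaining together the two preceding results, Theorem~\ref{2thm:comparison} and Theorem~\ref{2thm:renewal}, exploiting the inclusion $\Pi_4 \subset \Pi_5$. The strategy mirrors the relaxation approach already used in this section: first drop the hard energy causality constraint in favor of the softer average-rate constraint from Lemma~\ref{2lemma:success}, then minimize AoI over this enlarged class. Because the enlarged class contains $\Pi_4$, its optimum is a valid lower bound for $\Pi_4$.

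The first step will be to verify the inclusion $\Pi_4 \subset \Pi_5$. Recall that $\Pi_5$ consists of uniformly bounded online policies satisfying $\limsup_{T\to\infty} \Eb[N(T)]/T \leq p$ but \emph{not} necessarily the energy causality constraint~(\ref{eqn:energy_constraint}). Since Lemma~\ref{2lemma:success} has already established that every $\pi \in \Pi_4$ automatically satisfies this rate bound, the inclusion is immediate, and any infimum taken over $\Pi_5$ lower-bounds the infimum over $\Pi_4$.

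Next, I would invoke Theorem~\ref{2thm:comparison}: for any $\pi \in \Pi_5$, there exists a renewal policy (still in $\Pi_5$) whose expected long-term average AoI is no larger than that of $\pi$, and under which the inter-update delay $X_i$ depends only on the number of attempts $K_i$. It therefore suffices to minimize the AoI over such renewal policies. Theorem~\ref{2thm:renewal} then completes the chain by showing that the optimal renewal policy in $\Pi_5$ achieves expected long-term average AoI exactly $\frac{1}{2p}$, with $X_i \equiv \frac{1}{p}$. Concatenating these inequalities yields $\limsup_{T\to\infty} \Eb[R(T)/T] \geq \frac{1}{2p}$ for every $\pi \in \Pi_4$, which is the desired bound.

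The main obstacle in this theorem has in fact already been carried by Theorems~\ref{2thm:comparison} and~\ref{2thm:renewal}: the nontrivial reduction from arbitrary (possibly non-Markovian, sample-path adaptive) policies in $\Pi_5$ to renewal policies in which $X_i$ is a deterministic function of $K_i$, and the subsequent fractional optimization to pin down $X_i = 1/p$. Given those two results, the proof of Theorem~\ref{2thm:lower} itself is a short logical assembly: verify $\Pi_4 \subset \Pi_5$, apply Theorem~\ref{2thm:comparison} to pass to renewal policies, and apply Theorem~\ref{2thm:renewal} to evaluate the optimum. No additional stochastic analysis is needed at this final stage.
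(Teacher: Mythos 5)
Your proposal is correct and follows essentially the same route as the paper: the paper obtains Theorem~\ref{2thm:lower} precisely by noting $\Pi_4\subset\Pi_5$ (via Lemma~\ref{2lemma:success} and the relaxation of the energy causality constraint) and then combining Theorem~\ref{2thm:comparison} with Theorem~\ref{2thm:renewal}. Nothing further is needed, since all the substantive work indeed resides in those two preceding theorems.
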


\subsection{Optimal Online Status Updating}
Motivated by the uniform structure of $\{X_i\}$ under the optimal renewal policy in Theorem \ref{2thm:renewal}, we define the {\it Best-effort Uniform updating with Retransmission} (BUR) policy as follows.

\begin{Definition}[BUR Updating]
The sensor is scheduled to update the status at $s_n=n/p$, $n=1,2,\ldots$. The sensor keeps sending updates at $s_n$ until an update is successful or until it runs out of battery; Otherwise, the sensor keeps silent until the next scheduled status update time.
\end{Definition}

In order to prove that the BUR updating policy is optimal, we will first construct a sequence of policies which are sub-optimal to the BUR updating policy, and show that the limit of those suboptimal policies achieves the lower bound in Theorem~\ref{2thm:lower}.

\begin{Definition}[BUR with Energy Removal (BUR-ER$_{T_0}$)]
	The sensor performs BUR updating policy until the battery level after sending an update becomes zero for the first time, or until time $T_0^+$, in which case the sensor depletes its battery after a successful update at $T_0$; After that, when the battery level becomes higher than or equal to one after a successful update for the first time, the sensor reduces the battery level to one, and then repeats the process.
\end{Definition}

\begin{Lemma} \label{2lemma:suboptimal}
	The BUR-ER$_{T_0}$ updating policy is suboptimal to the BUR updating policy.
\end{Lemma}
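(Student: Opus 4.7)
The plan is to mirror the sample-path coupling argument used to prove Lemma~\ref{1lemma:suboptimal} for the no-feedback case, but with extra care because each scheduled epoch of BUR may involve several retransmission attempts. I would couple the two policies on a common probability space: the same energy-arrival sample path, and a common array of i.i.d.~$\mathrm{Bern}(p)$ erasure outcomes $\{B_{n,k}\}_{n,k\geq 1}$ indexed by the scheduled slot $s_n=n/p$ and the in-slot attempt number $k$, so that under \emph{both} policies the $k$-th attempt at slot $s_n$ succeeds iff $B_{n,k}=1$. Let $E^{B}(t)$ and $E^{E}(t)$ denote the battery levels under BUR and BUR-ER$_{T_0}$ respectively, and set $k_n^\ast:=\min\{k:B_{n,k}=1\}$, which is the same geometric$(p)$ random variable on both sides.

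I would then prove by induction on $n$ the domination $E^{B}(t)\geq E^{E}(t)$ for all $t$. Away from the scheduled epochs $s_n$ and away from the special removal times of BUR-ER$_{T_0}$, both batteries receive the same Poisson increments, so the difference is preserved. At each $s_n$, the number of attempts is $\min(E^{B}(s_n^-),k_n^\ast)$ under BUR and $\min(E^{E}(s_n^-),k_n^\ast)$ under BUR-ER$_{T_0}$; checking the three cases $k_n^\ast\leq E^{E}(s_n^-)$, $E^{E}(s_n^-)<k_n^\ast\leq E^{B}(s_n^-)$, and $k_n^\ast>E^{B}(s_n^-)$ yields a straightforward verification that the non-negative gap $E^{B}(s_n^+)-E^{E}(s_n^+)$ remains non-negative. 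The only extra events in BUR-ER$_{T_0}$ are the forced depletion at $T_0^+$ and the reductions to level one after each relevant successful update; both only decrease $E^{E}$, which can only enlarge the gap.

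Given the battery domination, a successful update under BUR-ER$_{T_0}$ at slot $s_n$ requires $E^{E}(s_n^-)\geq k_n^\ast$, which by domination implies $E^{B}(s_n^-)\geq k_n^\ast$, so BUR also succeeds at the very same slot $s_n$ (using the same first-success attempt). Hence the set of successful updating epochs under BUR-ER$_{T_0}$ is a subset of that under BUR, and for every $t$ the last-success time under BUR is no earlier than the last-success time under BUR-ER$_{T_0}$. Consequently the instantaneous AoI under BUR-ER$_{T_0}$ dominates that under BUR pointwise in $t$, so $R^{E}(T)\geq R^{B}(T)$ sample-path-wise. Taking expectations, dividing by $T$, and letting $T\to\infty$ gives
\begin{align*}
\limsup_{T\to\infty}\Eb\!\left[\tfrac{R^{E}(T)}{T}\right]\;\geq\;\limsup_{T\to\infty}\Eb\!\left[\tfrac{R^{B}(T)}{T}\right],
\end{align*}
which is the claimed suboptimality.

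The main obstacle is the in-slot retransmission structure: unlike the one-attempt-per-slot setting of Lemma~\ref{1lemma:suboptimal}, here the erasure randomness must be coupled attempt-by-attempt so that the two policies share identical first-success indices $k_n^\ast$; without this coupling, one cannot argue that BUR "sees" every success that BUR-ER$_{T_0}$ sees. Once the per-attempt coupling is fixed, the battery-domination induction and the subsequent AoI comparison proceed exactly in parallel with the proof of Lemma~\ref{1lemma:suboptimal}.
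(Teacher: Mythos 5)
Your proposal is correct and follows essentially the same sample-path coupling argument as the paper: couple the energy arrivals and erasure outcomes, establish battery-level domination of BUR over BUR-ER$_{T_0}$, conclude that every success of BUR-ER$_{T_0}$ is also a success of BUR, and hence that the AoI under BUR-ER$_{T_0}$ dominates pointwise. Your explicit attempt-by-attempt coupling $\{B_{n,k}\}$ is simply a careful formalization of what the paper compresses into the phrase ``with the same update erasure pattern.''
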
 
\begin{proof}
	We note that the BUR-ER$_{T_0}$ updating policy is identical to the BUR updating policy up to the energy removal step. Given the same energy harvesting sample path, the battery level under BUR is always higher than that under BUR-ER$_{T_0}$. Thus, BUR-ER$_{T_0}$ incurs more infeasible status updating points. With the same \tcb{update erasure pattern}, the instantaneous AoI under BUR-ER$_{T_0}$ is always greater than or equal to that under BUR sample path-wisely. Thus, the expected time-average AoI under BUR-ER$_{T_0}$ is greater than or equal to that under BUR. 
\end{proof}

Note that BUR-ER$_{T_0}$ updating is a renewal policy and Fig.~\ref{fig:bur-er} is an illustration of one renewal interval. In order to analyze the expected long-term average AoI, it suffices to analyze the expected average AoI over one renewal interval. Thus, we will focus on the first renewal interval, and show that the expected average AoI converges to the lower bound in Theorem~\ref{2thm:lower}.
The renewal interval consists of two stages. The first stage starts at time zero and ends until the battery becomes empty for the first time, or until time $T_0^+$, denoted as $T_1$. We note that all scheduled updating points over $(0, T_1)$ are feasible. The second stage starts at $T_1$ and ends when the battery level after a successful update becomes higher than or equal to one for the first time after $T_1$, denoted as \tcb{$T_1+T_2$, where $T_2$ is the duration of the second stage}.

\begin{figure}[t]\centering
	\includegraphics[width=3.3in]{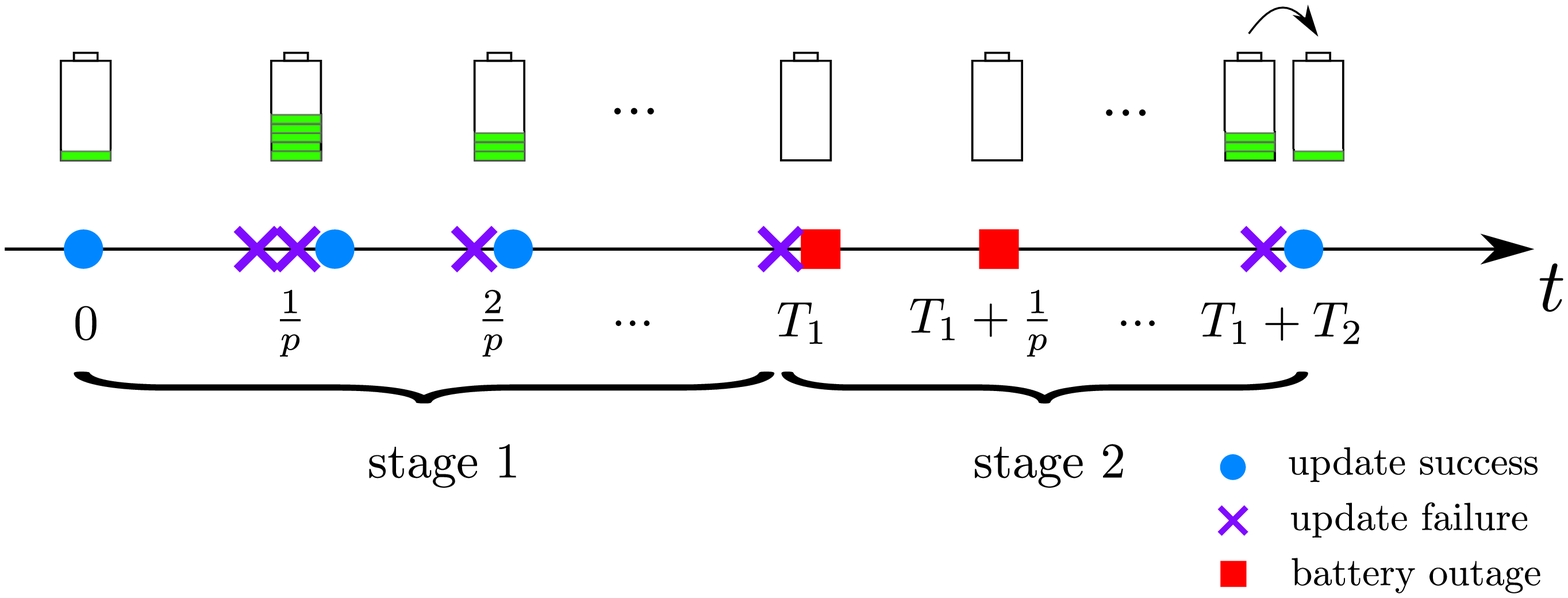}
	\vspace{-0.05in}
	\caption{\tcb{An illustration of the BUR-ER$_{T_0}$ updating policy and the battery level right after each updating epoch.  AoI will be reset to zero at the successul updating epochs.} }\label{fig:bur-er}
	\vspace{-0.1in}
\end{figure}

\if{0}
\begin{figure}[t]\centering
	\includegraphics[width=3.3in]{BURER_revision.eps}
	\vspace{-0.05in}
	\caption{\tcb{An illustration of the BUR-ER$_{T_0}$ updating policy and the battery level right after each updating epoch.  AoI will be reset to zero at the successul updating epochs.} }\label{fig:bur-er}
	\vspace{-0.1in}
\end{figure}
\fi

\begin{Lemma} \label{2lemma:N(T1)}
	Under BUR-ER$_{T_0}$ updating, $\lim_{T_0\rightarrow \infty} \Eb[T_1]=+\infty.$
\end{Lemma}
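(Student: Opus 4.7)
The plan is to mimic the template of Lemma~\ref{1lemma:N(T1)}: introduce an unconstrained ``ghost'' random walk that drives the battery evolution of BUR-ER$_{T_0}$ during stage one, reduce the claim to showing its expected first hitting time of $0$ is infinite, and then conclude $\lim_{T_0\to\infty}\Eb[T_1]=\infty$ via monotone convergence.

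First I would let $A_n$ be i.i.d.\ Poisson with mean $1/p$ (the energy harvested over a scheduled inter-update interval of length $1/p$ under BUR) and let $K_n$ be i.i.d.\ Geometric$(p)$ (the number of transmissions at $s_n$ needed to achieve the first success if the battery were unlimited). Define
\[
\Omega_n := 1 + \sum_{i=1}^n (A_i - K_i), \qquad \kappa := \inf\{n\geq 1:\Omega_n\leq 0\}.
\]
A short induction, using that $\Omega_n>0$ means the $K_n$ retransmission attempts are supported by the battery, shows that under BUR-ER$_{T_0}$ we have $E(s_n^+)=\Omega_n$ for every $n<\kappa$, while at step $\kappa$ the required retransmissions meet or exceed the available energy and thus $E(s_\kappa^+)=0$. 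Therefore $T_1 = \min(\kappa/p,\,T_0)$. Since this is nondecreasing in $T_0$ and $T_1\uparrow \kappa/p$ almost surely, monotone convergence gives $\lim_{T_0\to\infty}\Eb[T_1] = \Eb[\kappa]/p$. It thus suffices to prove $\Eb[\kappa]=\infty$.

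For this, I would observe that $\Eb[A_i-K_i] = 1/p - 1/p = 0$ and $\Eb[|A_i-K_i|]\leq 2/p$, so $\{\Omega_n\}$ is itself a martingale with uniformly bounded conditional $L^1$-increments. Suppose for contradiction that $\Eb[\kappa]<\infty$. Then Doob's optional stopping theorem (cf.~\cite{ross:1996}) applies and yields $\Eb[\Omega_\kappa] = \Eb[\Omega_0] = 1$. But $\Omega_\kappa\leq 0$ by the definition of $\kappa$, forcing $\Eb[\Omega_\kappa]\leq 0$, a contradiction. Hence $\Eb[\kappa]=\infty$, and the lemma follows.

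The main obstacle, relative to Lemma~\ref{1lemma:N(T1)}, is the possibility of overshoot: because $K_n$ is a random geometric rather than the deterministic $1$ of the BU-ER walk, $\Omega_\kappa$ can land strictly below $0$, so the clean identity $\Omega_\kappa=0$ used in Lemma~\ref{1lemma:N(T1)} is no longer available. This blocks a literal carry-over of the exponential-martingale/derivative calculation used there: differentiating $\Eb[\exp(-\alpha\Omega_\kappa-\kappa\gamma(\alpha))]$ now leaves a residual term $\Eb[\Omega_\kappa \exp(-\alpha\Omega_\kappa-\kappa\gamma(\alpha))]$ that must be controlled via auxiliary bounds such as $|\Omega_\kappa|\leq K_\kappa-1$ together with the joint law of $(\kappa,K_\kappa)$. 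The OST-by-contradiction route above sidesteps this entirely, since it exploits only the sign constraint $\Omega_\kappa\leq 0$ and the zero-drift martingale property of $\Omega_n$; an exponential-martingale proof in the exact style of Lemma~\ref{1lemma:N(T1)} is possible but would require this additional overshoot control before passing $\alpha\downarrow 0$ via Fatou.
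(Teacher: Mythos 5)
Your proof is correct, but it takes a genuinely different route from the paper's. The paper couples the battery to a walk truncated at zero, $\Omega_n=(\Omega_{n-1}+A_n-B_n)^+$, so that $\Omega_{T_1}=0$ exactly and no overshoot issue ever arises; it then runs an exponential argument in the style of Lemma~\ref{1lemma:N(T1)}: with $\gamma(\alpha)=\log\Eb[e^{-\alpha(A_n-B_n)}]$, the process $e^{-\alpha\Omega_n-n\gamma(\alpha)}$ is a \emph{super}martingale (the truncation only helps the inequality), optional stopping plus $e^{-x}\geq 1-x$ gives $\Eb[T_1]\geq (1-e^{-\alpha})/\gamma(\alpha)$, and letting $\alpha\to 0^+$ with $\gamma'(\alpha)\to 0^+$ yields the divergence. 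You instead keep the unreflected zero-drift walk, observe it coincides with the battery up to its first nonpositive value, reduce the claim via monotone convergence to $\Eb[\kappa]=\infty$, and obtain that by contradiction from Doob's optional stopping theorem under the bounded conditional $L^1$-increment condition ($\Eb[|A_i-K_i|]\leq 2/p$): $\Eb[\kappa]<\infty$ would force $\Eb[\Omega_\kappa]=\Omega_0=1$, impossible since $\Omega_\kappa\leq 0$. Your version is more elementary (no log-moment-generating-function computation, no $\alpha\downarrow 0$ limit) and it treats the $T_0$-truncation more carefully than the paper's ``set $T_0=\infty$'' shortcut; on the other hand, the overshoot obstacle you flag for the exponential route is precisely what the paper's $(\cdot)^+$ definition of the walk is designed to remove, so that route does go through without the auxiliary overshoot bounds you describe. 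The only cosmetic point is the epoch-versus-time bookkeeping ($\kappa$ counts scheduling epochs spaced $1/p$ apart, and stage one may also end at $T_0$), which you handle correctly and which only rescales the expectation by a constant.
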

\begin{proof}
	Consider a ``random walk" $\{\Omega_n\}_{n=0}^{\infty}$. It starts with $1$ and the evolves as $\Omega_n=(\Omega_{n-1}+A_{n}-B_{n})^+$, where $A_n$ is an i.i.d. Poisson random variable with parameter $\frac{1}{p}$ and $B_n$ is an i.i.d. geometric random variable with parameter $p$. Denote the first zero-hitting time for $\{\Omega_n\}_{n=0}^{\infty}$ as $T_1$. Then $\Omega_0=1$ and $\Omega_{T_1} = 0$. We note that when $T_0=\infty$, $\{\Omega_n\}_{n=0}^{T_1}$ is identical to the battery level evolution process $\{E(s_n^+)\}_{n=0}^{T_1}$ under the BUR-ER$_{T_0}$ updating policy.
	
	For ease of exposition, define $C_n:=A_n-B_n$, and $\gamma(\alpha):=\log\Eb[e^{-\alpha C_n}]$ for $\alpha>0$. Then,
	we have 
	\begin{align}\label{2eqn:gamma}
	\Eb[e^{-\alpha C_n-\gamma(\alpha)}]=1.
	\end{align} 
	Based on the definition of $A_n$, $B_n$ and $C_n$, we have 
	\begin{align}
	&\Eb[e^{-\alpha C_n}]
	=e^{\frac{1}{p}(e^{-\alpha}-1)}\frac{pe^\alpha}{1-(1-p)e^\alpha}.
	\end{align}
	Therefore,
	\begin{align}
	\gamma(\alpha)&=\log \Eb[e^{-\alpha C_n}]	=\frac{1}{p}(e^{-\alpha}-1)+\log \frac{pe^\alpha}{1-(1-p)e^\alpha}. \label{2eqn:marting-m1}
	\end{align}
	Taking derivative of (\ref{2eqn:marting-m1}), we get
	\begin{align}
	&\gamma'(\alpha)=-\frac{1}{p}e^{-\alpha}+\frac{1}{1-(1-p)e^\alpha}. \label{2eqn:marting-5}
	\end{align}

	Next, we define a process associated with $\{\Omega_n\}_{n=0}^{\infty}$ as $\{e^{-\alpha \Omega_n - n\gamma(\alpha)}\}_{n=0}^{\infty}$.
	We note that
	\begin{align}
	&\Eb [ e^{-\alpha \Omega_k -\gamma(\alpha)k} | \Omega_1,\ldots, \Omega_{k-1} ]  \nonumber\\
	&=\Eb [ e^{-\alpha (\Omega_{k-1}+C_{k})^+ -\gamma(\alpha)k} | \Omega_1,\ldots, \Omega_{k-1} ]  \nonumber\\
	&\leq  \Eb [ e^{-\alpha (\Omega_{k-1}+C_{k}) -\gamma(\alpha)k} | \Omega_1,\ldots, \Omega_{k-1} ] \nonumber\\
	&=e^{-\alpha \Omega_{k-1}-\gamma(\alpha)(k-1)} \Eb[e^{-\alpha C_{k}-\gamma(\alpha)}]   \nonumber\\
	&= e^{-\alpha \Omega_{k-1}-\gamma(\alpha)(k-1)} ,\label{2eqn:marting-1}
	\end{align}
	where (\ref{2eqn:marting-1}) follows from (\ref{2eqn:gamma}). Therefore, $\{e^{-\alpha \Omega_n - n\gamma(\alpha)}\}_{n=0}^{\infty}$ is a super-martingale process, i.e.,
	\begin{align*}  
	e^{-\alpha \Omega_0} &\geq \Eb [e^{-\alpha \Omega_{T_1} - \gamma(\alpha)T_1}] \geq \Eb[1-(\alpha \Omega_{T_1} +T_1 \gamma(\alpha))].
	\end{align*}
	Since $\Omega_0=1$ and $\Omega_{T_1} = 0$, combining with (\ref{2eqn:marting-5}), we have
	\begin{align}
	\Eb[T_1] &\geq \lim_{\alpha \rightarrow 0^+} \frac{1-e^{-\alpha \Omega_0} }{\gamma(\alpha)} = \lim_{\alpha \rightarrow 0^+}\frac{\Omega_0e^{-\alpha\Omega_0}}{\gamma'(\alpha)} = \infty.\label{2eqn:marting-4}
	\end{align}	
\end{proof}

\begin{Lemma}\label{2lemma:bounded}
	Under the BUR-ER$_{T_0}$ updating policy, $\Eb[T_2]$, $\Eb[T^2_2]$ are uniformly bounded.
\end{Lemma}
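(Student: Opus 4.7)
\medskip\noindent
\textbf{Proof plan.} The plan is to mimic the virtual-process construction of Lemma~\ref{1lemma:bounded}, adapted to handle the retransmission feature of BUR. Specifically, I would introduce a genie-aided virtual policy that starts at $T_1$ and that still attempts only at the scheduled epochs $\{s_n = n/p : s_n > T_1\}$, but (i) makes at most a single transmission attempt per epoch, and (ii) whenever that attempt fails and the residual battery is strictly positive, immediately depletes the battery to zero. The virtual second stage ends at the first epoch at which a successful single attempt leaves the battery at level at least one. Let $T_2'$ denote the duration of this virtual second stage.

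The first step is to couple BUR-ER$_{T_0}$ with the virtual policy via a common realization of the energy arrivals and per-epoch erasure sequences, and to show $T_2 \leq T_2'$ almost surely. I would argue by induction on the epoch index that the post-epoch battery under BUR-ER$_{T_0}$ weakly dominates that under the virtual policy: if the first erasure at $s_n$ is a success both policies consume one unit; if it is a failure the virtual policy removes all remaining energy, whereas BUR-ER$_{T_0}$ removes at most that amount. Moreover, whenever the virtual stopping event (a successful first erasure together with at least two units of pre-epoch virtual battery) occurs at $s_n$, the BUR-ER$_{T_0}$ battery at $s_n$ is also at least two and its first attempt also succeeds, so BUR-ER$_{T_0}$ has already stopped by $s_n$. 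Hence $T_2 \leq T_2'$ sample-path-wise.

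Next, I would analyze $T_2'$ directly. Under the virtual policy, after every non-stopping epoch the battery is reset to zero, because it is either depleted after a failure or drained to zero by a single successful transmission from battery level one. Consequently, between consecutive epochs after $T_1$, the battery regenerates as a fresh $\mathrm{Poisson}(1/p)$ random variable, and the stopping event at each epoch (beyond the first one after $T_1$) occurs i.i.d.\ with probability $\rho := p\,\Pb[\mathrm{Poisson}(1/p) \geq 2] > 0$. The number of epochs until stopping is therefore stochastically dominated by $1 + G$, where $G$ is geometric with parameter $\rho$, and $T_2' \leq (G+2)/p$. Since geometric random variables have finite first and second moments, both $\Eb[T_2']$ and $\Eb[(T_2')^2]$ are finite and depend only on $p$; by the coupling, the same uniform (in $T_0$) bounds hold for $\Eb[T_2]$ and $\Eb[T_2^2]$.

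The main obstacle is the coupling step, because BUR's retransmissions within a single epoch consume a random, erasure-dependent number of energy units, so one must check that both the per-epoch battery-dominance relation and the stopping implication hold simultaneously under the chosen coupling. Once this is established, the remainder is a routine geometric computation analogous to the BU case in Lemma~\ref{1lemma:bounded}.
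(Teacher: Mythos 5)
Your proposal is correct, but it takes a genuinely different route from the paper's own proof of Lemma~\ref{2lemma:bounded}. You adapt the genie-aided, battery-depleting virtual process used for the no-feedback case (Lemma~\ref{1lemma:bounded}): a single attempt per epoch, full depletion after a failure, a sample-path coupling showing $T_2\le T_2'$, and then an exact i.i.d.\ geometric analysis of $T_2'$ with parameter $\rho=p\,\Pb[\mathrm{Poisson}(1/p)\ge 2]$. The paper instead avoids any virtual process or coupling for this lemma: it observes that at any scheduled epoch of the second stage, if the number of energy arrivals $A_{n+1}$ in the preceding inter-epoch interval exceeds the (geometric, parameter $p$) number of attempts $B_{n+1}$ needed for a success by at least one, then the stage must terminate at that epoch \emph{regardless of the residual battery level}, since the arrivals alone suffice to fund all attempts and leave at least one unit afterwards; this event has a constant positive probability, i.i.d.\ across epochs, so the number of epochs in the second stage is dominated by a geometric random variable and $\Eb[T_2]$, $\Eb[T_2^2]$ are uniformly bounded. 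What each approach buys: the paper's argument is shorter because the retransmission-with-feedback structure makes the termination condition checkable from the fresh arrivals and erasures alone, sidestepping exactly the coupling step you flag as the main obstacle; your argument is somewhat longer but more explicit and arguably tighter in bookkeeping (the paper's phrase that ``$pT_2$ is a geometric random variable'' is really a stochastic-dominance statement, which your construction makes precise), and your dominance induction through BUR's variable per-epoch energy consumption, together with the observation that the virtual battery resets to zero after every non-stopping epoch, is sound and delivers bounds depending only on $p$, hence uniform in $T_0$ as required.
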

\begin{proof}
	Under BUR-ER$_{T_0}$ updating policy, the number of energy arrivals over $[\frac{n}{p},\frac{n+1}{p})$ (denoted as $A_{n+1}$) is a Poisson random variable with parameter $1/p$. If the source has sufficient energy, the total number of attempts at time $\frac{n+1}{p}$ (denoted as $B_{n+1}$) is an  i.i.d. geometric random variable with parameter $p$. Therefore, if the battery is empty at time $\frac{n}{p}$, it will increase to one or above after a successful update at time $\frac{n+1}{p}$ only when $A_{n+1}-B_{n+1}\geq 1$, which will happen with a constant probability. Thus, \tcb{$pT_2$} is a geometric random variable whose first and second moments are finite.
\end{proof}

\begin{Lemma} \label{2lemma:upperbound}
	As $T_0\rightarrow\infty$, the expected long-term average AoI under BUR-ER$_{T_0}$ updating is upper bounded by $\frac{1}{2p}$.
\end{Lemma}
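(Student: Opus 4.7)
The plan is to parallel the proof of Lemma~\ref{1lemma:AoI}: exploit the renewal structure of BUR-ER$_{T_0}$ so that it suffices to compute the ratio $\Eb[R_{\mathrm{renew}}]/\Eb[T_1+T_2]$, where $R_{\mathrm{renew}}$ is the cumulative AoI over a single renewal cycle. I would then decompose $R_{\mathrm{renew}}=R_1+R_2$, where $R_1$ and $R_2$ are the AoI contributions over $[0,T_1]$ and $[T_1,T_1+T_2]$, respectively, and handle the two pieces separately, letting $T_0\to\infty$ at the end.

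The crucial observation for $R_1$ is that, by the definition of $T_1$, every scheduled update time $s_n=n/p$ in $(0,T_1]$ is feasible, and since the source retransmits until success at each $s_n$ and transmission time is negligible, each scheduled epoch produces a successful update at exactly $s_n$. Hence $S_n = n/p$ and $X_n = 1/p$ deterministically for every $n\le N(T_1)$, giving
\begin{align*}
R_1 = \frac{N(T_1)}{2p^2} + \frac{(T_1-S_{N(T_1)})^2}{2},
\end{align*}
with the sample-path bound $T_1-S_{N(T_1)}\in\{0,1/p\}$ (depending on whether the final attempt at $T_1$ succeeded or exhausted the battery on a failure). The sandwich $N(T_1)/p \le T_1 \le (N(T_1)+1)/p$ combined with Lemma~\ref{2lemma:N(T1)} forces $\Eb[N(T_1)]/\Eb[T_1]\to p$, so the leading term $\Eb[N(T_1)]/(2p^2)$ becomes asymptotically $\Eb[T_1]/(2p)$, while the uniformly bounded boundary term $(T_1-S_{N(T_1)})^2/2\le 1/(2p^2)$ vanishes after division by $\Eb[T_1+T_2]\to\infty$.

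For $R_2$, I would use the crude sample-path bound that the AoI at time $T_1$ is at most $1/p$ and grows at most linearly (worst case: no resets) during the second stage, yielding $R_2 \le T_2/p + T_2^2/2$. Applying Lemma~\ref{2lemma:bounded} to bound $\Eb[T_2]$ and $\Eb[T_2^2]$, and Lemma~\ref{2lemma:N(T1)} to send the denominator $\Eb[T_1+T_2]$ to infinity, gives $\Eb[R_2]/\Eb[T_1+T_2]\to 0$. Note also that $\Eb[T_1+T_2]=\Eb[T_1]+O(1)$, so the denominator behaves like $\Eb[T_1]$ asymptotically. Combining the two pieces yields the claimed upper bound $\frac{1}{2p}$.

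The main obstacles I foresee are twofold. First, one must argue cleanly that during $[0,T_1]$ the inter-success gap is \emph{exactly} $1/p$ (sample-path-wise), exploiting the negligible-transmission-time assumption so that retransmissions at a scheduled $s_n$ do not spill past $s_{n+1}$; this is what decouples the randomness in $K_i$ from the randomness in $X_i$, and is the feature that makes this analysis work where the no-feedback analysis required an additional Wald/renewal argument. Second, one must handle the interplay between battery depletion and erasure randomness in $T_1$, $T_2$ and $T_1-S_{N(T_1)}$ carefully enough that the $O(1)$ boundary contributions are truly asymptotically negligible; the deterministic bounds $T_1-S_{N(T_1)}\le 1/p$ and $R_2\le T_2/p+T_2^2/2$, together with Lemmas~\ref{2lemma:N(T1)}--\ref{2lemma:bounded}, are what make these controllable.
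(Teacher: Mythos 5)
Your proposal is correct and follows essentially the same route as the paper: both exploit the renewal structure of BUR-ER$_{T_0}$, use the key deterministic fact that every scheduled epoch in $(0,T_1]$ is feasible so $X_i=1/p$ and $S_{N(T_1)}=N(T_1)/p$ exactly, and kill the boundary/second-stage contributions via $T_1-S_{N(T_1)}\leq 1/p$ together with Lemma~\ref{2lemma:N(T1)} and Lemma~\ref{2lemma:bounded}. The only (cosmetic) difference is bookkeeping: the paper lumps the residual into $(T_1+T_2-S_{N(T_1)})^2/2$ and bounds the main ratio by replacing the denominator with $\Eb[S_{N(T_1)}]=\Eb[N(T_1)]/p$, whereas you split the residual into two pieces and pass through $\Eb[N(T_1)]/\Eb[T_1]\to p$.
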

\begin{proof}
	First, we note that 
	\begin{align}
	\lim_{T_0\rightarrow \infty}\frac{\Eb[ (T_1+T_2-S_{N(T_1)})^2 ]}{2\Eb[T_1+T_2]}\leq \lim_{T_0\rightarrow \infty}\frac{\Eb[ (T_2+\frac{1}{p})^2 ]}{2\Eb[T_1]} =0 , \label{2eqn:usebounded-11} 
	\end{align}
	where (\ref{2eqn:usebounded-11}) follows from the fact that $T_1-S_{N(T_1)}$ is upper bounded by $1/p$ under the BU-ER$_{T_0}$ policy, Lemma \ref{2lemma:N(T1)} and Lemma \ref{2lemma:bounded}.
	
	Next, we note that the BU-ER$_{T_0}$ updating policy is a renewal policy and the expected long-term average AoI is equal to the expected average AoI over one renewal interval. \tcb{Therefore,}
	\begin{align}
	&\lim_{T_0\rightarrow\infty}\lim_{T\rightarrow\infty}\Eb\left[\frac{R(T)}{T}\right]\nonumber\\
	&	\leq \lim_{T_0\rightarrow\infty}\frac{\Eb[\sum_{i=1}^{N(T_1) }X_i^2 + (T_1+T_2-S_{N(T_1)})^2 ]}{2\Eb[T_1+T_2]} \\
	&\leq \lim_{T_0\rightarrow\infty}\frac{\Eb[\sum_{i=1}^{N(T_1)} X_i^2 ]}{2\Eb[S_{N(T_1)}]}  =\lim_{T_0\rightarrow\infty} \frac{\Eb[N(T_1)] \frac{1}{p^2}}{2 \Eb[N(T_1)] \frac{1}{p}} =\frac{1}{2p}   ,\label{2eqn:usebound-3}
	\end{align}
	where (\ref{2eqn:usebound-3}) follows from (\ref{2eqn:usebounded-11}) and the fact that $X_i=1/p$ for $i \leq N(T_1)$ and $S_{N(T_1)}=N(T_1)/p$.
\end{proof}

Lemma \ref{2lemma:upperbound} indicates that the expected time-average AoI under the BUR-ER$_{T_0}$ updating policy converges to the lower bound in Theorem~\ref{2thm:lower} as $T_0$ goes to infinity. According to Lemma \ref{2lemma:suboptimal}, BUR-ER$_{T_0}$ is suboptimal to BUR. Therefore, the BUR updating policy also achieves the lower bound, thus it is optimal.
We summarize the optimality result in the next theorem.

\begin{Theorem}[Optimality of BUR Updating]\label{2thm:myopic}
Among all policies in $\Pi_4$, the BUR updating policy is optimal when transmission feedback is available, i.e.,
\begin{align*}
\limsup_{T\rightarrow \infty}\Eb \left[\frac{R(T)}{T} \right] & =  \frac{1}{2p} .
\end{align*}
\end{Theorem}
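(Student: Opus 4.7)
The plan is to establish the theorem by a sandwich argument that combines the lower bound in Theorem~\ref{2thm:lower} with an asymptotic upper bound on the AoI under BUR. The key pieces have already been set up in the preceding lemmas, so the remaining task is largely to assemble them correctly and to verify one feasibility/membership condition.

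First, I would verify that the BUR updating policy itself lies in $\Pi_4$, so that the lower bound of Theorem~\ref{2thm:lower} applies to it. Under BUR, whenever $K_i = k$ successive attempts are needed between two successful deliveries, the inter-successful-update delay $X_i$ is bounded by $g(k)=k/p$ (since attempts are scheduled at integer multiples of $1/p$), and $\Eb[g^2(K_i)] = \Eb[K_i^2]/p^2<\infty$ because $K_i$ is geometric with parameter $p$. The second condition, $\Eb[M(t)-M(t-\Delta)]\le C\Delta$, follows because the scheduled update epochs are spaced by at least $1/p$ and each attempted transmission at a given epoch is a single event. Hence BUR $\in \Pi_4$, and Theorem~\ref{2thm:lower} gives
\begin{align}
\limsup_{T\to\infty}\Eb\!\left[\frac{R(T)}{T}\right] \ge \frac{1}{2p}
\end{align}
under BUR.

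Next, I would use the chain BUR-ER$_{T_0}\succeq$ BUR from Lemma~\ref{2lemma:suboptimal} together with Lemma~\ref{2lemma:upperbound} to obtain the matching upper bound. Specifically, for every $T_0>0$,
\begin{align}
\limsup_{T\to\infty}\Eb\!\left[\frac{R(T)}{T}\right]_{\mathrm{BUR}}
\le \limsup_{T\to\infty}\Eb\!\left[\frac{R(T)}{T}\right]_{\mathrm{BUR\text{-}ER}_{T_0}}.
\end{align}
Taking $T_0\to\infty$ and applying Lemma~\ref{2lemma:upperbound} gives an upper bound of $\frac{1}{2p}$. Combining with the lower bound from the previous step yields the equality $\frac{1}{2p}$.

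Finally, since every $\pi\in\Pi_4$ satisfies $\limsup_{T\to\infty}\Eb[R(T)/T]\ge \frac{1}{2p}$ by Theorem~\ref{2thm:lower}, and BUR achieves this value, BUR is optimal in $\Pi_4$. The main obstacle I anticipate is the verification that BUR belongs to $\Pi_4$: the first uniform-boundedness condition requires identifying an explicit $g(k)$ with $\Eb[g^2(K_i)]<\infty$, which works out cleanly for BUR because scheduled epochs are deterministic multiples of $1/p$, but it is worth noting that the argument would break if the energy-outage events under BUR could inflate $X_i$ beyond a function of $K_i$ alone. In BUR this does not happen sample-path-wise (the $n$-th scheduled epoch is at $n/p$ regardless of history, and $X_i$ is the gap between the two successful epochs among those), so the membership holds and the sandwich closes.
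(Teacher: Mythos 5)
Your overall assembly is the same as the paper's: the upper bound comes from Lemma~\ref{2lemma:suboptimal} (BUR dominates BUR-ER$_{T_0}$ sample-path-wise) together with Lemma~\ref{2lemma:upperbound} (BUR-ER$_{T_0}$ approaches $\frac{1}{2p}$ as $T_0\to\infty$), and optimality within $\Pi_4$ then follows from the class-wide lower bound of Theorem~\ref{2thm:lower}. However, the step you single out as the main obstacle --- verifying that BUR itself lies in $\Pi_4$ --- is where your argument breaks, and your resolution of it is false. Under BUR, an attempt is made at a scheduled epoch $s_n=n/p$ only if the battery is nonempty; a battery outage at an epoch adds $1/p$ to the inter-success delay $X_i$ without increasing the attempt count $K_i$. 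So given $K_i=k$, $X_i$ can be $j/p$ for arbitrarily large $j$ (e.g.\ $K_i=1$ with many empty-battery epochs before the single successful attempt), and there is no deterministic $g$ with $X_i\le g(K_i)$ sample-path-wise. Your check of the second condition of Definition~\ref{2dfn:uniform} is also off: BUR retransmits repeatedly at a single epoch, so the expected number of attempted updates in a short window containing an epoch does not vanish with $\Delta$, and $\Eb[M(t)-M(t-\Delta)]\le C\Delta$ fails for small $\Delta$. Hence BUR is not shown to be in $\Pi_4$, and the lower-bound half of your sandwich, as written, does not apply to BUR.

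Note that the paper never claims BUR $\in\Pi_4$; its proof only needs (i) every $\pi\in\Pi_4$ has expected long-term average AoI at least $\frac{1}{2p}$ (Theorem~\ref{2thm:lower}), and (ii) BUR achieves at most $\frac{1}{2p}$ via the virtual policies, so no policy in $\Pi_4$ can beat BUR. If you want the displayed equality (the ``$\ge$'' direction for BUR specifically), it can be recovered by an elementary direct bound rather than by membership in $\Pi_4$: under BUR successful updates occur at distinct scheduled epochs, so $X_i\ge 1/p$ and $X_i^2\ge X_i/p$; writing $x=T-S_{N(T)}$,
\begin{align}
R(T)\;\ge\;\frac{S_{N(T)}}{2p}+\frac{x^2}{2}\;=\;\frac{T-x}{2p}+\frac{x^2}{2}\;\ge\;\frac{T}{2p}-\frac{1}{8p^2}
\end{align}
deterministically, which gives $\limsup_{T\to\infty}\Eb[R(T)/T]\ge \frac{1}{2p}$ under BUR. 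With that replacement for your membership step, the rest of your argument matches the paper's proof.
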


\section{Simulation Results} \label{sec:simulation}
In this section, we evaluate the performances for the proposed status updating policies through simulations. For each case, we generate sample paths for the Poisson energy harvesting process with $\lambda=1$ and compute the sample average of the time average AoI over $1000$ sample paths.

\subsection{Status Updating Without Feedback}
First, we evaluate the BU updating policy in Fig.~\ref{fig:simu-1}. We vary $p=0.2,0.6,1.0$, and plot both the time average AoI as a function of $T$ and the corresponding lower bound in the figure. We observe that all time average AoI curves gradually approach the corresponding lower bound $\frac{2-p}{2p}$ as $T\rightarrow\infty$. The results show that the proposed BU updating policy is optimal. Note that the time average AoI is monotonically decreasing as $p$ increases. This is intuitive since channel with better quality, i.e., larger $p$, will render smaller time average AoI.

\begin{figure}[t]
	\centering
	\includegraphics[width=3.5in]{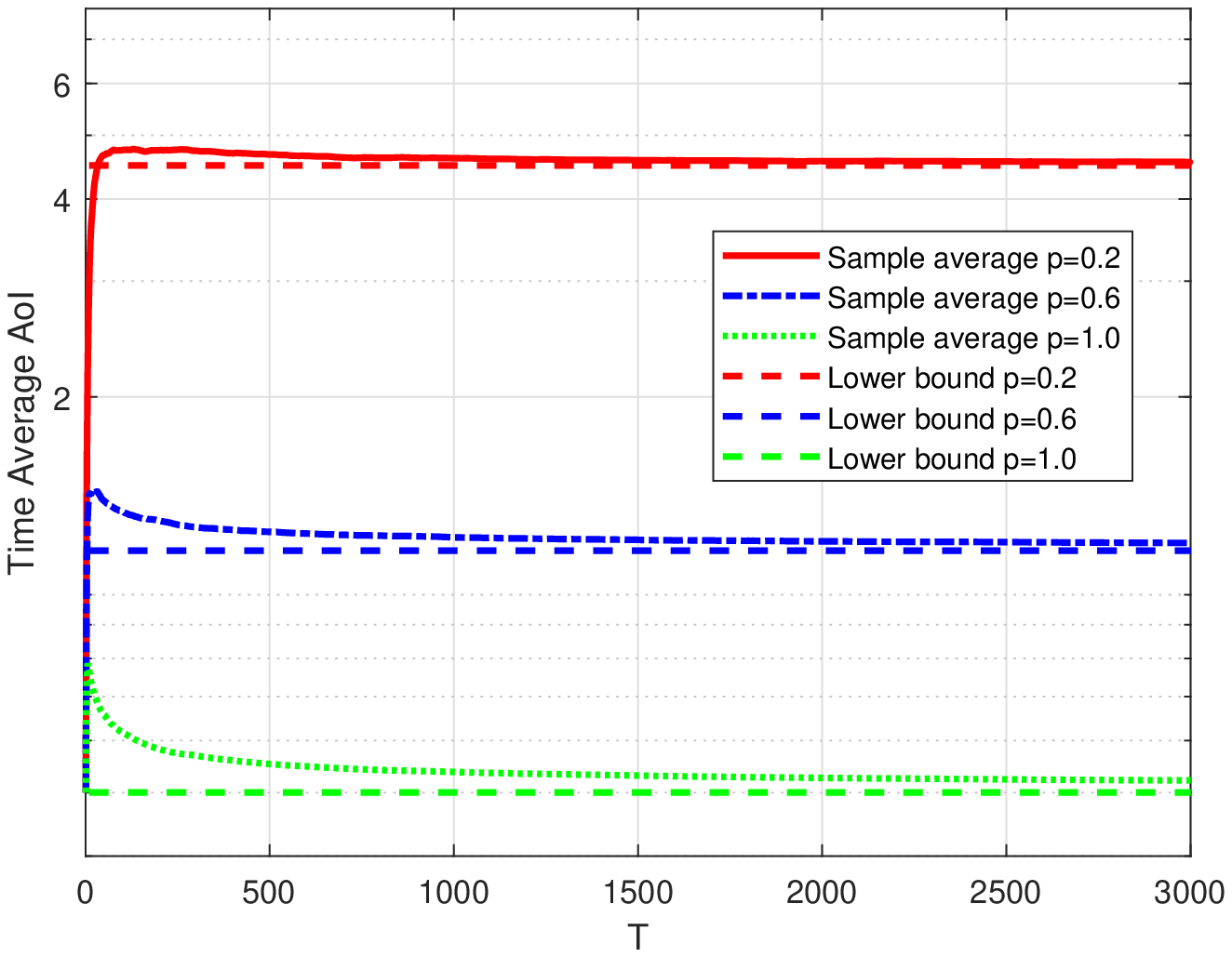}
	\caption{Performances of BU policy.}
	\label{fig:simu-1}
\end{figure}
\begin{figure}[t]
	\centering
	\includegraphics[width=3.5in]{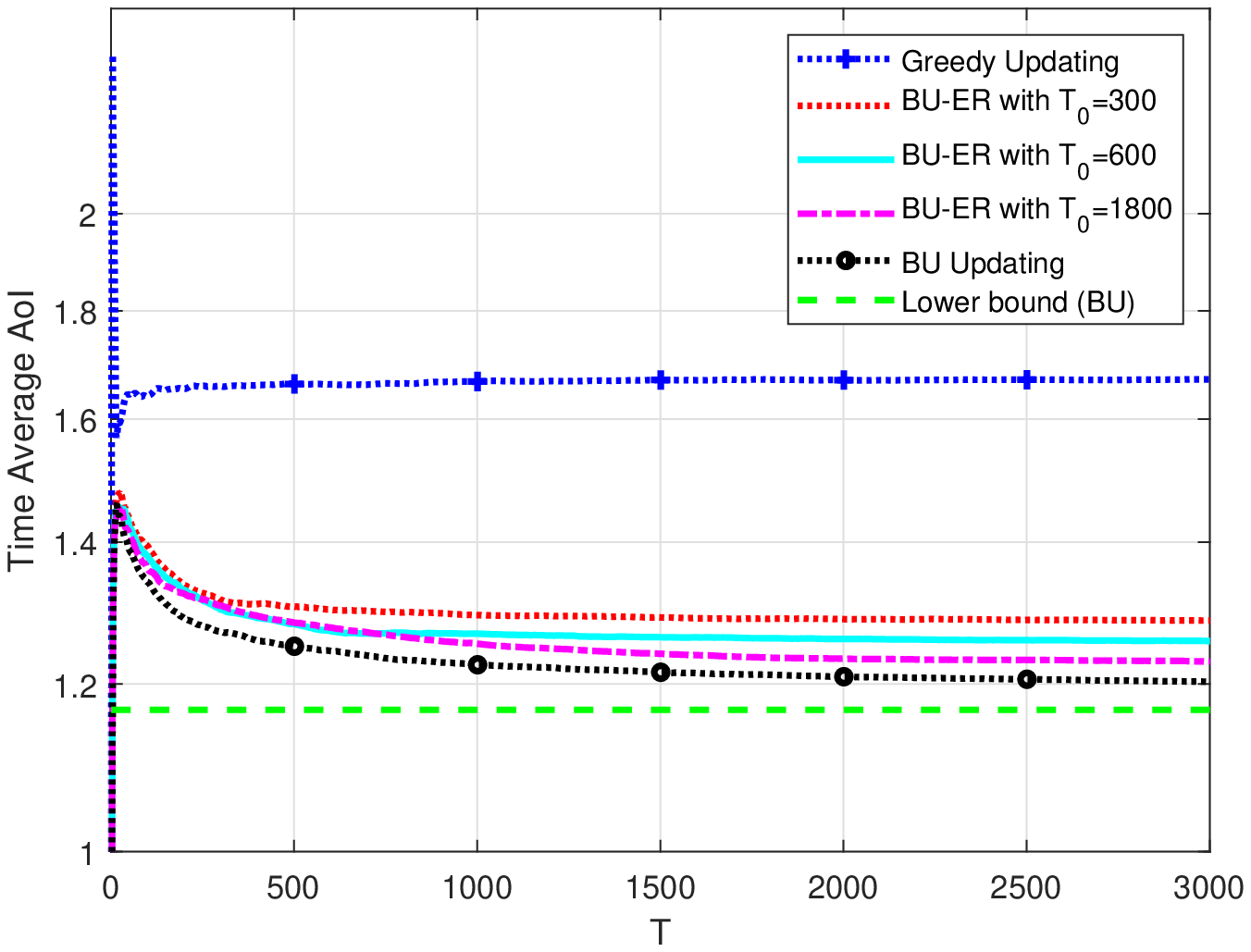}
	\caption{Performances of BU-ER policy.}
	\label{fig:simu-2}
\end{figure}
\begin{figure}[t]
	\centering
	\includegraphics[width=3.5in]{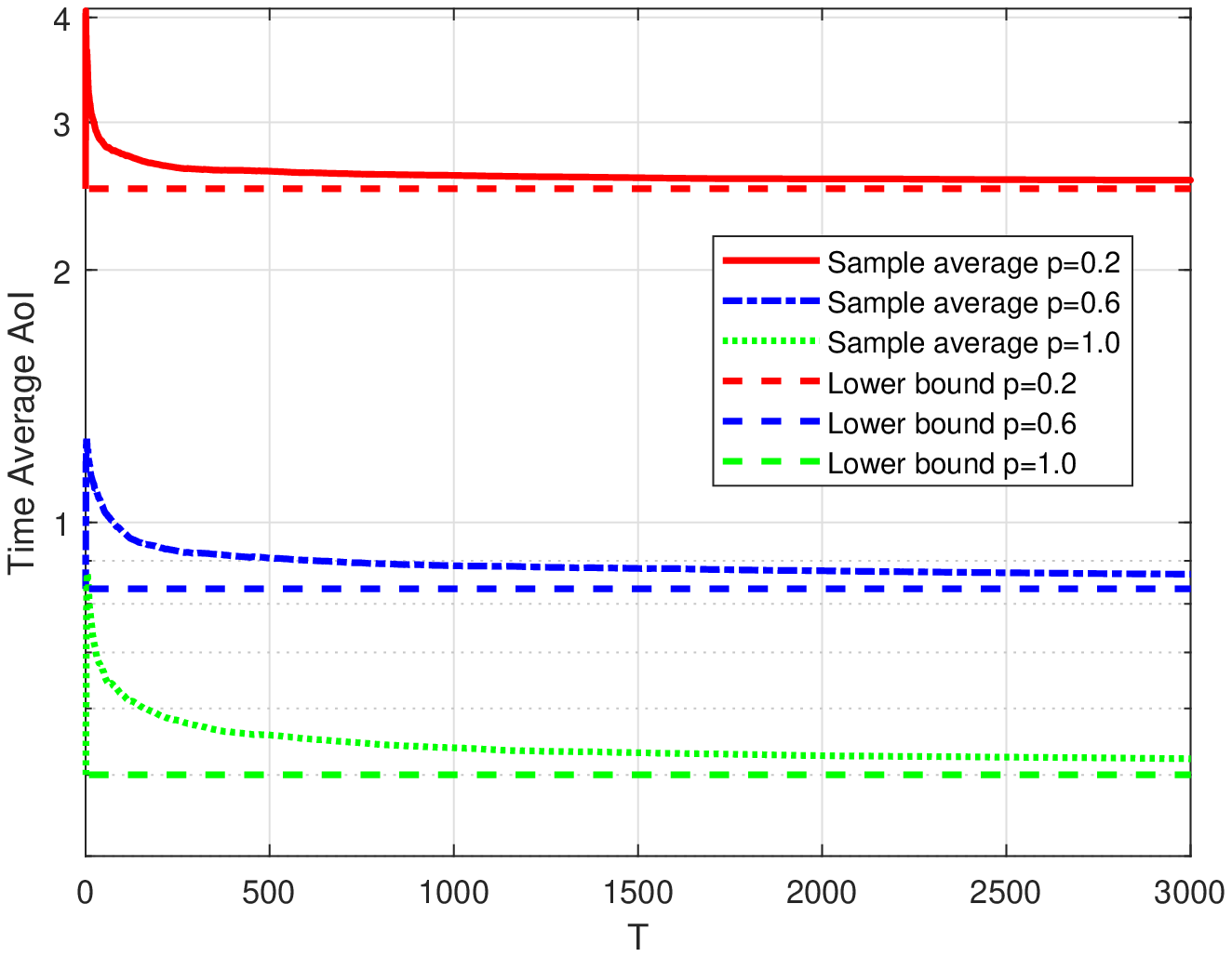}
	\caption{Performances of BUR policy.}
	\label{fig:simu-3}
\end{figure}
\begin{figure}[t]
	\centering
	\includegraphics[width=3.5in]{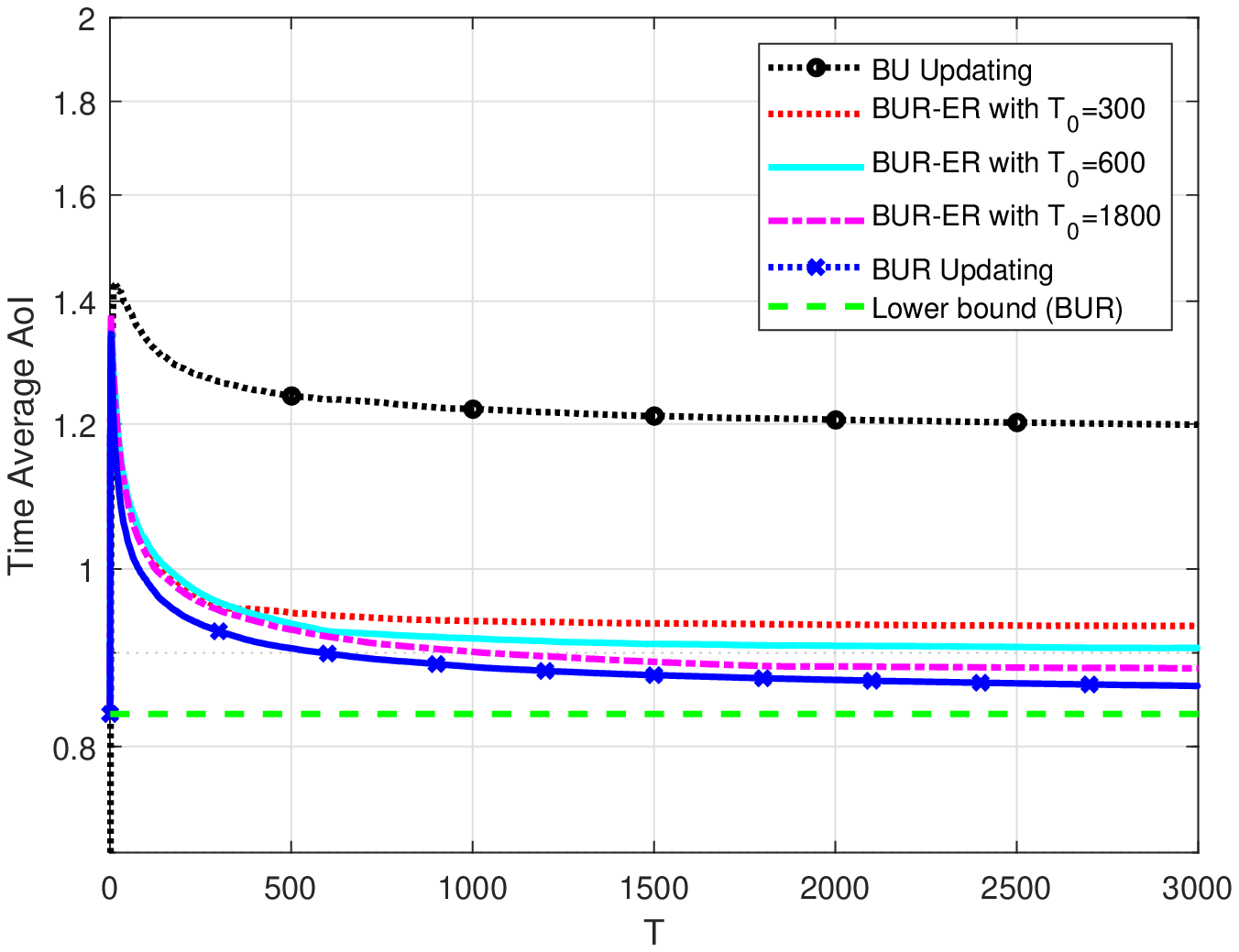}
	\caption{Performances of BUR-ER policy.}
	\label{fig:simu-4}
\end{figure}

Next, we evaluate the performances of virtual policies BU-ER$_{T_0}$ for different value of $T_0$ in Fig.~\ref{fig:simu-2}. We fix $p=0.6$ and plot the time average AoI under BU-ER$_{T_0}$ with $T_0=300,600,1800$. We also compare with a greedy updating policy and the BU updating policy. Under the greedy updating policy, the sensor updates instantly when one unit of energy arrives. As we observe in Fig.~\ref{fig:simu-2}, the greedy policy results in the highest average AoI, and never approaches the lower bound. The time averaged AoI under the BU-ER$_{T_0}$ updating policy is monotonically decreasing as $T_0$ increases, and gradually approaches that under the BU updating policy. This is consistent with Lemma~\ref{1lemma:suboptimal} and Lemma~\ref{1lemma:AoI} that BU-ER$_{T_0}$ updating is sub-optimal to BU updating, and eventually converges to it when $T_0$ increases.

\subsection{Status Updating With Perfect Feedback}

\if{0}
\begin{figure}[t]
	\centering
	\includegraphics[width=3.4in]{simu_nofb_1.eps}
	\caption{Performances of BU policy.}
	\label{fig:simu-1}
\end{figure}
\begin{figure}[t]
	\centering
	\includegraphics[width=3.4in]{nofeedback_greedy_compare-v2.eps}
	\caption{Performances of BU-ER policy.}
	\label{fig:simu-2}
\end{figure}

\subsection{Status Updating With Perfect Feedback}
\begin{figure}[t]
	\centering
	\includegraphics[width=3.4in]{simu_fb_1.eps}
	\caption{Performances of BUR policy.}
	\label{fig:simu-3}
\end{figure}
\begin{figure}[t]
	\centering
	\includegraphics[width=3.4in]{feedback_BU_compare-v2.eps}
	\caption{Performances of BUR-ER policy.}
	\label{fig:simu-4}
\end{figure}
\fi
Next, we evaluate the performances of the proposed online policies when perfect feedback is available to the sensor. In Fig.~\ref{fig:simu-3}, under the BUR updating policy, we plot the time average AoI with $p=0.2,0.6,1.0$ and the corresponding lower bound $\frac{1}{2p}$. We note that as $T\rightarrow\infty$, the time average AoI approaches the lower bound. Thus BUR updating is optimal. We then evaluate the performances of the BUR-ER$_{T_0}$ updating policy in Fig.~\ref{fig:simu-4}. We fix $p=0.6$, choose $T_0=300,600,1800$ and plot the time average AoI as a function of $T$. As a comparison, we also plot the time average AoI under the BU updating policy and the BUR updating policy in the figure. We note that the AoI under BUR-ER$_{T_0}$ gradually decreases and approaches that under the BUR updating policy as $T_0$ increases, which is consistent with Lemma~\ref{2lemma:suboptimal} and Lemma~\ref{2lemma:upperbound}. The performance gap between the BU updating and the BUR updating indicates that exploiting updating feedback can significantly reduces time average AoI in the system.

\section{Conclusions} \label{sec:conclusion}
In this paper, we considered the optimal online status update policies for an energy harvesting source in presence of updating erasures. We investigated both cases where no updating feedback or perfect feedback is available to the source. 
For each case, we first obtained a lower bound and then proved the proposed status updating policy can achieve the lower bound among a broadly defined class of policies. The optimality of proposed status update policies were proved through constructing a sequence of virtual status updating policies which are sub-optimal to the original policy and asymptotically achieve the lower bound. The performances of the proposed policies were evaluated through simulations. We point out that although we only showed the optimality of the proposed policies within a subset of online policies, we conjecture that their optimality can be extended for all online policies. How to generalize the results is one of our future steps. Another direction we would like to pursue is to investigate the impact of update erasures on the optimal updating policy for an EH source with finite battery.

\appendix
\subsection{Proof of Theorem \ref{1thm:lower}}  \label{app:1thm:lower}

Define $S_{i}^{T}:=\min\{S_i,T\}$, $l_{n}^{T}:=\min\{l_n,T\}$, and $p_n:=(1-p)^{n-1}p$. Then, under any $\pi\in\Pi_3$, the expected average AoI over $[0,T]$ can be expressed as
\begin{align} 
&\Eb\left[\frac{R(T)}{T}\right]  
=\frac{1}{T}\Eb\left[\sum_{i=0}^{N(T)}\frac{(\tcb{S_{i+1}^T} - S_i)^2}{2}\right] 
\label{1eqn:lowbound-11} \\
&=\frac{1}{2T}\Eb\left[ \sum_{n=1}^{M(T)} p_n l_n^2+\left(1-\sum_{n=1}^{M(T)}p_n\right) T^2 \right. \nonumber \\
&\left. \qquad + \sum_{n=1}^{M(T)} \sum_{j=1}^{\infty} (l_{n+j}^T - l_n)^2p p_j 
\right], \label{1eqn:lowbound-12}
\end{align}
where the first two terms inside the expectation in (\ref{1eqn:lowbound-12}) correspond to the AoI contribution over $[0,S^T_1]$, and the last term correspond to the AoI contribution over any other $[S_i,S^T_{i+1}]$. This can be explained as follows. With fixed updating epochs $\{l_n\}$, depending on the realization of the channel state, the interval $[0,T]$ can be decomposed into segments, separated by successful updates. The probability to have $[l_n, l^T_{n+j}]$, $1\leq n\leq M(T),j\geq 1$, as one of such segment equals $pp_j$, which corresponds to the event that update at $l_n$ succeeds, and the next successful update is at $l_{n+j}$. The corresponding AoI contribution over $[l_n, l^T_{n+j}]$ thus needs to be weighted by $pp_j$ when the expected AoI is calculated. Since the AoI contribution over $[0,S^T_1]$ is always positive, in the following, we will drop it to obtain a lower bound, i.e., 
\begin{align}
&\lim_{T\rightarrow \infty}\Eb\left[\frac{R(T)}{T}\right]  \nonumber \\
&\geq \lim_{T\rightarrow \infty}\frac{1}{2T} \Eb\left[ p\sum_{j=1}^{\infty}p_j \sum_{n=1}^{M(T)} (l_{n+j}^{T}-l_n)^2  \right]\\
&\geq \lim_{T\rightarrow \infty} \frac{1}{2T} \Eb\left[ p\sum_{j=1}^{\infty}p_j 
\frac{1}{M(T)} \left(\sum_{n=1}^{M(T)}(l_{n+j}^T-l_n) \right)^2  \right] \label{1eqn:lowbound-3revise} \\
& = \lim_{T\rightarrow \infty} \frac{1}{2T} \Eb\left[ p\sum_{j=1}^{\infty}p_j \frac{1}{M(T)} \left(jT-\sum_{n=1}^{j} l^T_n\right)^2  \right]\label{1eqn:lowbound-3}  \\
&
= \lim_{T\rightarrow \infty} \frac{1}{2}p\sum_{j=1}^{\infty}p_{j} j^2 \Eb\left[\frac{(T-\bar{l}_j^T)^2}{M(T)T} \right],\label{1eqn:lowbound-31}
\end{align}
where \tcb{(\ref{1eqn:lowbound-3revise}) is based on a consequence of Jensen's inequality that $\frac{1}{n}\sum_{i=1}^n x_i^2\geq \left(\frac{1}{n}\sum_{i=1}^n x_i\right)^2$ for any $x_i\in\Rb$ and (\ref{1eqn:lowbound-3}) is obtained after rearranging the items in the  summation in (\ref{1eqn:lowbound-3revise}) and considering the cases $j\leq M(T)$ and $j>M(T)$ separately. After extracting a factor $j^2$ from the squared summation in (\ref{1eqn:lowbound-3}) and pushing the factor $\frac{1}{T}$ and the expectation operator into the summation, we obtain (\ref{1eqn:lowbound-31}), where $\bar{l}_j^T:=\sum_{n=1}^{j}l_n^T/j$.}

Since each term in the summation in (\ref{1eqn:lowbound-31}) is positive, we can switch the order of limit and summation. We note that for any given $j$, $\Eb[\bar{l}_j^T]\leq \Eb[l_j]<\infty$ according to the definition of bounded policy. Besides, for any policy that renders a finite expected average AoI, we must have $\lim_{T\rightarrow \infty}M(T)=\infty$ almost surely according to Lemma~\ref{1lemma:finiteAoI}. Therefore, according to the bounded convergence theorem~\cite{Royden:Math}, we have
\begin{align}
\lim_{T\rightarrow \infty}\Eb\left[\frac{\bar{l}^T_j}{M(T)}\right]=0, \quad \lim_{T\rightarrow \infty}\Eb \left[\frac{(\bar{l}_j^T)^2}{M(T)T} \right]=0 .
\end{align} 
Combining with (\ref{1eqn:lowbound-31}), we have
\begin{align}
&\lim_{T\rightarrow \infty}\Eb\left[\frac{R(T)}{T}\right]\geq 
\frac{1}{2}p\sum_{j=1}^{\infty}p_j j^2 \lim_{T\rightarrow \infty} \Eb\left[\frac{T}{M(T)}\right]
\\
&\geq\frac{1}{2}p \sum_{j=1}^{\infty} j^2 (1-p)^{j-1}p  =\frac{2-p}{2p}, \label{1eqn:lowbound-4}
\end{align}
where the first inequality follows from Lemma~\ref{1lemma:rate}.

\subsection{Proof of Lemma~\ref{2lemma:extra-2}}  \label{app:{2lemma:extra}}
	We first prove $\lim_{T\rightarrow\infty}\frac{\Eb \left[X^2_{N(T)+1}\right]}{T}=0$.
	
	Denote $F_n(t)$ as the cumulative distribution function of $S_n$ under a uniform bounded policy, i.e., $F_n(t)=\Pb[S_n\leq t]$.
	Recall that $N(t)$ is the number of status updates successfully received at the destination over $(0,t]$. We have
	\begin{align}
	\Eb[N(t)]=\sum_{n=0}^{\infty} F_n(t)  \label{2eqn:extra-0}  .
	\end{align}

We note that
	\begin{align}
	&\Eb[X_{n+1}^2 \lv_{S_{n+1}>T} | S_n=t ]  \nonumber \\
	&=\Eb[X_{n+1}^2 \lv_{X_{n+1}>T-t} | S_n=t ]  \\
	&\leq \Eb_k [ g^2(k) \lv_{g(k)>T-t} | S_n=t, K_{n+1}=k  ] \label{2eqn:extra-1} \\
	&=\Eb_k [ g^2(k) \lv_{g(k)>T-t}  | K_{n+1}=k ] \label{2eqn:extra-2} \\
	&:=G(T-t)  \label{2eqn:extra-12} ,
	\end{align}
	where (\ref{2eqn:extra-1}) follows from the definition of uniformly bounded policy and (\ref{2eqn:extra-2}) follows from the fact that $g(k)$ is independent of other parameters. 
	We note that
	\begin{align}
	\lim_{\Delta \rightarrow \infty} G(\Delta) = 0  \label{2eqn:extra-13} .
	\end{align}

	Besides,
	\begin{align}
	&\Eb[X_{N(T)+1}^2] \nonumber\\
	&=\sum_{n=0}^{\infty} \int_{0}^{T} \Eb[X_{n+1}^2 \lv_{S_{n+1}>T} | S_n=t ] dF_n(t)  \\
	&\leq \int_{0}^{T} G(T-t) d\left( \sum_{n=0}^{\infty}F_n(t) \right)  \label{2eqn:extra-3} \\
	&=\int_{0}^{T} G(T-t) d\Eb[N(t)]  \label{2eqn:extra-4}  ,
	\end{align}
	where (\ref{2eqn:extra-3}) follows from (\ref{2eqn:extra-12}), and (\ref{2eqn:extra-4}) follows from (\ref{2eqn:extra-0}).
	
	For any fixed $\Delta$ satisfying $0\leq \Delta \leq T$, we have
\begin{align}
&\frac{1}{T}\int_{0}^{T} G(T-t) d\Eb[N(t)] \nonumber\\
&=\frac{1}{T} \hspace{-0.04in} \int_{0}^{T-\Delta} \hspace{-0.05in} G(T-t) d\Eb[N(t)]  \hspace{-0.02in} + \hspace{-0.02in} \frac{1}{T}\hspace{-0.04in} \int_{T-\Delta}^{T} \hspace{-0.05in} G(T-t) d\Eb[N(t)]  \\
&\leq G(\Delta)\frac{\Eb[N(T-\Delta)]}{T} + G(0)\frac{\Eb[N(T)]-\Eb[N(T-\Delta)]}{T}   \label{2eqn:extra-5}  ,
\end{align}
	where (\ref{2eqn:extra-5}) follows from that fact that $G(t)$ is a non-increasing function.
	
	Recall that $M(t)$ is defined as the total number of attempted status updates over $(0,t]$, which is upper bounded by the total number of energy arrivals $A(t)+E_0$ due to the energy causality constraint. We observe that
	\begin{align}
	&\lim_{T\rightarrow\infty} G(\Delta)\frac{\Eb[N(T-\Delta)]}{T}  \nonumber\\
	&	=\lim_{T\rightarrow\infty} G(\Delta) \frac{\Eb[pM(T-\Delta)]}{T} \label{2eqn:extra-a1} \\
	&\leq\lim_{T\rightarrow\infty} G(\Delta) \frac{p\Eb[A(T-\Delta)+E_0]}{T} \\
	&	=\lim_{T\rightarrow\infty} G(\Delta) \frac{p(T-\Delta+E_0)}{T} 
	=pG(\Delta)    \label{2eqn:extra-a4}.
	\end{align}

	Based on the definition of uniformly bounded policy in Definition \ref{2dfn:uniform}, we have
	\begin{align}
	&\lim_{T\rightarrow\infty}\frac{\Eb[N(T)]-\Eb[N(T-\Delta)]}{T} \nonumber\\
	&=\lim_{T\rightarrow\infty}\frac{p\Eb[M(T)]-p\Eb[M(T-\Delta)]}{T}  \\
	&\leq \lim_{T\rightarrow\infty} \frac{pC\Delta}{T}  
	=0 \label{2eqn:extra-b1}.
	\end{align}

	Combining (\ref{2eqn:extra-5}), (\ref{2eqn:extra-a4}) and (\ref {2eqn:extra-b1}), we have
	\begin{align}
	\lim_{T\rightarrow \infty} \frac{1}{T}\int_{0}^{T} G(T-t) d\Eb[N(T)]  =pG(\Delta)  \label{2eqn:extra-8} 
	\end{align}
	for any $\Delta \geq 0$. Therefore, by letting $\Delta\rightarrow\infty$ we have $\lim_{T\rightarrow\infty}\frac{\Eb \left[X^2_{N(T)+1}\right]}{T}=\lim_{\Delta \rightarrow \infty}pG(\Delta)=0$, where the last equality follows from (\ref{2eqn:extra-13}).
	
   Since $\Eb^2[X_{N(T)+1}]\leq \Eb[X_{N(T)+1}^2]$, we have $\lim_{T\rightarrow\infty}\frac{\Eb \left[X_{N(T)+1}\right]}{T}=0$ as well.

\subsection{Proof of Theorem \ref{2thm:comparison}}  \label{app:2thm:comparison}
The proof is adapted from the proof of Theorem~3 in \cite{Yang:2017:AoI}. For the completeness of this paper, we provide the detailed proof here.

We define
\begin{align}  
\hat{X}_{i+1}(k) &:=\Eb[X_{i+1}| \tcb{i\leq N(T)}, K_{i+1}=k] \\
&=\frac{\Eb[X_{i+1} \lv_{i\leq N(T)}| K_{i+1}=k]}{\Eb[ \lv_{i\leq N(T)}| K_{i+1}=k]} \\
&=\frac{\Eb[X_{i+1} \lv_{i\leq N(T)}| K_{i+1}=k]}{\Eb[ \lv_{i\leq N(T)}]}\label{2eqn:hatX}  ,
\end{align}
where the last equality follows from the fact that the two events \tcb{$i\leq N(T)$} and $K_{i+1}=k$ are independent \tcb{of} each other under any online policy in $\Pi_5$.

Taking expectation on both sides of $(\ref{2eqn:hatX})$ with respect to $k$, we have
\begin{align}\label{2eqn:hatX_1order}
\Eb_k[\hat{X}_{i+1}(k)] \cdot \Eb[ \lv_{i\leq N(T)}] = \Eb[X_{i+1} \lv_{i\leq N(T)}]   .
\end{align}
Meanwhile, we note that
\begin{align}
&\left(\hat{X}_{i+1}(k) {\Eb[ \lv_{i\leq N(T)}]}\right)^2
=\left(\Eb\left[X_{i+1}  \lv_{i\leq N(T)}\middle| K_{i+1}=k\right]\right)^2\\
& \leq \Eb\left[X^2_{i+1} \lv_{i\leq N(T)}\middle| K_{i+1}=k\right]\Eb[ \lv_{i\leq N(T)}| K_{i+1}=k]  \label{2eqn:cauchy-1} \\
&=  \Eb\left[X^2_{i+1} \lv_{i\leq N(T)}\middle| K_{i+1}=k\right]\Eb[ \lv_{i\leq N(T)}]\label{2eqn:cauchy-2}  ,
\end{align}
where (\ref{2eqn:cauchy-1}) follows from the Cauchy-Schwartz inequality.
Dividing both sides of (\ref{2eqn:cauchy-2}) by $\Eb[ \lv_{i\leq N(T)}]$ and taking expectation with respect to $k$, we have
\begin{align}
\Eb_k\left[\hat{X}^2_{i+1}(k)\right] \Eb[ \lv_{i\leq N(T)}] & \leq \Eb\left[X^2_{i+1} \lv_{i\leq N(T)}\right]\label{2eqn:hatX_2order}    .
\end{align}

Next, based on Lemma~\ref{2lemma:extra-2}, we have
\begin{align}
&\lim_{T\rightarrow \infty} \frac{\Eb\left[R(T) \right] }{T}=\lim_{T\rightarrow \infty} \frac{\Eb\left[\sum_{i=1}^{N(T)+1} X_i^2 \right] }{2T}
\\
&= \lim_{T\rightarrow \infty} \frac{\sum_{i=0}^{\infty} \Eb\left[ X_{i+1}^2\lv_{i\leq N(T)}\right]}{2T}\\
&\geq \lim_{T\rightarrow \infty}\frac{  \Eb\left[\sum_{i=0}^{\infty} X_{i+1}^2 \lv_{i\leq N(T)}\right]}{2 \Eb[\sum_{i=0}^{\infty} X_{i+1} \lv_{i\leq N(T)}]} \\
&\geq \lim_{T\rightarrow \infty} \frac{\sum_{i=0}^{\infty}\Eb_k [\hat{X}^2_{i+1}(k)] \cdot \Eb[ \lv_{i\leq N(T)}] }{2 \sum_{i=0}^{\infty} \Eb_k[\hat{X}_{i+1}(k)] \cdot \Eb[ \lv_{i\leq N(T)}]}\label{2eqn:hatX2}   ,
\end{align}
where in (\ref{2eqn:hatX2}) the first inequality follows from the fact that $T\leq \sum_{i=0}^{\infty} X_{i+1} \lv_{i\leq N(T)}$ for every sample path, and the second inequality follows from (\ref{2eqn:hatX_1order}) and (\ref{2eqn:hatX_2order}).

Define 
\begin{align}\label{2eqn:rho}
\rho_{i+1}&:= \frac{\Eb_k[\hat{X}_{i+1}(k)] \cdot \Eb[ \lv_{i\leq N(T)}]}{ \sum_{i=0}^{\infty} \Eb_k[\hat{X}_{i+1}(k)] \cdot \Eb[ \lv_{i\leq N(T)}]}.
\end{align}
We note that $\{\rho_{i+1}\}_{i=0}^\infty$ is a valid distribution. Therefore, based on Cauchy-Schwartz inequality, we have
\begin{align} \label{2eqn:cauchy}
&\left(\sum_{i=0}^{\infty}\frac{\hat{X}^2_{i+1}(k)}{\Eb[\hat{X}_{i+1}(k)]} \rho_{i+1}\right) \left(\sum_{i=0}^{\infty}\Eb[\hat{X}_{i+1}(k)] \rho_{i+1}\right)  \nonumber\\
&\geq \left(\sum_{i=0}^{\infty} \hat{X}_{i+1}(k) \rho_{i+1}\right)^2 := \left(\bar{X}(k)\right)^2 ,
\end{align}
where $\bar{X}(k):=\sum_{i=0}^{\infty}\hat{X}_{i+1}(k) \rho_{i+1}$. This is equivalent to
\begin{align}
\sum_{i=0}^{\infty}\frac{\hat{X}_{i+1}^2(k)}{\Eb[\hat{X}_{i+1}(k)]}\rho_{i+1}
\geq
\frac{\left(\bar{X}(k)\right)^2}{\sum_{i=0}^{\infty}\Eb[\hat{X}_{i+1}(k)] \rho_{i+1}}
=
\frac{\left(\bar{X}(k)\right)^2}{\Eb[\bar{X}(k)]}. \label{2eqn:add-2}
\end{align}
We note that (\ref{2eqn:hatX2}) equals $\lim_{T\rightarrow \infty} \sum_{i=0}^{\infty}\frac{\Eb_k [\hat{X}^2_{i+1}(k)]  }{2  \Eb_k[\hat{X}_{i+1}(k)] } \rho_{i+1} $,
which is lower bounded by $\lim_{T\rightarrow \infty} \frac{\Eb[\bar{X}^2(k)]}{2\Eb[\bar{X}(k)]}$ according to (\ref{2eqn:add-2}).

Before we proceed to define the renewal policy, we will first show that $\bar{X}(k+1)-\bar{X}(k)\geq 0$ for $k=1,2,\ldots$. Consider the $(i+1)$st inter-update delay $X_{i+1}$ where $i\leq N(T)$. 
Group all sample paths that share the same history up to the $k$th attempt together. Depending on whether the $k$th attempt is successful, we can further divide them into two subgroups. Then, all those who fail at the $k$th attempt will experience longer inter-update delay than those who succeed at the $k$th attempt. Since each attempt is successful with probability $p$ independently, after taking expectation over all such sample paths we must have
\begin{align}
\hat{X}_{i+1}(k+1)\geq\hat{X}_{i+1}(k) .  \label{2eqn:add-1}
\end{align}
From (\ref{2eqn:add-1}) and the definition of $\bar{X}(k)$, we then have $\bar{X}(k+1)-\bar{X}(k)\geq 0$ for $k=1,2,\ldots$.

Then, we define the a renewal policy as follows: Starting at $t=0$, the sensor will first update at time $\bar{X}(1)$ and observe the feedback. If the update is successful, the sensor will wait for $\bar{X}(0)$ and update again; Otherwise, it will update again after waiting for $\bar{X}(2)-\bar{X}(1)$. The process continues after waiting for $\bar{X}(k+1)-\bar{X}(k)$, where $k-1$ is the number of {\it failed} updated since the last successful update.

Define
$q_i :=\frac{\Eb[\lv_{i\leq N(T)}]}{\Eb[N(T)+1]}   .$
We note that $\sum_{i=0}^{\infty}q_i=1$, thus $\{q_i\}_{i=0}^{\infty}$ is a valid distribution. 

Based on the definitions of $\bar{X}(k)$ and $\rho_{i+1}$, we have
\begin{align}
&\Eb_k[\bar{X}(k)]\nonumber \\
&=\Eb_k\left[ \sum_{i=0}^{\infty}\hat{X}_{i+1}(k) \frac{\Eb_k[\hat{X}_{i+1}(k)] \cdot \Eb[ \lv_{i\leq N(T)}]}{ \sum_{i=0}^{\infty} \Eb_k[\hat{X}_{i+1}(k)] \cdot \Eb[ \lv_{i\leq N(T)}]} \right]   \\
&=\frac{\sum_{i=0}^{\infty}\Eb_k^2 [\hat{X}_{i+1}(k)] \cdot \Eb[\lv_{i\leq N(T)}]}{\sum_{i=0}^{\infty}\Eb_k [\hat{X}_{i+1}(k)] \cdot \Eb[\lv_{i\leq N(T)}]} \\
&=\frac{\sum_{i=0}^{\infty}q_i\Eb^2[\hat{X}_{i+1}] 
}{\sum_{i=0}^{\infty}q_i\Eb[\hat{X}_{i+1}]   }  \geq \frac{\left(\sum_{i=0}^{\infty}q_i\Eb[\hat{X}_{i+1}]\right)^2}{\sum_{i=0}^{\infty}q_i\Eb[\hat{X}_{i+1}]}
\label{2eqn:renew-31}  \\
&=\sum_{i=0}^{\infty}q_i\Eb[\hat{X}_{i+1}] =\sum_{i=0}^{\infty} \Eb[\hat{X}_{i+1}] \frac{\Eb[\lv_{i\leq N(T)}]}{\Eb[N(T)+1]}    \\
&=\frac{\sum_{i=0}^{\infty} \Eb[X_{i+1} \lv_{i\leq N(T)}]}{\Eb[N(T)+1]}  \geq \frac{T}{\Eb[N(T)+1]}  \label{2eqn:renew-4}   
\end{align}
where (\ref{2eqn:renew-31}) follows from Jensen's inequality and (\ref{2eqn:renew-4}) follows from (\ref{2eqn:hatX_1order}).

Let $\bar{N}(T)$ denote the number of completed renewal intervals under policy $\{\bar{X}(k)\}$ by time $T$.  Then, according to the elementary renewal theorem~\cite{ross:1996}, 
\begin{align*}
\lim_{T\rightarrow \infty} \frac{\Eb[\bar{N}(T)]}{T} = \lim_{T\rightarrow\infty} \frac{1}{\Eb_k[\bar{X}(k)]}\leq  \lim_{T\rightarrow\infty}\frac{\Eb[N(T)+1]}{T} \leq p.
\end{align*}
Therefore, for any $\pi\in\Pi_5$, we can always construct a renewal policy that is also in $\Pi_5$, and achieves a shorter long-term average AoI.

\bibliographystyle{IEEEtran}
\bibliography{AgeInfo,ener_harv}

\begin{thebibliography}{10}
\providecommand{\url}[1]{#1}
\csname url@samestyle\endcsname
\providecommand{\newblock}{\relax}
\providecommand{\bibinfo}[2]{#2}
\providecommand{\BIBentrySTDinterwordspacing}{\spaceskip=0pt\relax}
\providecommand{\BIBentryALTinterwordstretchfactor}{4}
\providecommand{\BIBentryALTinterwordspacing}{\spaceskip=\fontdimen2\font plus
\BIBentryALTinterwordstretchfactor\fontdimen3\font minus
  \fontdimen4\font\relax}
\providecommand{\BIBforeignlanguage}[2]{{%
\expandafter\ifx\csname l@#1\endcsname\relax
\typeout{** WARNING: IEEEtran.bst: No hyphenation pattern has been}%
\typeout{** loaded for the language `#1'. Using the pattern for}%
\typeout{** the default language instead.}%
\else
\language=\csname l@#1\endcsname
\fi
#2}}
\providecommand{\BIBdecl}{\relax}
\BIBdecl

\bibitem{Feng:2018:INFOCOM}
S.~Feng and J.~Yang, ``Optimal status updating for an energy harvesting sensor
  with a noisy channel,'' in \emph{IEEE INFOCOM - Workshop on Age of
  Information}, Apr. 2018.

\bibitem{Feng:2018:ISIT}
------, ``Minimizing age of information for an energy harvesting source with
  updating failures,'' in \emph{IEEE International Symposium on Information
  Theory (ISIT)}, Jun. 2018.

\bibitem{infocom/KaulYG12}
S.~K. Kaul, R.~D. Yates, and M.~Gruteser, ``Real-time status: How often should
  one update?'' in \emph{{IEEE} {INFOCOM}}, Orlando, FL, USA, Mar. 2012, pp.
  2731--2735.

\bibitem{ciss/KaulYG12}
------, ``Status updates through queues,'' in \emph{Conference on Information
  Sciences and Systems (CISS)}, Princeton, NJ, USA, Mar. 2012, pp. 1--6.

\bibitem{isit/YatesK12}
R.~D. Yates and S.~K. Kaul, ``Real-time status updating: Multiple sources,'' in
  \emph{{IEEE} International Symposium on Information Theory (ISIT)},
  Cambridge, MA, USA, Jul. 2012, pp. 2666--2670.

\bibitem{YatesK16}
\BIBentryALTinterwordspacing
------, ``The age of information: Real-time status updating by multiple
  sources,'' \emph{ArXiv e-prints}, 2016. [Online]. Available:
  \url{http://arxiv.org/abs/1608.08622}
\BIBentrySTDinterwordspacing

\bibitem{Pappas:2015:ICC}
N.~Pappas, J.~Gunnarsson, L.~Kratz, M.~Kountouris, and V.~Angelakis, ``Age of
  information of multiple sources with queue management,'' in \emph{IEEE
  International Conference on Communications (ICC)}, Jun. 2015, pp. 5935--5940.

\bibitem{isit/NajmN16}
E.~Najm and R.~Nasser, ``Age of information: The gamma awakening,'' in
  \emph{{IEEE} International Symposium on Information Theory (ISIT)},
  Barcelona, Spain, Jul. 2016, pp. 2574--2578.

\bibitem{isit/KamKNWE16}
C.~Kam, S.~Kompella, G.~D. Nguyen, J.~E. Wieselthier, and A.~Ephremides, ``Age
  of information with a packet deadline,'' in \emph{{IEEE} International
  Symposium on Information Theory (ISIT)}, Barcelona, Spain, Jul. 2016, pp.
  2564--2568.

\bibitem{isit/ChenH16}
K.~Chen and L.~Huang, ``Age-of-information in the presence of error,'' in
  \emph{{IEEE} International Symposium on Information Theory (ISIT)},
  Barcelona, Spain, Jul. 2016, pp. 2579--2583.

\bibitem{isit/KamKE13}
C.~Kam, S.~Kompella, and A.~Ephremides, ``Age of information under random
  updates,'' in \emph{{IEEE} International Symposium on Information Theory
  (ISIT)}, Istanbul, Turkey, Jul. 2013, pp. 66--70.

\bibitem{isit/KamKE14}
------, ``Effect of message transmission diversity on status age,'' in
  \emph{{IEEE} International Symposium on Information Theory (ISIT)}, Honolulu,
  HI, USA, Jun. 2014, pp. 2411--2415.

\bibitem{tit/KamKNE16}
C.~Kam, S.~Kompella, G.~D. Nguyen, and A.~Ephremides, ``Effect of message
  transmission path diversity on status age,'' vol.~62, no.~3, pp. 1360--1374,
  Mar. 2016.

\bibitem{isit/CostaCE14}
M.~Costa, M.~Codreanu, and A.~Ephremides, ``Age of information with packet
  management,'' in \emph{{IEEE} International Symposium on Information Theory
  (ISIT)}, Honolulu, HI, USA, Jun. 2014, pp. 1583--1587.

\bibitem{tit/CostaCE16}
------, ``On the age of information in status update systems with packet
  management,'' vol.~62, no.~4, pp. 1897--1910, Apr. 2016.

\bibitem{isit/HuangM15}
L.~Huang and E.~Modiano, ``Optimizing age-of-information in a multi-class
  queueing system,'' in \emph{{IEEE} International Symposium on Information
  Theory (ISIT)}, Hong Kong, China, Jun. 2015, pp. 1681--1685.

\bibitem{Yates:2018:AoISHS}
R.~D. {Yates}, ``{The Age of Information in Networks: Moments, Distributions,
  and Sampling},'' \emph{ArXiv e-prints}, Jun. 2018.

\bibitem{isit/BedewySS16}
A.~M. Bedewy, Y.~Sun, and N.~B. Shroff, ``Optimizing data freshness,
  throughput, and delay in multi-server information-update systems,'' in
  \emph{{IEEE} International Symposium on Information Theory (ISIT)},
  Barcelona, Spain, Jul. 2016, pp. 2569--2573.

\bibitem{infocom/SunUYKS16}
Y.~Sun, E.~Uysal{-}Biyikoglu, R.~D. Yates, C.~E. Koksal, and N.~B. Shroff,
  ``Update or wait: How to keep your data fresh,'' in \emph{{IEEE INFOCOM}},
  San Francisco, CA, USA, Apr. 2016, pp. 1--9.

\bibitem{Sun:ISIT:2017}
Y.~Sun, Y.~Polyanskiy, and E.~Uysal-Biyikoglu, ``Remote estimation of the
  {Wiener} process over a channel with random delay,'' in \emph{IEEE
  International Symposium on Information Theory (ISIT)}, Jun. 2017, pp.
  321--325.

\bibitem{Baras:AoI}
T.~{Soleymani}, J.~S. {Baras}, and K.~H. {Johansson}, ``{Stochastic Control
  with Stale Information--Part I: Fully Observable Systems},'' \emph{arXiv
  e-prints}, p. arXiv:1810.10983, Oct 2018.

\bibitem{ZhangJia:AoI:ISIT2018}
J.~Zhang and C.-C. Wang, ``On the rate-cost of gaussian linear control systems
  with random communication delays,'' \emph{2018 IEEE International Symposium
  on Information Theory (ISIT)}, pp. 2441--2445, 2018.

\bibitem{Jaya:AoI}
J.~P. {Champati}, M.~H. {Mamduhi}, K.~H. {Johansson}, and J.~{Gross},
  ``{Performance Characterization Using AoI in a Single-loop Networked Control
  System},'' \emph{arXiv e-prints}, p. arXiv:1901.06694, Jan 2019.

\bibitem{Aritra:AoI:LTI}
A.~{Mitra}, J.~A. {Richards}, S.~{Bagchi}, and S.~{Sundaram}, ``{Finite-Time
  Distributed State Estimation over Time-Varying Graphs: Exploiting the
  Age-of-Information},'' \emph{arXiv e-prints}, p. arXiv:1810.06151, Oct 2018.

\bibitem{isit/Yates15}
R.~D. Yates, ``Lazy is timely: Status updates by an energy harvesting source,''
  in \emph{{IEEE} International Symposium on Information Theory (ISIT)}, Hong
  Kong, China, Jun. 2015, pp. 3008--3012.

\bibitem{ita/BacinogluCU15}
B.~T. Bacinoglu, E.~T. Ceran, and E.~Uysal{-}Biyikoglu, ``Age of information
  under energy replenishment constraints,'' in \emph{Information Theory and
  Applications Workshop}, San Diego, CA, USA, Feb. 2015, pp. 25--31.

\bibitem{Yang:2017:AoI}
X.~Wu, J.~Yang, and J.~Wu, ``Optimal status update for age of information
  minimization with an energy harvesting source,'' \emph{IEEE Transactions on
  Green Communications and Networking}, vol.~2, no.~1, pp. 193--204, March
  2018.

\bibitem{Arafa:AoI:finitebatt}
A.~{Arafa}, J.~{Yang}, S.~{Ulukus}, and H.~V. {Poor}, ``{Age-Minimal
  Transmission for Energy Harvesting Sensors with Finite Batteries: Online
  Policies},'' \emph{arXiv e-prints}, p. arXiv:1806.07271, Jun 2018.

\bibitem{BacinogluU17}
B.~T. Bacinoglu and E.~Uysal{-}Biyikoglu, ``Scheduling status updates to
  minimize age of information with an energy harvesting sensor,'' in \emph{IEEE
  International Symposium on Information Theory (ISIT)}, Jun. 2017.

\bibitem{Uysal:2018:EH}
B.~{Tan Bacinoglu}, Y.~{Sun}, E.~{Uysal-Biyikoglu}, and V.~{Mutlu}, ``Achieving
  the age-energy tradeoff with a finite-battery energy harvesting source,'' in
  \emph{IEEE International Symposium on Information Theory (ISIT)}, Jun. 2018.

\bibitem{Ahmed:2017:Asilomar}
A.~Arafa and S.~Ulukus, ``Age minimization in energy harvesting communications:
  Energy-controlled delays,'' in \emph{IEEE Asilomar}, Oct. 2017.

\bibitem{Ahmed:2017:Globecom}
------, ``Age-minimal transmission in energy harvesting two-hop networks,'' in
  \emph{IEEE Global Communications Conference}, Dec. 2017.

\bibitem{Baknina:2018:CISS}
A.~{Baknina} and S.~{Ulukus}, ``Coded status updates in an energy harvesting
  erasure channel,'' in \emph{Conference on Information Sciences and Systems
  (CISS)}, Mar. 2018.

\bibitem{Baknina:2018:ISIT}
A.~{Baknina}, O.~{Ozel}, J.~{Yang}, S.~{Ulukus}, and A.~{Yener}, ``Sending
  information through status updates,'' in \emph{IEEE International Symposium
  on Information Theory (ISIT)}, Jun. 2018.

\bibitem{Brown:2018:INFOCOM}
S.~Farazi, A.~Klein, and D.~Brown, ``Average age of information for status
  update systems with an energy harvesting server,'' in \emph{IEEE INFOCOM -
  Workshop on Age of Information}, Apr. 2018.

\bibitem{Yang_tcom}
J.~Yang and S.~Ulukus, ``Optimal packet scheduling in an energy harvesting
  communication system,'' vol.~60, no.~1, pp. 220--230, Jan. 2012.

\bibitem{ozel_finite_tcom}
O.~Ozel, J.~Yang, and S.~Ulukus, ``Optimal broadcast scheduling for an energy
  harvesting rechargeable transmitter with a finite capacity battery,''
  vol.~11, no.~6, pp. 2193--2203, Jun. 2012.

\bibitem{Huang0C13}
C.~Huang, R.~Zhang, and S.~Cui, ``Throughput maximization for the {G}aussian
  relay channel with energy harvesting constraints,'' vol.~31, no.~8, pp.
  1469--1479, 2013.

\bibitem{HoZ12}
C.~K. Ho and R.~Zhang, ``Optimal energy allocation for wireless communications
  with energy harvesting constraints,'' vol.~60, no.~9, pp. 4808--4818, 2012.

\bibitem{kaya_tcom}
K.~Tutuncuoglu and A.~Yener, ``Optimum transmission policies for battery
  limited energy harvesting nodes,'' vol.~11, no.~3, pp. 1180--1189, Mar. 2012.

\bibitem{Ozel0U13}
O.~Ozel, J.~Yang, and S.~Ulukus, ``Optimal transmission schemes for parallel
  and fading {G}aussian broadcast channels with an energy harvesting
  rechargeable transmitter,'' \emph{Elsevier Computer Communications, special
  issue for selected papers from WiOpt 2011}, vol.~36, no.~12, pp. 1360--1372,
  Jul. 2013.

\bibitem{Yang_TWC_broadcast}
J.~Yang, O.~Ozel, and S.~Ulukus, ``Broadcasting with an energy harvesting
  rechargeable transmitter,'' \emph{IEEE Transactions on Wireless
  Commmunications}, vol.~11, no.~2, pp. 571--583, Feb. 2012.

\bibitem{uysal_paper}
M.~A. Antepli, E.~Uysal-Biyikoglu, and H.~Erkal, ``Optimal packet scheduling on
  an energy harvesting broadcast link,'' vol.~29, no.~8, pp. 1721--1731, Sep.
  2011.

\bibitem{Lai:quickest}
J.~Geng and L.~Lai, ``Non-bayesian quickest change detection with stochastic
  sample right constraints,'' vol.~61, no.~20, pp. 5090--5102, Oct. 2013.

\bibitem{Yang:jsac:2016}
J.~Yang, X.~Wu, and J.~Wu, ``Optimal online sensing scheduling for energy
  harvesting sensors with infinite and finite batteries,'' vol.~34, no.~5, pp.
  1578--1589, May 2016.

\bibitem{Yang:jsac:2015}
------, ``Optimal scheduling of collaborative sensing in energy harvesting
  sensor networks,'' vol.~33, no.~3, pp. 512--523, Mar. 2015.

\bibitem{Brown:2018:ISIT}
S.~Farazi, A.~Klein, and D.~Brown, ``Age of information in energy harvesting
  status update systems: When to preempt in service?'' in \emph{IEEE
  International Symposium on Information Theory (ISIT)}, Jun. 2018.

\bibitem{ross:1996}
S.~Ross, \emph{Stochastic Processes}, ser. Wiley series in probability and
  statistics: Probability and statistics.\hskip 1em plus 0.5em minus
  0.4em\relax Wiley, 1996.

\bibitem{Royden:Math}
H.~L. {Royden} and P.~M. {Fitzpatrick}, \emph{Real Analysis}.\hskip 1em plus
  0.5em minus 0.4em\relax Prentice Hall, 2010.

\end{thebibliography}

\end{document}